\newcommand{\eps}{\varepsilon}
\newcommand{\Hil}{\mathcal{H}}
\newcommand{\Po}{\mathcal{P}}
\newcommand{\Su}{\mathcal{S}}
\newtheorem{thm}{Theorem}
\newtheorem{lemma}[thm]{Lemma}
\newtheorem*{lemma*}{Lemma}
\newtheorem{cor}[thm]{Corollary}
\newtheorem{definition}[thm]{Definition}
\DeclareMathOperator{\Herm}{Herm}
\DeclareMathOperator{\rank}{rank}
\DeclareMathOperator{\tr}{tr}
\DeclareMathOperator{\supp}{supp}
\DeclareMathOperator{\id}{\mathbb{I}}
\newtheoremstyle{named}{}{}{\itshape}{}{\bfseries}{.}{.5em}{\thmnote{#3}}
\theoremstyle{named}
\newtheorem*{namedlemma}{Lemma}
\begin{document}
%
\title{Chain Rules for Smooth Min- and Max-Entropies}
%
%
%

\author{Alexander~Vitanov,
				Fr\'ed\'eric~Dupuis,
        Marco~Tomamichel,
        and~Renato~Renner
\thanks{A.~Vitanov is with the Department of Mathematics, ETH Zurich, R\"amistrasse 101, 8092 Z\"urich. This work was produced while F.~Dupuis was with the Institute for Theoretical Physics, ETH Zurich, 8093 Z\"urich, Switzerland. Since January 2012 he is with the Department of Computer Science at University of Aarhus, {\AA}bogade 34, 8200 Aarhus N, Denmark. R.~Renner is with the Institute for Theoretical Physics, ETH Zurich, 8093 Z\"urich, Switzerland. M.~Tomamichel is with the Center for Quantum Technologies, National University of Singapore, 3 Science Drive 2, Singapore 117543. (e-mail: alexander.vitanov@math.ethz.ch; dupuis@cs.au.dk; cqtmarco@nus.edu.sg; renner@phys.ethz.ch)

Copyright (c) 2012 IEEE. Personal use of this material is permitted.  However, permission to use this material for any other purposes must be obtained from the IEEE by sending a request to pubs-permissions@ieee.org.}}

\maketitle

\begin{abstract}
  The chain rule for the Shannon and von Neumann entropy, which
  relates the total entropy of a system to the entropies of its parts,
  is of central importance to information theory. Here we consider the
  chain rule for the more general smooth min- and max-entropy, used
  in one-shot information theory. For these entropy measures, the
  chain rule no longer holds as an equality. However,
  the standard chain rule for the von Neumann entropy is retrieved asymptotically 
  when evaluating them for many identical and independently distributed states.

\end{abstract}   
\IEEEpeerreviewmaketitle

\section{Introduction}
%
%
%
%

\IEEEPARstart{I}{n} classical and quantum information theory, entropy
measures are often used to characterize fundamental information
processing tasks. For example, in his groundbreaking work on
information and communication theory~\cite{shannon48}, Shannon showed
that entropies can be used to quantify the memory needed to store the
(compressed) output of an information source or the capacity of a
communication channel.  It follows immediately from the basic
properties of the Shannon entropy that the equality
  \begin{align*}
    H(AB) = H(A|B) + H(B) \ ,
  \end{align*}
  which we call the \emph{chain rule}, must hold. Here, $H(B)$ denotes
  the entropy of the random variable $B$ and $H(A|B)$ is the entropy
  of the random variable $A$ averaged over \emph{side information} in
  $B$. The chain rule therefore asserts that the entropy of two
  (possibly correlated) random variables, $A$ and $B$, can be
  decomposed into the entropy of $B$ alone plus the entropy of $A$
  conditioned on knowing $B$. More generally, one may average over
  additional side information, $C$, in which case the chain rule takes
  the more general form
  \begin{align} \label{eq_chain}
   H(AB|C) =  H(A|BC) + H(B|C) \ .
  \end{align} 
  The chain rule forms an integral part of the entropy calculus.  The
  other basic ingredient is strong sub-additivity, which can be
  written as $H(A|BC) \leq H(A|C)$, i.e.\ additional side information
  can only decrease the entropy.

  The quantum generalization of Shannon's entropy, the \emph{von
    Neumann entropy}, inherits these fundamental properties. For a
  quantum state\footnote{Formal definitions follow in
    Section~\ref{sec:math-preliminaries}.} $\rho_A$ on $A$, the von
  Neumann entropy is defined as $H(A)_{\rho} := - \tr ( \rho_A \log
  \rho_A)$, where $\tr$ denotes the trace and $\log$ is taken in base
  2 throughout this paper. The conditional von Neumann entropy with
  classical side information can again be defined by an average,
  however, this intuitive definition fails if the side information is
  quantum.  Pointing to its fundamental importance, the conditional
  von Neumann entropy is thus defined by the chain rule itself, i.e.\
  $H(A|B)_{\rho} := H(AB)_{\rho} - H(B)_{\rho}$. In addition to the
  chain rule and strong sub-additivity, it also satisfies a duality
  relation: For any pure tripartite state $\rho_{ABC}$, we have
  $H(A|B)_{\rho} = -H(A|C)_{\rho}$.

  Shannon and von Neumann entropies have been successfully employed to
  characterize an enormous variety of information theoretic tasks,
  many of which are of high practical relevance (examples include the
  aforementioned tasks of data compression or channel
  coding). However, a basic assumption usually made in this context is
  that the underlying random processes (e.g., those relevant for the
  generation of data, or the occurrence of noise in a communication
  channel) are modeled asymptotically by an arbitrarily long sequence
  of random variables that are \emph{independent and identically
    distributed (i.i.d.).}  In the absence of this assumption (e.g.,
  if a channel is only invoked a small number of times or if its noise
  model is not i.i.d.), the use of the von Neumann entropy is
  generally no longer justified. The formalism of smooth min- and
  max-entropy, introduced in~\cite{RW04, Ren05, RK05} and further
  developed in~\cite{KRS09,TCR09, TCR10,Dat09}, overcomes this
  limitation and enables the analysis of general situations beyond the
  i.i.d.\ scenario. This level of generality turned out to be crucial
  in various areas, e.g., in physics (where entropies are employed for
  the analysis of problems in thermodynamics~\cite{RARDV11}) or in
  cryptography (where entropies are used to quantify an adversary's
  uncertainty).

  Smooth min- and max-entropy, denoted $H_{\min}^{\eps}$ and
  $H_{\max}^{\eps}$, respectively, depend on a positive real value
  $\eps$, called \emph{smoothing parameter} $\eps$ (see
  Section~\ref{sec:math-preliminaries} for formal definitions). When
  the entropies are used to characterize operational tasks, the
  smoothing parameter determines the desired accuracy. For example,
  the smooth min-entropy, $H_{\min}^{\eps}(A|B)$, characterizes the
  number of fully mixed qubits, independent (i.e.\ \emph{decoupled})
  from side information $B$, that can be extracted from a quantum
  source $A$~\cite{dupuis09,dupuis10}. Furthermore, the smooth
  max-entropy, $H_{\max}^{\eps}(A|B)$, characterizes the amount of
  entanglement needed between two parties, $A$ and $B$, to merge a
  state $\rho_{AB}$, where $\rho_A$ is initially held by $A$, to
  $B$~\cite{berta08,dupuis10}. In both cases, the smoothing parameter
  $\eps$ corresponds to the maximum distance between the desired final
  state and the one that can be achieved.

  Smooth entropy can be seen as strict generalization of Shannon or
  von Neumann entropy. In particular, the latter can be recovered by
  evaluating the smooth min- or max-entropy for i.i.d.\
  states~\cite{Ren05,TCR09}. Accordingly, smooth entropy inherits many
  of the basic features of von Neumann entropy, such as strong
  sub-additivity. In light of this, it should not come as a surprise
  that smooth entropy also obeys inequalities that generalize the
  chain rule~\eqref{eq_chain}. Deriving these is the main aim of this
  work.

  Specifically, one can obtain four pairs of generalized chain
  inequalities. For any small smoothing parameters $\eps', \eps'',\eps''' 
  \geq 0$ and
  $\eps > \eps' + 2\eps''$, we have
\begin{align*}
H_{\min}^{\eps}(AB|C)_{\rho} &\geq H_{\min}^{\eps''}(A|BC)_{\rho}+H_{\min}^{\eps'}(B|C)_{\rho} -f \,, \\
H_{\max}^{\eps}(AB|C)_{\rho} &\leq H_{\max}^{\eps'}(A|BC)_{\rho}+H_{\max}^{\eps''}(B|C)_{\rho} +f \,,
\end{align*}
\vspace{-0.9cm}

\begin{align*}
H_{\min}^{\eps'}(AB|C)_{\rho} &\leq H_{\min}^{\eps}(A|BC)_{\rho}+H_{\max}^{\eps''}(B|C)_{\rho}
+2f  \,,\\
H_{\max}^{\eps'}(AB|C)_{\rho} &\geq H_{\min}^{\eps''}(A|BC)_{\rho}+H_{\max}^{\eps}(B|C)_{\rho}
-2f \,,
\end{align*}
\vspace{-0.9cm}

\begin{align*}
H_{\min}^{\eps'}(AB|C)_{\rho} &\leq H_{\max}^{\eps''}(A|BC)_{\rho}+ H_{\min}^{\eps}(B|C)_{\rho}
+3f  \,,\\
H_{\max}^{\eps'}(AB|C)_{\rho} &\geq H_{\max}^{\eps}(A|BC)_{\rho}+ H_{\min}^{\eps''}(B|C)_{\rho}
-3f  \,,
\end{align*}
\vspace{-0.9cm}

\begin{align*}
H_{\min}^{\eps'}(AB|C)_{\rho} &\leq
H_{\max}^{\eps'''}(A|BC)_{\rho}+H_{\max}^{\eps''}(B|C)_{\rho}+g  \,,\\
H_{\max}^{\eps'}(AB|C)_{\rho}&\geq
H_{\min}^{\eps''}(A|BC)_{\rho}+H_{\min}^{\eps'''}(B|C)_{\rho}-g  \,,
\end{align*}
%
where $f$ does not grow more than of the order $\log 1/e$ when $e = \eps - \eps' - 2\eps''$ is small, 
and $g$ is smaller than $6$ for $\eps' + 2\eps'' + \eps''' < 1/5$.
We note that, in typical applications, we would choose the smoothing
parameters so that the correction terms $f$ and $g$ are small compared
to the typical values of the smooth entropies.

The fact that generalized chain inequalities hold for smooth min- and
max-entropy is not only important for establishing a complete entropy
calculus, analogous to that for the von Neumann entropy. They are also
crucial for applications, as the following example shows.

In quantum key distribution, after the quantum signals have been exchanged and measured, two honest parties, Alice and Bob, are left with two correlated raw keys, about which a potential eavesdropper is guaranteed to have only limited information. This limit on the eavesdropper's knowledge is best expressed~\cite{Ren05} by a bound on the smooth min-entropy of Alice's raw key, $X_A$, conditioned on the eavesdropper's quantum information, $E$, i.e., $H_{\min}^{\eps'}(X_A|E)$. However, to ensure that Bob's final key agrees with her own, Alice will have to send a syndrome, $S = s(X_A)$, over an insecure channel. A fundamental question in quantum key distribution is thus to bound
$H_{\min}^{\eps}(X_A|E S)$, i.e., the smooth min-entropy of $X_A$ conditioned on the eavesdropper's information after learning $S$. The third chain rule above states that
\begin{align*}
  H_{\min}^{\eps}(X_A|E S) &\geq H_{\min}^{\eps'}(X_A S | E) - H_{\max}^{\eps''}(S|E) - 2f \\
  &= H_{\min}^{\eps'}(X_A | E) - H_{\max}^{\eps''}(S|E) - 2f .
\end{align*}
Here, we used that $S = s(X_A)$ and thus $X_A \to X_A S$ is an isometry under which the smooth entropies are invariant~\cite{TCR10}. Roughly speaking, our chain rule thus implies that the eavesdropper gains at most $H_{\max}^{\eps''}(S|E)$ bits of information about $X_A$, where we assumed that $f$ is negligible. This is strictly tighter than previous results (see, e.g.,~\cite{WTHR11}), where the gain was bounded by $\log |S| \geq H_{\max}^{\eps''}(S|E)$, where $|S|$ is the number of different syndromes that can be stored in $S$.
This leads to strictly tighter bounds, for instance, when $S$ contains information that has been communicated previously over the public channel and is therefore already included in $E$.

Until now, only special cases of
the above inequalities have been known, except for the first pair, which
has been derived in~\cite{dupuis10}. In the present paper we provide
proofs for the remaining relations.  In fact, since smooth min- and
max-entropy obey a duality relation similar to that of von Neumann
entropy~\cite{TCR10}, $H_{\min}^{\eps}(A|B) = - H_{\max}^{\eps}(A|C)$, 
the paired inequalities above imply each
other. It will therefore suffice to prove only one inequality of each
pair.

The paper is organized as follows. In the next section we introduce
the notation, terminology, and basic definitions. In particular, we
define the (smooth) min- and max-entropy measures and outline some of
their basic features. In Section~\ref{sec:hmax-expressions} we derive
alternative expressions for the max-entropy based on semidefinite
programming duality. While these expressions may be of independent
interest, they will be used in Section~\ref{sec:main-results}, which
is devoted to the statement and proofs of the generalized chain rules.

\section{Mathematical Preliminaries}
\label{sec:math-preliminaries}
\subsection{Notation and basic definitions}
Throughout this paper we focus on finite dimensional Hilbert spaces. Hilbert spaces corresponding to different physical systems are distinguished by different capital Latin letters as subscript $\Hil_A, \Hil_B$ etc. The tensor product of $\Hil_A$ and $\Hil_B$ is designated in short by $\Hil_{AB}=\Hil_A\otimes\Hil_B$.

The set of linear operators from $\Hil_A$ to $\Hil_B$ is denoted by $\mathcal{L}(\Hil_A, \Hil_B)$. The space of linear operators acting on the Hilbert space $\Hil$ is denoted by $\mathcal{L}(\Hil)$ and the subset of $\mathcal{L}(\Hil)$ containing the Hermitian operators on $\Hil$ is denoted by $\Herm(\Hil)$. Note that $\Herm(\Hil)$ endowed with the Hilbert-Schmidt inner product $\left\langle X,Y\right\rangle := \tr(X^{\dagger}Y)$, $X, Y\in\Herm(\Hil)$, is a Hilbert space. Given an operator $R\in\Herm(\Hil)$, we write $R\geq0$ if and only if $R$ is positive semi-definite and $R>0$ if and only if it is positive definite. Furthermore, let $\Su_{\leq}(\Hil)$ and $\Su_{=}(\Hil)$ denote the sets of sub-normalized and normalized positive semi-definite \emph{density operators} with $\tr\rho\leq1$ and $\tr\rho=1$, respectively.

Inequalities between Hermitian operators are defined in the following sense: Let $R,S\in\Herm(\Hil)$, then we write $R\geq S$, respectively $R>S$ if and only if $R-S$ is positive semi-definite, respectively positive definite.

Given an operator $R$, the operator norm of $R$ is denoted by $\|R\|_{\infty}$ and is equal to the highest singular value of $R$. The trace norm of $R$ is given by $\|R\|_{1}:=\tr[\sqrt{R^{\dagger}R}]$. The fidelity between two states $\rho$, $\sigma\in\Su_{\leq}(\Hil)$ is defined as $F(\rho, \sigma):=\|\sqrt{\rho}\sqrt{\sigma}\|_{1}$.



For multipartite operators on product spaces $\Hil_{AB}$ we will use subscripts to denote the space on which they act (e.g. $S_{AB}$ for an operator on $\Hil_{AB}$). Given a multipartite operator $S_{AB}\in\mathcal{L}(\Hil_{AB})$, the corresponding reduced operator on $\Hil_A$ is defined by $S_A:=\tr_{B}[S_{AB}]$ where $\tr_B$ denotes the partial trace operator on the subsystem $\Hil_B$. Given a multipartite operator $S_{AB}$ and the corresponding marginal operator $S_A$, we call $S_{AB}$ an \emph{extension} of $S_A$. We omit identities from expressions which involve multipartite operators whenever mathematically meaningful expressions can be obtained by tensoring the corresponding identities to the operators.

\subsection{Smooth Min- and Max-Entropies}
In the following we successively give the definitions of the non-smooth min- and max-entropies and their smooth versions \cite{Ren05}, \cite{KRS09}.

\begin{definition}
\label{minentropy}
Let $\rho_{AB}\in\Su_{\leq}(\Hil_{AB})$, then the min-entropy of $A$ conditioned on $B$ of $\rho_{AB}$ is defined as
\begin{align}
H_{\min}(A|B)_{\rho} &:=\max_{\sigma_{B}\in\Su_{\leq}(\Hil_B)}H_{\min}(A|B)_{\rho|\sigma}, \quad \textrm{where} \nonumber\\
H_{\min}(A|B)_{\rho|\sigma} &:=\sup\bigl\{\lambda\in\mathbb{R}: \rho_{AB}\leq2^{-\lambda}\id_A\otimes\sigma_{B}\bigr\}.
\label{eqminentropy}
\end{align}
\end{definition}
Note that $H_{\min}(A|B)_{\rho|\sigma}$ is finite if and only if $\supp(\rho_{B})\subseteq\supp(\sigma_B)$ and divergent otherwise.  

\begin{definition}
\label{maxentropy}
Let $\rho_{AB}\in\Su_{\leq}(\Hil_{AB})$, then the max-entropy of A conditioned on B of $\rho_{AB}$ is defined as
\begin{align}
H_{\max}(A|B)_{\rho} &:=\max_{\sigma_B\in\Su_{\leq}(\Hil_B)}H_{\max}(A|B)_{\rho|\sigma}, \quad \textrm{where}\nonumber\\
\label{eqmaxentropy}
H_{\max}(A|B)_{\rho|\sigma} &:=\log F(\rho_{AB}, \id_A\otimes\sigma_B)^{2}.
\end{align}
\end{definition}
The maximum in \eqref{eqminentropy} and \eqref{eqmaxentropy} is achieved at $\Su_{=}(\Hil_{B})$. The $\eps$-smooth min- and max-entropies of a state $\rho$ can be understood as an optimization of the corresponding non-smooth quantities over a set of states $\eps$-close to $\rho$. We use the \emph{purified distance} to quantify the $\eps$-closeness of states.
\begin{definition}
Let $\rho$, $\sigma\in\Su_{\leq}(\Hil)$. Then the purified distance between $\rho$ and $\sigma$ is defined by
\begin{equation}
P(\rho, \sigma):=\sqrt{1-\bar{F}(\rho, \sigma)^{2}}, \quad \textrm{where}
\end{equation}
\begin{equation}
\bar{F}(\rho, \sigma):= F(\rho, \sigma)+\sqrt{\bigl(1-\tr\rho\bigr)\bigl(1-\tr\sigma\bigr)}
\end{equation}
is the generalized fidelity.
\end{definition}

Hereafter, when two states $\rho, \sigma\in\Su_{\leq}(\Hil)$ are
said to be $\eps$-close we mean $P(\rho, \sigma)\leq\eps$ and denote
this by $\rho\approx_{\eps}\sigma$. Some of the basic properties of
the purified distance are reviewed in Appendix
\ref{sec:appendix-tech-lemmas}, but for a more comprehensive treatment
we refer to \cite{TCR10}. With that convention we are ready to
introduce a smoothed version of the min- and max-entropies
\cite{Ren05}.
\begin{definition}
\label{smoothmin}
Let $\eps\geq0$, $\rho_{AB}\in\Su_{\leq}(\Hil_{AB})$. Then the $\eps$-smooth min-entropy of A conditioned on B of $\rho_{AB}$ is defined as
\begin{equation}
H_{\min}^{\eps}(A|B)_{\rho}:=\max_{\tilde{\rho}} H_{\min}(A|B)_{\tilde{\rho}}
\end{equation}
and the $\eps$-smooth max-entropy of A conditioned on B of $\rho_{AB}$ is defined as
\begin{equation}
H_{\max}^{\eps}(A|B)_{\rho}:=\min_{\tilde{\rho}} H_{\max}(A|B)_{\tilde{\rho}}
\end{equation}
where the maximum and the minimum range over all sub-normalized states $\tilde{\rho}_{AB}\approx_{\eps}\rho_{AB}$.
\end{definition}
The smooth entropies are dual to each other in the following sense.
When $\rho_{ABC}\in\Su_{\leq}(\Hil_{ABC})$ is pure, we have~\cite{TCR10} 
\begin{equation}
\label{smoothdual}
H_{\max}^{\eps}(A|B)_{\rho}=-H_{\min}^{\eps}(A|C)_{\rho}.
\end{equation}  

Finally, the smooth min-entropy is upper-bounded by the smooth
max-entropy as shown by the following lemma whose proof is deferred to
Appendix \ref{sec:various-proofs}:
\begin{lemma}
\label{updown}
Let $\eps$, $\eps^{\prime}\geq 0$ and let $\rho_{AB}\in\Su_{\leq}(\Hil_{AB})$ be such that $\eps+\eps^{\prime}+2\sqrt{1-\tr\rho_{AB}}<1$. Then,
\begin{equation}
\label{minmax}
\begin{split}
H_{\min}^{\eps^{\prime}}(A|B)_{\rho}&\leq H_{\max}^{\eps}(A|B)_{\rho}\\
&+\log\left(\frac{1}{1-(\eps+\eps^{\prime}+2\sqrt{1-\tr\rho})^{2}}\right).
\end{split}
\end{equation}
\end{lemma}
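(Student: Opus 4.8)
The plan is to reduce the smooth statement to a non-smooth comparison between two nearby states, and then to establish that comparison directly from the defining operator inequality for the min-entropy.

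First I would use that the extrema in Definition~\ref{smoothmin} are attained (the purified-distance balls are compact). Pick $\tilde\rho_{AB}\approx_{\eps'}\rho_{AB}$ with $H_{\min}^{\eps'}(A|B)_\rho=H_{\min}(A|B)_{\tilde\rho}$, and write $H_{\max}^{\eps}(A|B)_\rho=\min_{\rho'\approx_\eps\rho}H_{\max}(A|B)_{\rho'}$. By the triangle inequality for the purified distance, every competitor satisfies $P(\tilde\rho,\rho')\le\eps+\eps'$. Thus it suffices to prove, for an arbitrary $\rho'\approx_\eps\rho$, the non-smooth bound $H_{\min}(A|B)_{\tilde\rho}\le H_{\max}(A|B)_{\rho'}+\log\frac{1}{1-\delta^2}$ with $\delta=\eps+\eps'+2\sqrt{1-\tr\rho}$; minimizing over $\rho'$ then yields the claim.

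For the non-smooth bound I would start from the optimizer $\sigma_B\in\Su_{=}(\Hil_B)$ in Definition~\ref{minentropy} for $\tilde\rho$, so that $\tilde\rho\le 2^{-h}\,\id_A\otimes\sigma_B$ with $h:=H_{\min}(A|B)_{\tilde\rho}$. Conjugating $\id_A\otimes\sigma_B\ge 2^{h}\tilde\rho$ by $\tilde\rho^{1/2}$ gives $\tilde\rho^{1/2}(\id_A\otimes\sigma_B)\tilde\rho^{1/2}\ge 2^{h}\tilde\rho^{2}$, and operator monotonicity of the square root yields $\sqrt{\tilde\rho^{1/2}(\id_A\otimes\sigma_B)\tilde\rho^{1/2}}\ge 2^{h/2}\tilde\rho$. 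Taking the trace, $F(\tilde\rho,\id_A\otimes\sigma_B)\ge 2^{h/2}\tr\tilde\rho$, i.e.\ $H_{\max}(A|B)_{\tilde\rho}\ge h+2\log\tr\tilde\rho$.

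The main obstacle is the remaining step: transferring this fidelity estimate from $\tilde\rho$ to the nearby state $\rho'$ while staying dimension-independent, so as to reach $F(\rho',\id_A\otimes\sigma')^{2}\ge(1-\delta^{2})2^{h}$ for a suitable $\sigma'$ and hence $H_{\max}(A|B)_{\rho'}\ge h+\log(1-\delta^{2})$. The difficulty is that $\id_A\otimes\sigma_B$ is not normalized ($\tr(\id_A\otimes\sigma_B)=\dim\Hil_A$), so a naive perturbation of $F(\cdot,\id_A\otimes\sigma_B)$ in its first argument loses a factor growing with $\dim\Hil_A$. I would circumvent this by passing to the normalized reference $\pi_A\otimes\sigma_B$ with $\pi_A=\id_A/\dim\Hil_A$, so that the additive $\log\dim\Hil_A$ appears on both the $H_{\min}$ and $H_{\max}$ sides and cancels; by exploiting $\|\id_A\otimes\sigma_B\|_\infty\le 1$; and by retaining the max-entropy's own optimization over $\sigma'$, so that a $\rho'$-adapted reference may be used via joint concavity of the fidelity (or the variational characterization of $H_{\max}$). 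The sub-normalization is then tracked through the generalized fidelity: the purified-distance bound $\bar F(\tilde\rho,\rho')\ge\sqrt{1-(\eps+\eps')^{2}}$ is converted into a bound on $F$ by subtracting $\sqrt{(1-\tr\tilde\rho)(1-\tr\rho')}$, and since $\tr\tilde\rho$ and $\tr\rho'$ differ from $\tr\rho$ only by the smoothing, this is exactly where the correction $2\sqrt{1-\tr\rho}$ enters and combines with $\eps+\eps'$ into $\delta$. Finally I would collect the constants, rearrange $H_{\max}(A|B)_{\rho'}\ge h+\log(1-\delta^{2})$ into $H_{\min}(A|B)_{\tilde\rho}\le H_{\max}(A|B)_{\rho'}+\log\frac{1}{1-\delta^{2}}$, note that the hypothesis $\delta<1$ keeps the logarithm finite, and take the minimum over $\rho'$ to conclude.
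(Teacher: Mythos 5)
Your reduction to a non-smooth comparison and your first step are correct, but the proof has a genuine gap precisely at what you call "the main obstacle": the transfer from $\tilde\rho$ to $\rho'$ is never actually carried out, and the mechanism you propose for it cannot produce the stated constant. The natural completion of your plan is to use monotonicity of $F(\rho',\cdot)$ under the operator order: from $\id_A\otimes\sigma_B\geq 2^{h}\tilde\rho_{AB}$ one gets $F(\rho',\id_A\otimes\sigma_B)\geq 2^{h/2}F(\rho',\tilde\rho)$, so everything hinges on lower-bounding the \emph{fidelity} $F(\rho',\tilde\rho)$, not the generalized fidelity. The triangle inequality only controls $\bar{F}(\rho',\tilde\rho)\geq\sqrt{1-(\eps+\eps')^{2}}$, and your proposed conversion $F = \bar{F}-\sqrt{(1-\tr\tilde\rho)(1-\tr\rho')}$ leaves a loss that does not recombine into $\delta=\eps+\eps'+2\sqrt{1-\tr\rho}$. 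Concretely, take $\rho$ normalized, so the lemma claims the correction $\log\frac{1}{1-(\eps+\eps')^{2}}$; the optimizers may still be subnormalized (e.g.\ $\rho'=(1-\eps^{2})\rho$ satisfies $P(\rho',\rho)=\eps$ for pure $\rho$), the best available trace bounds are $1-\tr\rho'\leq\eps^{2}$ and $1-\tr\tilde\rho\leq\eps'^{2}$, and your route yields only $F(\rho',\tilde\rho)\geq\sqrt{1-(\eps+\eps')^{2}}-\eps\eps'$, i.e.\ the strictly weaker correction $-2\log\bigl(\sqrt{1-(\eps+\eps')^{2}}-\eps\eps'\bigr)$ whenever $\eps\eps'>0$. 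The term $2\sqrt{1-\tr\rho}$ in the lemma does not originate from this subtraction at all, so "collecting the constants" as described cannot close the argument.

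For comparison, the paper handles this differently: it first replaces $\rho$ by $\hat\rho=\rho/\tr\rho$ at the cost of inflating each smoothing parameter by $\sqrt{1-\tr\rho}$ (since $P(\rho,\hat\rho)=\sqrt{1-\tr\rho}$; this is the actual origin of the $2\sqrt{1-\tr\rho}$), and then invokes Lemma 5.2 of \cite{M12} to embed $\Hil_A,\Hil_B$ into larger spaces on which the smooth max-entropy of $\hat\rho$ admits a \emph{normalized} optimizer $\bar\rho$. The core inequality is then between the smooth min-entropy and the non-smooth max-entropy of the \emph{same} normalized state: the paper writes $2^{-H_{\min}^{\tilde\eps+\tilde\eps'}(A'|B')_{\bar\rho}}$ as an SDP, using Uhlmann's theorem to encode the smoothing ball as the overlap constraint $\tr[\tilde\rho_{A'B'C}\bar\rho_{A'B'C}]\geq1-(\tilde\eps+\tilde\eps')^{2}$, exhibits a dual feasible point built from the optimal primal plan of the $H_{\max}$ SDP of Lemma~\ref{maxinflemma}, and concludes by weak duality that $(1-(\tilde\eps+\tilde\eps')^{2})\,2^{-H_{\max}(A'|B')_{\bar\rho}}\leq 2^{-H_{\min}^{\tilde\eps+\tilde\eps'}(A'|B')_{\bar\rho}}$. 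Normalization of $\bar\rho$ is what makes this tight: with one state normalized, $F=\bar{F}$ and no subtraction ever occurs. Incidentally, your operator-monotonicity observation could then replace the paper's SDP step — from $\tilde\rho\leq 2^{-h}\,\id_A\otimes\sigma_B$ and $P(\tilde\rho,\bar\rho)\leq\tilde\eps+\tilde\eps'$ one gets $F(\bar\rho,\id_A\otimes\sigma_B)^{2}\geq 2^{h}\bigl(1-(\tilde\eps+\tilde\eps')^{2}\bigr)$ directly — but without the normalization-plus-embedding ingredient your proposal does not reach the bound as stated.
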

\subsection{Semidefinite Programming}
This subsection is devoted to the duality theory of semi-definite programs (SDPs). We will present the subject as given in \cite{Bar} and especially in \cite{JW} but will restrict the discussion to the special case which is of interest in this work.\\ 
A semidefinite program over the Hilbert spaces $\Hil_A$ and $\Hil_B$ is a triple $(\mathcal{F}, R_A, S_B)$, $\mathcal{F}\in\mathcal{L}(\Herm(\Hil_A), \Herm(\Hil_B))$, $R_A\in\Herm(\Hil_A)$ and $S_B\in\Herm(\Hil_B)$, which is associated with the following two optimization problems:

\vspace{5mm}
\small
\begin{minipage}[b]{0.45\linewidth}\centering
\begin{tabular}{cc}
\hspace{-0.0cm}&\hspace{-1cm}\textsc{Primal Problem:}\\
&\\
\hspace{-0.1cm}minimize:&\hspace{-0.5cm}$\tr[R_AX_A]$\\
\hspace{-0.1cm}subject to:&\hspace{-0.2cm}$\mathcal{F}(X_A)\geq S_B$\\
\hspace{-0.1cm}&$X_A\geq0$
\end{tabular}
\end{minipage}
\hspace{-0.3cm}
\begin{minipage}[b]{0.45\linewidth}
\centering
\begin{tabular}{cc}
\hspace{-0.0cm}&\hspace{-1cm}\textsc{Dual problem}:\\
&\\
\hspace{-0.0cm}maximize:&\hspace{-0.5cm}$\tr[S_BY_B]$\\
\hspace{-0.0cm}subject to:&\hspace{-0.2cm}$\mathcal{F}^{\dagger}(Y_B)\leq R_A$\\
\hspace{-0.0cm}&$Y_B\geq0$
\end{tabular}
\end{minipage}
\normalsize
\vspace{5mm}

where $X_A\in\Herm(\Hil_A)$ and $Y_B\in\Herm(\Hil_B)$ are variables. $X_A\geq0$ and $Y_B\geq0$ such that $\mathcal{F}(X_A)\geq S_B$ and $\mathcal{F}^{\dagger}(Y_B)\leq R_A$, respectively, are called \emph{primal feasible plan} and \emph{dual feasible plan}, respectively. We also denote the solutions to the primal and dual problems by
\[\gamma:=\inf\bigl\{\tr[R_AX_A]: X_A~\text{is a primal feasible plan}\bigl\},\]
\[\delta:=\sup\bigl\{\tr[S_BY_B]: Y_B~\text{is a dual feasible plan}\bigr\}.\]
The values $X_A\geq0$ and $Y_B\geq0$ satisfying $\tr[R_AX_A]=\gamma$ and $\tr[S_BY_B]=\delta$ are called \emph{primal optimal plan}, respectively \emph{dual optimal plan}.\\
According to the \emph{weak duality theorem} $\gamma\geq\delta$. The difference $\gamma-\delta$ is called \emph{duality gap}. The following theorem called \emph{Slater's condition} establishes an easy-to-check condition under which the duality gap vanishes, that is, $\gamma=\delta$.    
\begin{thm}
\label{slater}
Let $\gamma$ and $\delta$ be defined as above and $(\mathcal{F}, R_A, S_A)$ with $R_A\in\Herm(\Hil_A)$ and $S_B\in\Herm(\Hil_B)$ a semi-definite program. Then the following two implications hold:\\
(i)[Strict dual feasibility] Suppose $\gamma$ is finite and that there exists an operator $Y_B>0$ such that $\mathcal{F}^{\dagger}(Y_B) < R_A$. Then $\gamma=\delta$.\\
(ii) [Strict primal feasibility] Suppose that $\delta$ is finite and that there exists an operator $X_A>0$ such that $\mathcal{F}(X_A) > S_B$. Then $\gamma=\delta$.     
\end{thm}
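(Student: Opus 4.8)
The plan is to deduce strong duality from the separating hyperplane theorem, following the development in \cite{Bar,JW}. Since the weak duality theorem already gives $\gamma\geq\delta$, in both parts it suffices to prove the reverse inequality $\gamma\leq\delta$. I would first record that the two parts are mirror images of one another: writing the dual problem as a minimization exhibits it as the \emph{primal} problem of the semidefinite program $(-\mathcal{F}^{\dagger},-S_B,-R_A)$, whose own dual is, using $\mathcal{F}^{\dagger\dagger}=\mathcal{F}$, the original primal. Under this identification the two solution values are exchanged up to sign, and strict dual feasibility of $(\mathcal{F},R_A,S_B)$ becomes strict primal feasibility of $(-\mathcal{F}^{\dagger},-S_B,-R_A)$. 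Hence it is enough to prove part (ii), and part (i) then follows by applying part (ii) to the dual program.

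To prove part (ii), assume $\delta$ is finite and fix $X_A^{0}>0$ with $\mathcal{F}(X_A^{0})>S_B$. Strict primal feasibility forces $\gamma<\infty$, while $\delta$ finite together with $\gamma\geq\delta$ forces $\gamma>-\infty$, so $\gamma$ is finite. I would work in the real inner-product space $\Herm(\Hil_B)\oplus\mathbb{R}$, equipped with $\langle(U,u),(V,v)\rangle:=\tr[UV]+uv$, and introduce the convex set
\[
\mathcal{A}:=\bigl\{\,\bigl(\mathcal{F}(X_A)-S_B-Z_B,\ \tau\bigr)\ :\ X_A\geq0,\ Z_B\geq0,\ \tau\geq\tr[R_AX_A]\,\bigr\},
\]
which is convex as the affine image of a product of cones. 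For any real $c<\gamma$ the point $(0,c)$ lies outside $\mathcal{A}$, since a representation of it would produce a primal feasible $X_A$ with objective value at most $c<\gamma$. As $\mathcal{A}$ is convex and $(0,c)\notin\mathcal{A}$, the separating hyperplane theorem supplies a nonzero pair $(Y_B,\mu)$ and a scalar $\alpha$ with $\langle Y_B,w\rangle+\mu\tau\leq\alpha\leq\mu c$ for all $(w,\tau)\in\mathcal{A}$.

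I would then read off the structure of $(Y_B,\mu)$ from the recession directions of $\mathcal{A}$: letting $\tau\to\infty$ gives $\mu\leq0$, and enlarging the slack $Z_B$ along any $P\geq0$ gives $Y_B\geq0$. Taking $\tau=\tr[R_AX_A]$ and $Z_B=0$ and using that $X_A$ ranges over a cone forces $\mathcal{F}^{\dagger}(Y_B)\leq-\mu\,R_A$ and $\langle S_B,Y_B\rangle\geq-\mu\,c$. The decisive step is to rule out $\mu=0$: if $\mu=0$ then $Y_B\geq0$ is nonzero with $\langle Y_B,\mathcal{F}(X_A)-S_B\rangle\leq0$ for all $X_A\geq0$, which applied to the strictly feasible $X_A^{0}$ contradicts $\mathcal{F}(X_A^{0})-S_B>0$. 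With $\mu<0$ in hand, the operator $\hat Y_B:=Y_B/(-\mu)\geq0$ satisfies $\mathcal{F}^{\dagger}(\hat Y_B)\leq R_A$, so it is dual feasible, and $\langle S_B,\hat Y_B\rangle\geq c$ gives $\delta\geq c$; letting $c\uparrow\gamma$ yields $\delta\geq\gamma$, which together with weak duality closes the argument.

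I expect the main obstacle to be precisely this non-degeneracy step $\mu\neq0$: it is the only place where the Slater hypothesis is indispensable, and it is what separates a genuine dual certificate $\hat Y_B$ from a vacuous separating functional. A secondary technical point is to confirm that the separation theorem applies at $(0,c)$, which holds because $(0,c)$ lies outside $\mathcal{A}$ and hence outside its relative interior $\operatorname{ri}\mathcal{A}\subseteq\mathcal{A}$, so no closedness hypothesis on $\mathcal{A}$ is needed.
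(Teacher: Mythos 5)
The paper never proves this theorem: it is quoted as Slater's condition and referred to the cited literature \cite{Bar}, \cite{JW}, so there is no in-paper proof to compare against; your proposal supplies the missing argument, and it is correct. It is essentially the standard separating-hyperplane proof of Slater's theorem found in those references. The symmetry reduction of (i) to (ii) is sound: the dual, written as a minimization, is the primal of $(-\mathcal{F}^{\dagger},-S_B,-R_A)$, whose dual (using $\mathcal{F}^{\dagger\dagger}=\mathcal{F}$) is the negated original primal, the optimal values exchange as $-\delta$ and $-\gamma$, and strict dual feasibility of the original is exactly strict primal feasibility of the flipped program. The core of (ii) is also right: $\mathcal{A}$ is convex, $(0,c)\notin\mathcal{A}$ for every $c<\gamma$ (else a primal feasible plan would beat the infimum), the point $(0,c)$ lies outside $\operatorname{ri}\mathcal{A}$ so proper separation needs no closedness of $\mathcal{A}$, and the recession/cone arguments correctly force $\mu\leq 0$, $Y_B\geq 0$, $\mathcal{F}^{\dagger}(Y_B)\leq-\mu R_A$ and $\tr[S_BY_B]\geq-\mu c$. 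The Slater point is used exactly where it is indispensable, to exclude $\mu=0$ (a nonzero $Y_B\geq0$ paired against the positive definite operator $\mathcal{F}(X_A^{0})-S_B$ gives a strictly positive trace), after which rescaling by $-\mu$ produces a dual feasible plan with value at least $c$, and $c\uparrow\gamma$ closes the gap. One cosmetic remark: your argument never uses $X_A^{0}>0$, only $X_A^{0}\geq0$ with $\mathcal{F}(X_A^{0})>S_B$, so you in fact prove a marginally stronger statement than the one quoted.
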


\section{New Expressions and Bounds for the Smooth Max-Entropy}
\label{sec:hmax-expressions}

In the following, we give alternative expressions for
$H_{\max}(A|B)_{\rho|\sigma}$ and $H_{\max}(A|B)_{\rho}$ based on the
analysis of SDPs. Then, we prove inequalities relating these entropies with a new entropic measure that turns out to be a useful tool for proving the chain rules.


\subsection{New Expressions via SDP Duality}

\begin{lemma}\label{sdpstatement}
Let $\rho_{AB}\in\Su_{\leq}(\Hil_{AB})$, $\sigma_B\in\Su_{\leq}(\Hil_B)$ and let $\rho_{ABC}$ be a purification of $\rho_{AB}$ on an auxiliary Hilbert space $\Hil_C$. Then the max-entropy of A conditioned on B of $\rho_{AB}$ relative to $\sigma_B$ is given by
\begin{equation} \label{presley}
H_{\max}(A|B)_{\rho|\sigma}=\log\min_{ Z_{AB}}\tr[(\id_A\otimes\sigma_B)Z_{AB}],
\end{equation}
where the minimum ranges over all $Z_{AB}\in\Po(\Hil_{AB})$ with $\rho_{ABC}\leq Z_{AB}\otimes\id_C$. 
\end{lemma}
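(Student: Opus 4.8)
The plan is to recognise the right-hand side of \eqref{presley} (dropping the logarithm) as the value of a semidefinite program of the form discussed above, to pass to its dual, and to evaluate the dual value by means of Uhlmann's theorem. Since $H_{\max}(A|B)_{\rho|\sigma}=\log F(\rho_{AB},\id_A\otimes\sigma_B)^2$ by Definition \ref{maxentropy}, it suffices to prove
\[
F(\rho_{AB},\id_A\otimes\sigma_B)^2=\min_{Z_{AB}}\tr[(\id_A\otimes\sigma_B)Z_{AB}],
\]
the minimum being over $Z_{AB}\geq0$ with $\rho_{ABC}\leq Z_{AB}\otimes\id_C$.

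First I would cast this minimisation as a primal SDP, with the role of $\Hil_A$ played by $\Hil_{AB}$ and that of $\Hil_B$ by $\Hil_{ABC}$. Concretely, set $R=\id_A\otimes\sigma_B$, $S=\rho_{ABC}$, and $\mathcal{F}(Z_{AB})=Z_{AB}\otimes\id_C$, so that the constraint reads $\mathcal{F}(Z_{AB})\geq S$. A short computation from $\tr[(Z_{AB}\otimes\id_C)W_{ABC}]=\tr[Z_{AB}\tr_C W_{ABC}]$ identifies the adjoint with the partial trace, $\mathcal{F}^{\dagger}=\tr_C$, so the dual program is
\[
\delta=\max_{Y_{ABC}}\tr[\rho_{ABC}Y_{ABC}]\quad\text{subject to } Y_{ABC}\geq0,\ \tr_C[Y_{ABC}]\leq\id_A\otimes\sigma_B .
\]
To exclude a duality gap I would invoke Slater's condition (Theorem \ref{slater}): the choice $Z_{AB}=t\,\id_{AB}$ with $t>\|\rho_{ABC}\|_{\infty}$ is strictly primal feasible, and $\delta$ is finite because the constraint bounds $Y_{ABC}$; hence $\gamma=\delta$ by strict primal feasibility.

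It then remains to show $\delta=F(\rho_{AB},\id_A\otimes\sigma_B)^2$, which I regard as the crux. Since $\rho_{ABC}=\ketbra{\rho}$ is pure, the objective equals $\bra{\rho}Y_{ABC}\ket{\rho}$. For the bound ``$\geq$'', I would take the purification $\ket{\phi}_{ABC}$ of $\id_A\otimes\sigma_B$ that is optimal in Uhlmann's theorem and set $Y_{ABC}=\ketbra{\phi}$; this is feasible, with $\tr_C[Y_{ABC}]=\id_A\otimes\sigma_B$, and attains $|\braket{\rho}{\phi}|^2=F^2$. (Here I enlarge $\Hil_C$ if necessary so that a purification of $\id_A\otimes\sigma_B$ fits, noting that the right-hand side of \eqref{presley} is unchanged when $\rho_{ABC}$ is replaced by a larger purification, since the constraint is equivalent on the support.) For the bound ``$\leq$'', note that the objective is monotone in $Y_{ABC}$, so I may assume $\tr_C[Y_{ABC}]=\id_A\otimes\sigma_B$; purifying $Y_{ABC}=\tr_D\ketbra{\Psi}$ on an enlarged space $\Hil_{ABCD}$ then makes $\ket{\Psi}$ a purification of $\id_A\otimes\sigma_B$, while $\ket{\rho}_{ABC}\otimes\ket{w}_D$ is a purification of $\rho_{AB}$ for every unit vector $\ket{w}_D$. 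Writing $\bra{\rho}Y_{ABC}\ket{\rho}=\max_{\ket{w}}|\braket{\rho,w}{\Psi}|^2$ and applying Uhlmann's theorem to bound each overlap by the fidelity yields $\bra{\rho}Y_{ABC}\ket{\rho}\leq F^2$, hence $\delta\leq F^2$.

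The main obstacle is precisely this last ``$\leq$'' estimate: because the dual constraint is an inequality and $Y_{ABC}$ need not be rank one, Uhlmann's theorem does not apply directly, and one must first dilate $Y_{ABC}$ to a pure state on a larger system so as to reduce to comparing two genuine purifications. The remaining ingredients---computing $\mathcal{F}^{\dagger}$, verifying Slater's condition, and the rank-one lower bound---are routine. Taking logarithms in $\gamma=\delta=F^2$ then gives \eqref{presley}.
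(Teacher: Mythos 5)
Your proposal is correct and follows essentially the same route as the paper: cast the right-hand side of \eqref{presley} as a primal SDP whose dual is $\max\{\tr[\rho_{ABC}Y_{ABC}] : Y_{ABC}\geq 0,\ \tr_C[Y_{ABC}]\leq\id_A\otimes\sigma_B\}$, close the duality gap via Slater's strict primal feasibility with a large multiple of $\id_{AB}$, and identify the dual value with $F(\rho_{AB},\id_A\otimes\sigma_B)^2$ using Uhlmann's theorem together with the completion trick that lets one replace the inequality constraint $\tr_C[Y_{ABC}]\leq\id_A\otimes\sigma_B$ by equality. The only (cosmetic) difference is in the upper bound $\delta\leq F^2$: the paper cites monotonicity of the fidelity under the partial trace $\tr_C$, while you re-derive that monotonicity by dilating $Y_{ABC}$ to a purification on $\Hil_{ABCD}$ and bounding overlaps of purifications, which amounts to the same argument.
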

\begin{proof}
Uhlmann's theorem \cite{Uhl76} tells us that the fidelity can be expressed as a maximization of the overlap of purifications in which the optimization goes over one purification only. In particular, if $\rho_{ABC}$ is any purification of $\rho_{AB}$, then by Uhlmann's theorem 
\begin{equation*}
\begin{split}
2^{H_{\max}(A|B)_{\rho|\sigma}}&=F(\rho_{AB}, \id_{A}\otimes\sigma_B)^{2}\\
&=\max_{\substack{X_{ABC}\geq0\\\tr_C[X_{ABC}]=\id_A\otimes\sigma_B\\\rank[X_{ABC}]=1}}F(\rho_{ABC}, X_{ABC})^{2}\\
&\leq\max_{\substack{X_{ABC}\geq0\\\tr_C[X_{ABC}]=\id_A\otimes\sigma_B}}\tr[\rho_{ABC}X_{ABC}]\\
&=\max_{\substack{X_{ABC}\geq0\\\tr_C[X_{ABC}]=\id_A\otimes\sigma_B}}F(\rho_{ABC},X_{ABC})^{2}\\
&\leq F(\rho_{AB}, \id_A\otimes\sigma_B)^{2}\\
&=2^{H_{\max}(A|B)_{\rho|\sigma}},
\end{split}
\end{equation*}
where the first inequality follows from the fact that the set over which we optimize becomes larger and the last inequality follows from the fact that the fidelity is monotonously increasing under the partial trace. The above calculation implies that instead of optimizing over rank one operators $X_{ABC}$ as Uhlmann's theorem demands, one can maximize over all positive semidefinite extensions $X_{ABC}$ of $\id_A\otimes\sigma_B$, that is,  
\begin{equation}
\label{king}
2^{H_{\max}(A|B)_{\rho|\sigma}}=\max_{\substack{X_{ABC}\geq0\\\tr_C[X_{ABC}]=\id_A\otimes\sigma_B}}\tr[\rho_{ABC}X_{ABC}].
\end{equation}
Moreover, for any positive semidefinite operator $X_{ABC}$ with $\tr_C\bigl[X_{ABC}\bigr]\leq\id_A\otimes\sigma_B$ we can define an operator
\[\bar{X}_{ABC}:=X_{ABC}+Y_C\otimes\bigl(\id_A\otimes\sigma_B-\tr_{C}X_{ABC}\bigr),\]    
with $Y_C$ an arbitrary element of $\Su_{=}(\Hil_C)$. By construction it is constrained by $\tr_C\bar{X}_{ABC}=\id_A\otimes\sigma_B$ and also satisfies  
\begin{equation*}
\tr\bigl[\bar{X}_{ABC}\rho_{ABC}\bigr]\geq\tr\bigl[X_{ABC}\rho_{ABC}\bigr]. 
\end{equation*}
Hence, in \eqref{king} it is permissible to take the maximum over the set of all nonnegative operators $X_{ABC}$ whose partial trace $\tr_{C}X_{ABC}$ is bounded by $\id_{A}\otimes\sigma_B$ (in spite of being equal to $\id_{A}\otimes\sigma_B$), that is,
\begin{equation}
\label{kong}
2^{H_{\max}(A|B)_{\rho|\sigma}}=\max_{\substack{X_{ABC}\geq0\\\tr_C[X_{ABC}]\leq\id_A\otimes\sigma_B}}\tr[\rho_{ABC}X_{ABC}].
\end{equation}
Based on \eqref{kong} we can express $2^{H_{\max}(A|B)_{\rho|\sigma}}$ in terms of the following SDP:

{\footnotesize
\vspace{5mm}
\begin{minipage}[b]{0.45\linewidth}\centering
\begin{tabular}{cc}
\hspace{-0.5cm}&\hspace{-0.cm}\textsc{Primal Problem:}\\
\hspace{-0.5cm}minimum:&\hspace{-0.cm}$\tr[(\id_A\otimes\sigma_B)Z_{AB}]$\\
\hspace{-0.5cm}subject to:&\hspace{-0.cm}$Z_{AB}\otimes\id_C\geq\rho_{ABC}$\\
\hspace{-0.5cm}&$Z_{AB}\geq0$.\\
{}&{}\\
{}&{}
\end{tabular}
\end{minipage}
\hspace{0.2cm}
\begin{minipage}[b]{0.45\linewidth}
\centering
\begin{tabular}{cc}
\hspace{-0.3cm}&\textsc{Dual problem}:\\
\hspace{-0.3cm}maximum:&\hspace{-0.2cm}$\tr[X_{ABC}\rho_{ABC}]$\\
\hspace{-0.3cm}subject to:&\hspace{-0.2cm}$\tr_C[X_{ABC}]\leq\id_A\otimes\sigma_B$\\ 
\hspace{-0.3cm}&$X_{ABC}\geq0$\\
{}&{}\\
{}&{}
\end{tabular}
\end{minipage}
\normalsize}
where $Z_{AB}$ is a primal variable and $X_{ABC}$ a dual variable, respectively. Since the space in the dual problem over which one is optimizing, is closed and bounded, it is compact by the Weierstrass theorem. Hence, the dual optimal plan is finite. Furthermore, the operator $\bar{Z}_{AB}=2\|\rho_{ABC}\|_{\infty}\id_{AB}>0$ satisfies Slater's strict primal feasibility condition $2\|\rho_{ABC}\|_{\infty}\id_{ABC}-\rho_{ABC}>0$ and thus the duality gap between the primal and dual optimization problems vanishes. 
\end{proof}


Next, we write out the SDP for $2^{H_{\max}(A|B)_{\rho}}$ and explore the duality gap between the optimization problems. 
\begin{lemma}
\label{maxinflemma}
Let $\rho_{AB}\in\Su_{\leq}(\Hil_{AB})$ and let $\rho_{ABC}$ be a purification of $\rho_{AB}$ on an auxiliary Hilbert space $\Hil_C$. Then 
the max-entropy of A conditioned on B of $\rho_{AB}$ is given by
\begin{equation}
\label{maxiinf}
H_{\max}(A|B)_{\rho}:=\log\min_{Z_{AB}}\|Z_B\|_{\infty},
\end{equation}
where the minimum ranges over all $Z_{AB}\in\Po(\Hil_{AB})$ with $\rho_{ABC}\leq Z_{AB}\otimes\id_C$. 
\end{lemma}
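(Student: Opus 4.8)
The plan is to express $H_{\max}(A|B)_{\rho}$ as a single optimization by combining the relative expression of Lemma~\ref{sdpstatement} with the maximization over $\sigma_B$, and then to pass to the operator norm by interchanging a maximization with a minimization.

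Concretely, starting from $H_{\max}(A|B)_{\rho}=\max_{\sigma_B\in\Su_{\leq}(\Hil_B)}H_{\max}(A|B)_{\rho|\sigma}$ and inserting \eqref{presley}, I would pull the (monotone) logarithm outside both optimizations and use the partial-trace identity $\tr[(\id_A\otimes\sigma_B)Z_{AB}]=\tr[\sigma_B Z_B]$, with $Z_B:=\tr_A[Z_{AB}]$, to obtain
\[
H_{\max}(A|B)_{\rho}=\log\max_{\sigma_B\in\Su_{\leq}(\Hil_B)}\ \min_{Z_{AB}}\ \tr[\sigma_B Z_B],
\]
the inner minimum running over the feasible set $\{Z_{AB}\in\Po(\Hil_{AB}):\rho_{ABC}\leq Z_{AB}\otimes\id_C\}$ of \eqref{maxiinf}. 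The observation that drives everything is that for fixed $Z_B\geq0$ one has $\max_{\sigma_B\in\Su_{\leq}(\Hil_B)}\tr[\sigma_B Z_B]=\|Z_B\|_{\infty}$, the maximum being attained at the rank-one projector onto a leading eigenvector of $Z_B$ and bounded above by $\|Z_B\|_{\infty}\tr\sigma_B\leq\|Z_B\|_{\infty}$. Thus \eqref{maxiinf} is precisely the assertion that the maximization over $\sigma_B$ and the minimization over $Z_{AB}$ may be exchanged; the inequality $H_{\max}(A|B)_{\rho}\leq\log\min_{Z_{AB}}\|Z_B\|_{\infty}$ is the trivial minimax inequality, so the entire content lies in the reverse direction.

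To establish strong duality I would, following the SDP approach already used for Lemma~\ref{sdpstatement}, write $2^{H_{\max}(A|B)_{\rho}}$ as a single semidefinite program. Using \eqref{kong} and maximizing jointly over $\sigma_B$ as well, the program reads: maximize $\tr[\rho_{ABC}X_{ABC}]$ over $X_{ABC}\geq0$ and $\sigma_B\geq0$ subject to $\tr_C[X_{ABC}]\leq\id_A\otimes\sigma_B$ and $\tr\sigma_B\leq1$. Introducing a multiplier $Z_{AB}\geq0$ for the first constraint and a scalar $t\geq0$ for the second, the Lagrangian reduces to $\tr[(\rho_{ABC}-Z_{AB}\otimes\id_C)X_{ABC}]+\tr[(Z_B-t\id_B)\sigma_B]+t$, and maximizing over $X_{ABC}\geq0$ and $\sigma_B\geq0$ forces $\rho_{ABC}\leq Z_{AB}\otimes\id_C$ and $Z_B\leq t\id_B$. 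The adjoint program is therefore to minimize $t$ subject to $Z_{AB}\geq0$, $\rho_{ABC}\leq Z_{AB}\otimes\id_C$ and $Z_B\leq t\id_B$; since $Z_B\geq0$, the optimal $t$ for a fixed feasible $Z_{AB}$ is exactly $\|Z_B\|_{\infty}$, so this value equals $\min_{Z_{AB}}\|Z_B\|_{\infty}$, matching \eqref{maxiinf}.

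The main work is then to certify a vanishing duality gap and attainment. For the gap I would invoke Slater's condition (Theorem~\ref{slater}) exactly as in Lemma~\ref{sdpstatement}: strict feasibility of the minimization (e.g.\ $Z_{AB}=2\|\rho_{ABC}\|_{\infty}\id_{AB}$ together with $t$ chosen large, making both operator inequalities strict) forces $\gamma=\delta$. Attainment of the maximum is clear because its feasible set is compact ($\Su_{\leq}(\Hil_B)$ is compact and $\tr X_{ABC}=\tr\tr_C[X_{ABC}]\leq\tr[\id_A\otimes\sigma_B]$ is bounded). For the minimum I would note that any sublevel set $\{Z_{AB}:\|Z_B\|_{\infty}\leq c\}$ of feasible operators is compact, since $Z_{AB}\geq0$ gives $\|Z_{AB}\|_{\infty}\leq\tr Z_{AB}=\tr Z_B\leq\Dim(\Hil_B)\,\|Z_B\|_{\infty}$, so the infimum of $\|Z_B\|_{\infty}$ is achieved and the minimum in \eqref{maxiinf} is justified. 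I expect the delicate point to be purely the strong-duality and attainment bookkeeping — in particular exhibiting a correct strictly feasible plan for Slater — rather than the essentially algebraic reduction to $\|Z_B\|_{\infty}$; an alternative that bypasses the explicit dual is to apply Sion's minimax theorem directly to the bilinear map $\tr[\sigma_B Z_B]$, using that compactness of only the single set $\Su_{\leq}(\Hil_B)$ is required even though the feasible set of $Z_{AB}$ is unbounded.
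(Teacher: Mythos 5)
Your proposal is correct and is essentially the paper's own proof: both reduce to the identical primal--dual SDP pair (the joint maximization of $\tr[\rho_{ABC}X_{ABC}]$ over $X_{ABC}\geq0$, $\sigma_B\geq0$ built from \eqref{kong}, against the minimization of $\|Z_B\|_{\infty}$ over $Z_{AB}\geq 0$ with $\rho_{ABC}\leq Z_{AB}\otimes\id_C$), and both close the duality gap via Slater's strict feasibility (Theorem~\ref{slater}) with the same plan $\bar{Z}_{AB}=2\|\rho_{ABC}\|_{\infty}\id_{AB}$ and compactness on the maximization side. The differences are presentational only: you derive the minimization program as a Lagrangian dual and spell out attainment, and your opening minimax framing is exactly the observation the paper records in the remark following the lemma.
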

\begin{proof}
  The only thing that changes with respect to the SDP
  in Lemma~\ref{sdpstatement} is that $\sigma_B$ is no longer fixed
  but it becomes a dual variable.  Thus the SDP for
  $2^{H_{\max}(A|B)_{\rho}}$ reads:

{\footnotesize
\vspace{5mm}
\begin{minipage}[b]{0.45\linewidth}\centering
\begin{tabular}{cc}
\hspace{-0.5cm}&\hspace{-0.cm}\textsc{Primal Problem:}\\
\hspace{-0.5cm}minimum:&\hspace{-0.5cm}$\lambda$\\
\hspace{-0.5cm}subject to:&\hspace{-0.5cm}$Z_{AB}\otimes\id_C\geq\rho_{ABC}$\\ 
\hspace{-0.5cm}&$\lambda\id_B\geq\tr_A[Z_{AB}]$\\
\hspace{-0.5cm}&$Z_{AB}\geq0$,~$\lambda\geq0$\\
&
\end{tabular}
\end{minipage}
\hspace{0.2cm}
\begin{minipage}[b]{0.45\linewidth}
\centering
\begin{tabular}{cc}
\hspace{-0.3cm}&\textsc{Dual problem}:\\
\hspace{-0.3cm}maximum:&\hspace{-0.2cm}$\tr[X_{ABC}\rho_{ABC}]$\\
\hspace{-0.3cm}subject to:&\hspace{-0.33cm}$\tr_C[X_{ABC}]\leq\id_A\otimes\sigma_B$\\ 
\hspace{-0.3cm}&$\tr[\sigma_B]\leq1$\\
\hspace{-0.3cm}&$X_{ABC}\geq0$,~$\sigma_B\geq0$\\
&
\end{tabular}
\end{minipage}
\normalsize}
\\
\\
where $\lambda$ and $Z_{AB}$ are primal variables and $\sigma_B$ and
$X_{ABC}$ dual variables. Obviously, the optimal $\lambda$ is equal to
the largest eigenvalue of $Z_B$. Hence, the above program may be
rewritten in the form:

{\footnotesize
\vspace{5mm}
\begin{minipage}[b]{0.45\linewidth}\centering
\begin{tabular}{cc}
\hspace{-0.5cm}&\hspace{-0.cm}\textsc{Primal Problem:}\\
\hspace{-0.5cm}minimum:&\hspace{-0.cm}$\|Z_B\|_{\infty}$\\
\hspace{-0.5cm}subject to:&\hspace{-0.cm}$Z_{AB}\otimes\id_C\geq\rho_{ABC}$\\
\hspace{-0.5cm}&$Z_{AB}\geq0$\\
&
\end{tabular}
\end{minipage}
\hspace{0.2cm}
\begin{minipage}[b]{0.45\linewidth}
\centering
\begin{tabular}{cc}
\hspace{-0.3cm}&\textsc{Dual problem}:\\
\hspace{-0.3cm}maximum:&\hspace{-0.2cm}$\tr[X_{ABC}\rho_{ABC}]$\\
\hspace{-0.3cm}subject to:&\hspace{-0.2cm}$\tr_C[X_{ABC}]\leq\id_A\otimes\sigma_B$\\ 
\hspace{-0.3cm}&$\tr[\sigma_B]\leq1$\\
\hspace{-0.3cm}&$X_{ABC}\geq0,~\sigma_B\geq0$
\end{tabular}
\end{minipage}
\normalsize}
\\
\\
\\
In the dual problem we are optimizing over compact sets, thus there exists a finite dual optimal plan. Furthermore, $\bar{Z}_{AB}=2\|\rho_{ABC}\|_{\infty}\id_{AB}>0$ and $\bar{\lambda}=2\|\bar{Z}_B\|_{\infty}>0$ satisfy Slater's strict primal feasibility condition $\bar{Z}_{AB}\otimes\id_C > \rho_{ABC}$ and $\bar{\lambda}\id_B > \tr_A[\bar{Z}_{AB}]$ which implies a zero duality gap. 
\end{proof}


Note that one can always write the operator norm of $Z_{B}$ as \[\|Z_B\|_{\infty}=\max_{\sigma_B}\tr[\sigma_BZ_B]=\max_{\sigma_B}\tr[(\id_A\otimes\sigma_B)Z_{AB}],\]
where the maximum ranges over all $\sigma_B\in\Su_{\leq}(\Hil_B)$. Expression \eqref{maxiinf} then acquires the form \begin{equation}\label{dakel}H_{\max}(A|B)_{\rho}=\log\min_{\rho_{ABC}\leq Z_{AB}\otimes\id_C}\max_{\sigma_B}\tr[(\id_A\otimes\sigma_B)Z_{AB}].\end{equation} 
On the other hand from the vanishing of the duality gap in the SDP of $H_{\max}(A|B)_{\rho|\sigma}$ it follows that 
\begin{equation*}
\log F(\rho_{AB}, \id_A\otimes\sigma_B)^{2}=\log\min_{Z_{AB}}\tr[(\id_A\otimes\sigma_B)Z_{AB}]
\end{equation*}
which after maximization of the left- and the right-hand sides over $\sigma_B\in\Su_{\leq}(\Hil_B)$ implies 
\begin{equation*}
H_{\max}(A|B)_{\rho}=\log\max_{\sigma_B}\min_{Z_{AB}}\tr[(\id_A\otimes\sigma_B)Z_{AB}].
\end{equation*}
Therefore, the operations $\min$ and $\max$ in \eqref{dakel} commute. Since the function $\tr[(\id_A\otimes\sigma_B)Z_{AB}]$ is bilinear and the sets over which one optimizes are convex, the commutativity of $\min$ and $\max$ can alternatively be seen as a consequence of the minimax theorem.

Henceforth, we will use \eqref{eqmaxentropy}, \eqref{presley} and 
\eqref{maxiinf} and \eqref{dakel} as
interchangeable expressions for the conditional max-entropy and the
conditional relative max-entropy, respectively.

\subsection{A Bound on the Relative Conditional Entropy} 
Here we provide two lemmas which give tight upper bounds of the max- and min-entropy in terms of the relative max- and min-entropy, respectively. The first lemma is a new result whereas the latter one is an improved version of Lemma 21 from \cite{TSSR10}. Both of the following statements are important for the derivation of chain rules.    
\begin{lemma}
\label{supportinglemma}
Let $\eps>0$, $\rho_{AB}\in\Su_{\leq}(\Hil_{AB})$ and $\rho_{AB}^{\prime}\approx_{\eps^{\prime}}\rho_{AB}$. Then there exists a state $\tilde{\rho}_{AB}\approx_{\eps+\eps^{\prime}}\rho_{AB}^{\prime}$ such that 
\begin{equation}
\label{valdes}
H_{\max}(A|B)_{\tilde{\rho}}\leq H_{\max}(A|B)_{\rho|\rho^{\prime}}+\log\left(\frac{1}{1-\sqrt{1-\eps^{2}}}\right).
\end{equation} 
\end{lemma}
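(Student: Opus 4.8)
The plan is to exhibit an explicit near-optimal smoothing state $\tilde\rho_{AB}$ obtained by truncating the large part of an optimal dual operator for the relative max-entropy. Write $\sigma_B:=\rho'_B$ and $F:=F(\rho_{AB},\id_A\otimes\sigma_B)$, so that $2^{H_{\max}(A|B)_{\rho|\rho'}}=F^2$. First I would invoke Lemma~\ref{sdpstatement} to obtain an operator $Z_{AB}\in\Po(\Hil_{AB})$ with $\rho_{ABC}\le Z_{AB}\otimes\id_C$ and $\tr[(\id_A\otimes\sigma_B)Z_{AB}]=\tr[\sigma_B Z_B]=F^2$, where $\rho_{ABC}$ is a fixed (pure) purification of $\rho_{AB}$. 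The point is that $Z_B$ has small $\sigma_B$-weighted trace but possibly large operator norm, and this is precisely the discrepancy between the relative max-entropy (Lemma~\ref{sdpstatement}) and the max-entropy (Lemma~\ref{maxinflemma}); truncating the large eigenvalues of $Z_B$ should close this gap at the cost of a controlled perturbation.

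Concretely, let $\Pi_B$ be the spectral projector of $Z_B$ onto eigenvalues at most $t:=F^2/(1-\sqrt{1-\eps^2})$, the threshold chosen to produce the stated correction term. I would define $\tilde\rho_{ABC}:=(\id_A\otimes\Pi_B\otimes\id_C)\,\rho_{ABC}\,(\id_A\otimes\Pi_B\otimes\id_C)$, a subnormalized pure state, and set $\tilde\rho_{AB}:=\tr_C\tilde\rho_{ABC}$. Conjugating $\rho_{ABC}\le Z_{AB}\otimes\id_C$ by $\id_A\otimes\Pi_B\otimes\id_C$ gives $\tilde\rho_{ABC}\le Z'_{AB}\otimes\id_C$, where $Z'_{AB}:=(\id_A\otimes\Pi_B)Z_{AB}(\id_A\otimes\Pi_B)$ has marginal $Z'_B=\Pi_B Z_B\Pi_B$, so that $\|Z'_B\|_\infty\le t$. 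Since $\tilde\rho_{ABC}$ is a purification of $\tilde\rho_{AB}$ and $Z'_{AB}$ is feasible, Lemma~\ref{maxinflemma} immediately yields $H_{\max}(A|B)_{\tilde\rho}\le\log\|Z'_B\|_\infty\le\log t=H_{\max}(A|B)_{\rho|\rho'}+\log\frac{1}{1-\sqrt{1-\eps^2}}$, which is the required entropy bound.

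It remains to control $P(\tilde\rho_{AB},\rho'_{AB})$, and this is the step I expect to be the main obstacle. The difficulty is a mismatch: the truncation is governed by the discarded weight $w:=\tr[\sigma_B(\id_B-\Pi_B)]$, and the operator form of Markov's inequality gives $w=\tr[\sigma_B\{Z_B>t\}]\le F^2/t=1-\sqrt{1-\eps^2}$, which is naturally the weight removed from $\rho'$ (whose $B$-marginal is $\sigma_B$), whereas $\tilde\rho$ has been built by projecting $\rho$. Projecting $\rho'$ directly would make the distance estimate transparent but would destroy the operator domination used above, while projecting $\rho$ keeps the domination but severs the link to the $\sigma_B$-weight. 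I would resolve this by routing the estimate through the triangle inequality with the intermediate state $\rho''_{AB}:=(\id_A\otimes\Pi_B)\rho'_{AB}(\id_A\otimes\Pi_B)$. On one leg, the data-processing inequality for the purified distance under the trace-non-increasing map $X\mapsto(\id_A\otimes\Pi_B)X(\id_A\otimes\Pi_B)$ (a basic property reviewed in the appendix), applied to $\rho_{AB}\approx_{\eps'}\rho'_{AB}$, gives $P(\tilde\rho_{AB},\rho''_{AB})\le\eps'$. On the other leg, projecting a purification of $\rho'_{AB}$ and using the generalized fidelity shows $\bar F(\rho'_{AB},\rho''_{AB})\ge 1-w$, hence $P(\rho'_{AB},\rho''_{AB})\le\sqrt{1-(1-w)^2}\le\eps$ by the Markov bound on $w$. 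Combining the two legs gives $P(\tilde\rho_{AB},\rho'_{AB})\le\eps+\eps'$, so $\tilde\rho_{AB}\approx_{\eps+\eps'}\rho'_{AB}$. The remaining work is to verify these two distance estimates carefully for subnormalized states, together with the routine degenerate case $F=0$.
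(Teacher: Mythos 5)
Your proposal is correct, and its skeleton coincides with the paper's: both take an optimal primal plan $Z_{AB}$ for the SDP of Lemma~\ref{sdpstatement} with $\sigma_B=\rho'_B$, truncate the spectrum of $Z_B$ by a projector $\Pi_B$, set $\tilde\rho_{AB}=\Pi_B\rho_{AB}\Pi_B$, bound $H_{\max}(A|B)_{\tilde\rho}$ through Lemma~\ref{maxinflemma} using $\tilde\rho_{ABC}\le \Pi_B Z_{AB}\Pi_B\otimes\id_C$, and prove $\eps+\eps'$-closeness by exactly the same triangle inequality through $\Pi_B\rho'_{AB}\Pi_B$ with \eqref{schlumf1} and \eqref{schlumf2}. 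Where you genuinely deviate is the key technical step relating the truncated operator norm to the relative max-entropy. The paper defines $\Pi_B$ as the minimum-rank projector onto the small eigenvalues of $Z_B$ subject to the weight constraint $\tr[\Pi_B^\perp\rho'_B]\le 1-\sqrt{1-\eps^2}$, and must then work to bound $\|\Pi_B Z_B\Pi_B\|_\infty$: it introduces the projector $\Pi'_B$ onto the largest surviving eigenvalue, exploits minimality to get \eqref{mac}, and runs a Rayleigh-quotient argument with the test operator $(\Pi_B^\perp+\Pi'_B)\rho'_B(\Pi_B^\perp+\Pi'_B)$, using commutation and the optimality relation \eqref{presley}. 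You instead fix an explicit eigenvalue threshold $t=2^{H_{\max}(A|B)_{\rho|\rho'}}/(1-\sqrt{1-\eps^2})$, so the norm bound $\|\Pi_B Z_B\Pi_B\|_\infty\le t$ is immediate, and the work moves to the discarded weight, which the operator Markov inequality $t\,\Pi_B^\perp\le Z_B$ (valid since $\Pi_B^\perp$ is a spectral projector of $Z_B$) controls in one line: $\tr[\rho'_B\Pi_B^\perp]\le 2^{H_{\max}(A|B)_{\rho|\rho'}}/t=1-\sqrt{1-\eps^2}$. The two choices of $\Pi_B$ differ in general, but they yield identical constants, so your route is a genuine simplification: it replaces the most delicate part of the paper's argument (the minimum-rank/Rayleigh-quotient step) with an elementary Markov bound and makes the truncation threshold explicit. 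The loose ends you flag --- subnormalized states in the fidelity estimates and the degenerate case $F=0$ --- are indeed routine, and the paper's own proof does not treat the latter explicitly either.
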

\begin{proof}
Let $\tilde{Z}_{AB}$ be an optimal primal plan for the semidefinite program for $H_{\max}(A|B)_{\rho|\rho^{\prime}}$ and $\Pi_B$ be the minimum rank projector onto the smallest eigenvalues of the reduced operator $\tilde{Z}_{B}$ such that $\tr[\Pi_B^{\bot}\rho_B^{\prime}]\leq 1-\sqrt{1-\eps^{2}}$ where $\Pi_B^{\bot}$ is the orthogonal complement of $\Pi_B$ and let $\tilde{\rho}_{AB}:=\Pi_B\rho_{AB}\Pi_B$. By Equation \eqref{maxiinf}, we can write  
\begin{equation*}
\begin{split}
2^{H_{\max}(A|B)_{\tilde{\rho}}}&=\min_{\tilde{\rho}_{ABC}\leq Z_{AB}\otimes\id_C}\|Z_B\|_{\infty}\\
&\leq\|\Pi_B\tilde{Z}_{B}\Pi_B\|_{\infty},
\end{split}
\end{equation*}
where we used the fact that $\rho_{ABC}\leq\tilde{Z}_{AB}\otimes\id_C$ implies $\tilde{\rho}_{ABC}\leq\Pi_B\tilde{Z}_{AB}\Pi_B\otimes\id_C$.
Let $\Pi_B^{\prime}$ be the projector onto the largest eigenvalue of $\Pi_B\tilde{Z}_{B}\Pi_B$. Then the definition of $\Pi_B$ implies that 
\begin{equation}
\label{mac}
\tr[(\Pi_B^{\bot}+\Pi_B^{\prime})\rho_B^{\prime}]\geq1-\sqrt{1-\eps^{2}}.
\end{equation}
Moreover, by construction $\Pi_B^{\bot}$ and $\Pi_B^{\prime}$ project onto orthogonal eigenspaces of $\tilde{Z}_B$, that is,  $\Pi_B^{\bot}\Pi_B^{\prime}=0$. Hence the sum $\Pi_B^{\bot}+\Pi_B^{\prime}$ is itself a projector which commutes with $\tilde{Z}_{B}$. We use the last two facts to find an upper bound for
\begin{equation}
\label{upgrade}
\begin{split}
\|\Pi_B\tilde{Z}_{B}\Pi_B\|_{\infty}&=\tr[\Pi_{B}^{\prime}\tilde{Z}_{B}]\\
&=\min_{\mu_B}\frac{\tr[\mu_B\tilde{Z}_{B}]}{\tr[\mu_B]},
\end{split}
\end{equation} 
where the minimization is over all positive operators in the support of $\Pi_{B}^{\bot}+\Pi_{B}^{\prime}$. Fixing $\mu_B=(\Pi_B^{\bot}+\Pi_B^{\prime})\rho_B^{\prime}(\Pi_B^{\bot}+\Pi_B^{\prime})$, we obtain the following upper bound for \eqref{upgrade}:\\ 
\begin{equation*}
\begin{split}
\|\Pi_B\tilde{Z}_{B}\Pi_B\|_{\infty}&\leq\frac{\tr[(\Pi_B^{\bot}+\Pi_B^{\prime})\rho_B^{\prime}(\Pi_B^{\bot}+\Pi_B^{\prime})\tilde{Z}_{B}]}{\tr[(\Pi_B^{\bot}+\Pi_B^{\prime})\rho_B^{\prime}]}\\
&=\frac{\tr[(\Pi_B^{\bot}+\Pi_B^{\prime})\tilde{Z}_{B}^{1/2}\rho_B^{\prime}\tilde{Z}_{B}^{1/2}]}{\tr[(\Pi_B^{\bot}+\Pi_B^{\prime})\rho_B^{\prime}]}\\
&\leq\frac{\tr[\rho_B^{\prime}\tilde{Z}_{B}]}{\tr[(\Pi_B^{\bot}+\Pi_B^{\prime})\rho_B^{\prime}]}\\
&\leq2^{H_{\max}(A|B)_{\rho|\rho^{\prime}}}\frac{1}{1-\sqrt{1-\eps^{2}}},
\end{split}
\end{equation*}
where in the last line we used Equation \eqref{presley} and Inequality \eqref{mac}. Finally, taking the logarithm on both sides yields \eqref{valdes}.

The proof is concluded by the upper bound  
\begin{equation*}
\begin{split}
&P(\tilde{\rho}_{AB}, \rho_{AB}^{\prime})= P(\Pi_B\rho_{AB}\Pi_B, \rho_{AB}^{\prime})\\
&\leq P(\Pi_B\rho_{AB}\Pi_B, \Pi_B\rho_{AB}^{\prime}\Pi_B)+P(\Pi_B\rho_{AB}^{\prime}\Pi_B, \rho_{AB}^{\prime})\\
&\leq P(\rho_{AB}, \rho_{AB}^{\prime})+\sqrt{2\tr[\Pi_B^{\bot}\rho_{AB}^{\prime}]-(\tr[\Pi_B^{\bot}\rho_{AB}^{\prime}])^{2}}\\
&\leq \eps^{\prime}+\eps 
\end{split}
\end{equation*}
where we use Inequality \eqref{schlumf2} and the fact that the function $\sqrt{2t-t^{2}}$ is monotonously increasing in the interval $[0, 1]$.
\end{proof}

\begin{lemma}
\label{sava}
Let $\eps>0$ and $\rho_{ABC}\in\Su_{\leq}(\Hil_{ABC})$ be pure. Then there exist a projector $\Pi_{AC}$ on $\Hil_{AC}$ and a state $\tilde{\rho}_{ABC}=\Pi_{AC}\rho_{ABC}\Pi_{AC}$ such that $\tilde{\rho}_{ABC}\approx_{\eps}\rho_{ABC}$ and 
\begin{equation*}
H_{\min}(A|B)_{\rho}\leq H_{\min}(A|B)_{\rho|\tilde{\rho}}+\log\left(\frac{1}{1-\sqrt{1-\eps^{2}}}\right).
\end{equation*}
\end{lemma}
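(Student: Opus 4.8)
The plan is to treat this as the min-entropy analogue of Lemma~\ref{supportinglemma}, exploiting that for a pure $\rho_{ABC}$ the min-entropy is dual to a max-entropy on the complementary cut, $H_{\min}(A|B)_\rho=-H_{\max}(A|C)_\rho$, whose semidefinite program (Lemma~\ref{maxinflemma}) naturally lives on $\Hil_{AC}$; this is exactly what produces a projector on $AC$ rather than on $B$. I would first isolate the operator inequality to be proved. Write $\rho_{ABC}=\ketbra{\psi}$, let $\mu:=H_{\min}(A|B)_\rho$, and let $\sigma_B^\ast$ be an optimal conditioning state, so that $\rho_{AB}\le 2^{-\mu}\,\id_A\otimes\sigma_B^\ast$. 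Setting $c:=\frac{1}{1-\sqrt{1-\eps^2}}$ and recalling $H_{\min}(A|B)_{\rho|\tilde\rho}=\sup\{\lambda:\rho_{AB}\le 2^{-\lambda}\id_A\otimes\tilde\rho_B\}$, the asserted bound is equivalent to the single operator inequality
\[ \rho_{AB}\le c\,2^{-\mu}\,\id_A\otimes\tilde\rho_B . \]
Hence it suffices to construct a projector $\Pi_{AC}$ for which $\tilde\rho_{ABC}=\Pi_{AC}\rho_{ABC}\Pi_{AC}=\ketbra{\psi'}$, with $\ket{\psi'}=\Pi_{AC}\ket\psi$ (again pure, sub-normalized), satisfies both $\tilde\rho_{ABC}\approx_\eps\rho_{ABC}$ and this inequality.

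For the construction I would mirror the choice of $\Pi_B$ in Lemma~\ref{supportinglemma}, but spectrally projecting an operator on $\Hil_{AC}$. Take an optimal primal operator $\tilde Z_{AC}$ for the program of $H_{\max}(A|C)_\rho$ from Lemma~\ref{maxinflemma} (here $B$ purifies $\rho_{AC}$), so that $\rho_{ABC}\le\tilde Z_{AC}\otimes\id_B$ and $\|\tilde Z_C\|_\infty=2^{H_{\max}(A|C)_\rho}=2^{-\mu}$, using the duality above. I would then let $\Pi_{AC}$ be the minimum-rank projector in the eigenbasis of $\tilde Z_{AC}$ that discards its small-eigenvalue part, chosen so that the discarded weight satisfies $\tr[\Pi_{AC}^\bot\rho_{AC}]\le 1-\sqrt{1-\eps^2}$. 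Taking the $B$-marginal of $\rho_{ABC}\le\tilde Z_{AC}\otimes\id_B$ shows $\rho_{AC}$ is controlled by $\tilde Z_{AC}$, so the discarded subspace carries little $\rho_{AC}$-weight and the threshold is attainable.

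Property $\tilde\rho_{ABC}\approx_\eps\rho_{ABC}$ is then the routine half. Since $\tilde\rho_{ABC}=\Pi_{AC}\rho_{ABC}\Pi_{AC}$ and $t:=\tr[\Pi_{AC}^\bot\rho_{AC}]\le 1-\sqrt{1-\eps^2}$, the gentle-measurement estimate \eqref{schlumf2} gives $P(\tilde\rho_{ABC},\rho_{ABC})\le\sqrt{2t-t^2}$; as $2t-t^2$ is increasing on $[0,1]$ and equals $\eps^2$ at $t=1-\sqrt{1-\eps^2}$, this yields $\tilde\rho_{ABC}\approx_\eps\rho_{ABC}$. This is the same calibration of the threshold against the correction term as in the closing estimate of Lemma~\ref{supportinglemma}.

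The main obstacle is the entropy inequality $\rho_{AB}\le c\,2^{-\mu}\,\id_A\otimes\tilde\rho_B$, which I would establish by a trace-ratio estimate parallel to \eqref{upgrade}. Concretely, sandwiching $\rho_{ABC}\le\tilde Z_{AC}\otimes\id_B$ by $\Pi_{AC}$ and transporting the resulting inequality through the purification (using $\ket{\psi'}=\Pi_{AC}\ket\psi$ to pass between statements on $\Hil_{AC}$ and the conditioning operator $\tilde\rho_B$ on $\Hil_B$) reduces the target to controlling a ratio of the form $\frac{\tr[\omega_{AC}\tilde Z_{AC}]}{\tr[\omega_{AC}]}$, which I would bound by inserting the test operator $\omega_{AC}=(\Pi_{AC}^\bot+\Pi_{AC}')\rho_{AC}(\Pi_{AC}^\bot+\Pi_{AC}')$, with $\Pi_{AC}'$ the projector onto the extremal eigenvector of the projected operator. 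The minimum-rank choice of $\Pi_{AC}$ forces $\tr[(\Pi_{AC}^\bot+\Pi_{AC}')\rho_{AC}]\ge 1-\sqrt{1-\eps^2}$ --- exactly as in \eqref{mac} --- and this is what supplies the factor $c$. The delicate point, and the step I expect to require the most care, is precisely this bookkeeping between the $AC$-operator inequality and the $B$-conditioning inequality across the purification; once it is set up correctly, the remaining manipulations are the direct min-entropy transcriptions of those in Lemma~\ref{supportinglemma}, and the result can be cross-checked against that lemma through the duality $H_{\min}(A|B)=-H_{\max}(A|C)$.
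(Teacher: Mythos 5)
Your proposal defers precisely the step on which everything hinges, and that step cannot be completed. Your opening reduction commits you to exhibiting a projector $\Pi_{AC}$ with $\rho_{AB}\le c\,2^{-\mu}\,\id_A\otimes\tilde\rho_B$, where $\rho_{AB}$ is the \emph{original} state and $\tilde\rho_B=\tr_{AC}[\Pi_{AC}\rho_{ABC}\Pi_{AC}]$; the later paragraphs only gesture at how the "bookkeeping across the purification" might deliver this, and in fact no projector does. Take $P(a_0,b_0)=2^{-n}$, spread the remaining mass uniformly over $2^{n}$ values $b\neq b_0$ and, for each such $b$, over $2^{n}$ values of $a$; let $\rho_{AB}$ be the corresponding classical state with purification $\ket{\rho}=\sum_{a,b}\sqrt{P(a,b)}\,\ket{a}_A\ket{b}_B\ket{c_{ab}}_C$. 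Then $H_{\min}(A|B)_{\rho}\ge n-1$. Writing $\ket{\rho}=\sum_b\ket{b}_B\otimes\ket{\gamma_b}_{AC}$, every projector $\Pi_{AC}$ satisfies
\begin{equation*}
\bra{b_0}\tilde\rho_B\ket{b_0}=\|\Pi_{AC}\ket{\gamma_{b_0}}\|^{2}\le\|\ket{\gamma_{b_0}}\|^{2}=P(b_0)=P(a_0,b_0),
\end{equation*}
so testing your target inequality on $\ket{a_0}\otimes\ket{b_0}$ forces $c\,2^{-\mu}\ge 1$, i.e.\ $H_{\min}(A|B)_{\rho|\tilde\rho}\le 0$ for \emph{every} $\Pi_{AC}$, while $H_{\min}(A|B)_{\rho}$ grows with $n$. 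So the "delicate point" you postpone is not a matter of careful transcription of Lemma~\ref{supportinglemma}: it is where the argument necessarily breaks.

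What is provable --- and what Lemma~21 of \cite{TSSR10}, to whose proof the paper defers (changing only the threshold $\eps^{2}/2$ into $1-\sqrt{1-\eps^{2}}$), actually provides --- is the same bound with the \emph{projected} state on the left, $\tilde\rho_{AB}\le c\,2^{-\mu}\,\id_A\otimes\tilde\rho_B$, i.e.\ $H_{\min}(A|B)_{\rho}\le H_{\min}(A|B)_{\tilde\rho|\tilde\rho}+\log c$; the first subscript in the lemma's display must be read as $\tilde\rho$, and this is also the form the chain-rule application needs. Note that your duality route would not reach even this corrected target: compressing $\rho_{ABC}\le\tilde Z_{AC}\otimes\id_B$ with $\Pi_{AC}$ and feeding the result back into Lemma~\ref{maxinflemma} controls $H_{\max}(A|C)_{\tilde\rho}=-H_{\min}(A|B)_{\tilde\rho}$, a quantity in which the conditioning operator on $B$ is \emph{optimized}; a bound on it says nothing about conditioning on the specific operator $\tilde\rho_B$, which is the whole content of the lemma. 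The proof the paper invokes works differently: with $\sigma_B$ optimal for $H_{\min}(A|B)_{\rho}$, one tilts the pure state to $\ket{\hat\rho}=(\id_{AC}\otimes\sigma_B^{-1/2})\ket{\rho}$ and lets $\Pi_{AC}$ be the spectral projector of $\hat\rho_{AC}$ onto eigenvalues $\ge t:=1-\sqrt{1-\eps^{2}}$ --- a truncation of the Schmidt decomposition of $\ket{\hat\rho}$ across the $B$-versus-$AC$ cut (this is the "dual projector" $\Pi_B$ of $\Pi_{AC}$ in the paper's wording). The discarded weight is at most $t\,\tr\sigma_B\le t$, giving $\eps$-closeness via \eqref{schlumf2}, while every surviving Schmidt coefficient being $\ge t$ yields, after untilting with $\sigma_B^{1/2}$, exactly $\tilde\rho_{AB}\le(2^{-\mu}/t)\,\id_A\otimes\tilde\rho_B$. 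Your instinct about the projector is incidentally not far off --- $\hat\rho_{AC}$ is an optimal primal plan for the SDP of $H_{\max}(A|C)_{\rho}$ --- but the inequality such a truncation can deliver concerns $\tilde\rho_{AB}$, not $\rho_{AB}$, and its mechanism is Schmidt truncation rather than a trace-ratio estimate in the style of \eqref{upgrade}.
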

As already remarked, the proof of this lemma follows exactly the one of
Lemma 21 in \cite{TSSR10}, up to the following modification. Instead of defining the dual projector $\Pi_B$ of $\Pi_{AC}$ with regard to the pure state $\rho_{ABC}$ such that it satisfies 
\(\tr[\Pi_B^{\bot}\rho_B]\leq \eps^{2}/2\), we demand 
\begin{equation*}
\tr[\Pi_B^{\bot}\rho_B]\leq 1-\sqrt{1-\eps^{2}}. 
\end{equation*}
In this way on the one hand the tighter bound \eqref{schlumf2} yields
\begin{equation*}
\begin{split}
P(\tilde{\rho}_{ABC}, \rho_{ABC})&\leq\sqrt{2\tr[\Pi_{AC}^{\bot}\rho_{ABC}]-(\tr[\Pi_{AC}^{\bot}\rho_{ABC}])^{2}}\\
&=\sqrt{2\tr[\Pi_{B}^{\bot}\rho_{B}]-(\tr[\Pi_B^{\bot}\rho_{B}])^{2}}\\
&\leq\eps
\end{split}
\end{equation*}
which is the same as in Lemma 21 and on the other hand the correction term $\log(2/\eps^{2})$ in Lemma 21 is replaced by the tighter expression $\log(1/1-\sqrt{1-\eps^{2}})$. 

\subsection{The $\eps$-Smooth $S$-Entropy}
For the proof of the chain rules we define an auxiliary entropy measure called $\eps$-smooth 
$S$-entropy\footnote{The idea for this entropy measure was proposed by Robert K\"onig.}.

We assume that $\rho_{AB}\in\Su_{\leq}(\Hil_{AB})$ and $\sigma_B\in\Su_{\leq}(\Hil_B)$ with $\supp(\rho_{B})\subseteq\supp(\sigma_B)$  and denote  for every $\lambda\in\mathbb{R}$ the projector onto the eigenspace corresponding to the strictly negative eigenvalues of the operator $2^{\lambda}\rho_{AB}-\sigma_{B}$ by $P_{AB}^{\lambda}$.
\begin{definition}
\label{def1}
Let $\eps>0$. Then the $\eps$-smooth $S$-entropy of A conditioned on B of $\rho_{AB}$ relative to $\sigma_B$ is defined as 
\begin{equation}
S^{\eps}(A|B)_{\rho|\sigma}:=\inf\bigl\{\lambda\in\mathbb{R}:\tr[P_{AB}^{\lambda}\rho_{AB}]\leq\eps\bigr\}. 
\end{equation}
\end{definition}
\noindent Intuitively, this evaluates in a $\eps$-smoothed way the smallest $\lambda$ for which $\rho_{AB} \geq 2^{-\lambda} \sigma_B$ holds. This should be contrasted with the min-entropy, which
evaluates to the largest $\lambda$ such that $\rho_{AB} \leq 2^{-\lambda} \sigma_B$. The $S$-entropy is a technical tool only, and our results are expressed in terms of the max-entropy instead. In this spirit,
the next lemma gives the upper bound of the  $\eps$-smooth $S$-entropy in terms of the max-entropy. 
\begin{lemma}
\label{lemma2}
Let $\eps>0$, $\rho_{AB}\in\Su_{\leq}(\Hil_{AB})$ and $\sigma_B\in\Su_{\leq}(\Hil_B)$. Then,
\begin{equation}
\label{konradaud}
S^{\eps}(A|B)_{\rho|\sigma}\leq H_{\max}(A|B)_{\rho|\sigma} +\log\left(\frac{1}{\eps^{2}}\right).
\end{equation}
\end{lemma}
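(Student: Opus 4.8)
The plan is to relate the $S$-entropy quantity $S^{\eps}(A|B)_{\rho|\sigma}$ directly to the fidelity $F(\rho_{AB},\id_A\otimes\sigma_B)^2$, since by Definition~\ref{maxentropy} the latter is $2^{H_{\max}(A|B)_{\rho|\sigma}}$. The key structural fact is that the projectors $P_{AB}^{\lambda}$ appearing in the definition of the $S$-entropy are exactly the spectral projectors onto the negative part of $2^{\lambda}\rho_{AB}-\id_A\otimes\sigma_B$ (writing $\sigma_B$ for $\id_A\otimes\sigma_B$ where meaningful), so they give us precise control over where $2^{\lambda}\rho_{AB}$ dominates $\id_A\otimes\sigma_B$ and where it does not.

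First I would fix $\lambda := H_{\max}(A|B)_{\rho|\sigma} + \log(1/\eps^2)$ and show that this $\lambda$ belongs to the set $\{\lambda : \tr[P_{AB}^{\lambda}\rho_{AB}] \le \eps\}$, which by the infimum in Definition~\ref{def1} immediately yields the claimed bound. The main estimate is a lower bound on the fidelity in terms of $\tr[P_{AB}^{\lambda}\rho_{AB}]$. On the support of $P_{AB}^{\lambda}$ we have $2^{\lambda}\rho_{AB} < \id_A\otimes\sigma_B$, i.e.\ $\rho_{AB} < 2^{-\lambda}\,\id_A\otimes\sigma_B$ restricted to that subspace; the plan is to use this operator inequality to bound the overlap $F(\rho_{AB},\id_A\otimes\sigma_B)^2 = \|\sqrt{\rho_{AB}}\sqrt{\id_A\otimes\sigma_B}\|_1^2$ from below by something proportional to $\tr[P_{AB}^{\lambda}\rho_{AB}]^2$. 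Concretely, I would insert $P_{AB}^{\lambda}$ into the fidelity expression and apply Cauchy--Schwarz for the Hilbert--Schmidt inner product together with the pointwise inequality $\sigma_B > 2^{\lambda}\rho_{AB}$ on $\supp P_{AB}^{\lambda}$ to extract a factor of $2^{\lambda/2}$.

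The chain of estimates should produce, schematically,
\begin{equation*}
F(\rho_{AB},\id_A\otimes\sigma_B)^2 \;\geq\; 2^{-\lambda}\,\bigl(\tr[P_{AB}^{\lambda}\rho_{AB}]\bigr)^2,
\end{equation*}
or an analogous inequality with the correct power. Rearranging and taking logarithms then gives $\log\bigl(\tr[P_{AB}^{\lambda}\rho_{AB}]\bigr)^2 \le \lambda + H_{\max}(A|B)_{\rho|\sigma}$; substituting the chosen value of $\lambda$ collapses this to $\tr[P_{AB}^{\lambda}\rho_{AB}] \le \eps$, which is exactly the membership condition we need. Taking the infimum over admissible $\lambda$ completes the argument.

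The step I expect to be the main obstacle is the fidelity lower bound itself: one must carefully handle the fact that $\sqrt{\rho_{AB}}$ and $\id_A\otimes\sigma_B$ do not commute, so the operator inequality on $\supp P_{AB}^{\lambda}$ cannot be plugged in scalar-wise. The right tool is likely to sandwich with the projector $P_{AB}^{\lambda}$ inside the trace norm and use that $\|\sqrt{\rho}\sqrt{\sigma}\|_1 \ge \tr[P\sqrt{\rho}\sqrt{\sigma}P]$-type lower bounds for the appropriate projector, combined with $P_{AB}^{\lambda}\sigma_B P_{AB}^{\lambda} \ge 2^{\lambda} P_{AB}^{\lambda}\rho_{AB}P_{AB}^{\lambda}$. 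Getting the noncommutative Cauchy--Schwarz estimate to yield the clean factor $\eps^{-2}$ (rather than a weaker $\eps^{-1}$ or an extra dimension factor) is the delicate part, and matching the constant in \eqref{konradaud} will require choosing the projector sandwich so that the square in $\tr[P_{AB}^{\lambda}\rho_{AB}]^2$ emerges naturally from the Cauchy--Schwarz step.
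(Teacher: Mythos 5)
Your high-level skeleton is actually the same as the paper's: the paper proves the contrapositive (it takes $\lambda=S^{\eps}(A|B)_{\rho|\sigma}-\delta$, so that $\tr[P^{\lambda}_{AB}\rho_{AB}]\geq\eps$, and derives $2^{\frac{1}{2}(H_{\max}(A|B)_{\rho|\sigma}-S^{\eps}(A|B)_{\rho|\sigma}+\delta)}\geq\eps$), which is the same arithmetic as your direct feasibility argument. But as written your central inequality has the wrong sign in the exponent. What your plan needs, and what the paper establishes, is
\begin{equation*}
F(\rho_{AB},\id_A\otimes\sigma_B)\;\geq\;2^{+\lambda/2}\,\tr[P^{\lambda}_{AB}\rho_{AB}]\,,
\end{equation*}
i.e.\ $F^{2}\geq 2^{+\lambda}\bigl(\tr[P^{\lambda}_{AB}\rho_{AB}]\bigr)^{2}$, not $F^{2}\geq 2^{-\lambda}(\cdots)^{2}$. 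With your version, substituting $\lambda=H_{\max}(A|B)_{\rho|\sigma}+\log(1/\eps^{2})$ only yields $\tr[P^{\lambda}_{AB}\rho_{AB}]\leq 2^{H_{\max}(A|B)_{\rho|\sigma}}/\eps$, which does not collapse to $\eps$; with the corrected sign one gets $\log\bigl(\tr[P^{\lambda}_{AB}\rho_{AB}]\bigr)^{2}\leq H_{\max}(A|B)_{\rho|\sigma}-\lambda=\log\eps^{2}$, as required. (The corrected sign is also what intuition dictates: on $\supp P^{\lambda}_{AB}$ one has $\id_A\otimes\sigma_B\geq 2^{\lambda}\rho_{AB}$, so a large $\tr[P^{\lambda}_{AB}\rho_{AB}]$ at large $\lambda$ forces a large overlap with $\sigma_B$.)

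The genuine gap, however, is the proof of this key inequality, which you defer to ``noncommutative Cauchy--Schwarz'' plus projector sandwiching; that route fails. The first move, $\|\sqrt{\rho_{AB}}\sqrt{\sigma_B}\|_{1}\geq\bigl|\tr[P^{\lambda}_{AB}\sqrt{\rho_{AB}}\sqrt{\sigma_B}P^{\lambda}_{AB}]\bigr|$, is fine (trace-norm duality with the contraction $P^{\lambda}_{AB}$), but the next one is not: $P^{\lambda}_{AB}$ commutes with $2^{\lambda}\rho_{AB}-\id_A\otimes\sigma_B$ but with neither $\sqrt{\rho_{AB}}$ nor $\sqrt{\sigma_B}$, so the sandwiched overlap (which need not even be real) cannot be bounded below using $P^{\lambda}_{AB}(\id_A\otimes\sigma_B)P^{\lambda}_{AB}\geq 2^{\lambda}P^{\lambda}_{AB}\rho_{AB}P^{\lambda}_{AB}$; note also that $(P\sigma P)^{1/2}\neq P\sqrt{\sigma}P$ and that conjugation by a projector can \emph{increase} fidelity---e.g.\ $F(\ketbra{0},\ketbra{1})=0$ while $F(P\ketbra{0}P,P\ketbra{1}P)=\tfrac{1}{2}$ for $P=\ketbra{+}$---so no sandwich argument yields a lower bound on $F$ by operator monotonicity alone. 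The step that actually closes this in the paper is an imported nontrivial result, Audenaert's trace inequality (\cite{Aud}, restated in Appendix~\ref{sec:appendix-tech-lemmas}), in the form of Corollary~\ref{konradcor}: with $Q=\rho_{AB}$ and $R=2^{-\lambda}\id_A\otimes\sigma_B$,
\begin{equation*}
\tr\bigl[\sqrt{Q}\sqrt{R}\bigr]\;\geq\;\tr[P_{+}R+P_{-}Q]\;\geq\;\tr[P^{\lambda}_{AB}\rho_{AB}]\,,
\end{equation*}
since $P_{-}=P^{\lambda}_{AB}$; combined with $\|\sqrt{\rho_{AB}}\sqrt{\sigma_B}\|_{1}\geq\tr[\sqrt{\rho_{AB}}\sqrt{\sigma_B}]$ this is exactly the displayed bound, with the clean factor $2^{\lambda/2}$ and no dimension dependence. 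This quantum-Chernoff-bound-type inequality (a Fuchs--van-de-Graaf-type statement for sub-normalized operators) is not recoverable from Hilbert--Schmidt Cauchy--Schwarz; without invoking it or an equivalent, your proof cannot be completed. Once it is imported, your direct argument goes through and is equivalent to the paper's.
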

\begin{proof}
Let $\lambda_{\inf}\in\mathbb{R}$ be the infimum in Definition \ref{def1}, that is, $\lambda_{\inf}=S^{\eps}(A|B)_{\rho|\sigma}$, let $\lambda=\lambda_{\inf}-\delta$ where $\delta>0$ and let $P_{AB}^{\pm}$ denote the projector onto the nonnegative and strictly negative eigenvalues of $\rho_{AB}-2^{-\lambda}\sigma_B$, respectively. Then, a straightforward computation yields
\begin{align}
2^{\frac{1}{2}H_{\max}(A|B)_{\rho|\sigma}-\frac{1}{2}S^{\eps}(A|B)_{\rho|\sigma}+\frac{1}{2}\delta}&=\|\sqrt{\rho_{AB}}\sqrt{\sigma_{B}}\|_{1}2^{-\frac{1}{2}\lambda}\nonumber\\
&\geq\tr[\sqrt{\rho_{AB}}\sqrt{\sigma_B}]2^{-\frac{1}{2}\lambda}\nonumber\\
&=\tr[\sqrt{\rho_{AB}}\sqrt{2^{-\lambda}\sigma_B}]\nonumber\\
&\geq\tr[P_{AB}^{+}2^{-\lambda}\sigma_B+P_{AB}^{-}\rho_{AB}]\nonumber\\
&\geq\tr[P_{AB}^{-}\rho_{AB}]\nonumber\\
&\geq\eps. \label{sexy}
\end{align}
The first inequality follows from Lemma 9.5 in \cite{nielsen00}. In the fourth line we have applied Corollary \ref{konradcor} and in the last line have used the fact that $P_{AB}^{-}$ is identical with the projector $P_{AB}^{\lambda}$ and $\tr[P_{AB}^{\lambda}\rho_{AB}] \geq \eps$ by definition of $\lambda$ for any $\delta > 0$. Finally, taking the logarithm on both sides of \eqref{sexy} and subsequently taking the limit $\delta\to 0$ we obtain \eqref{konradaud}. 
\end{proof}
\section{Main Results}
\label{sec:main-results}
This section contains the main result of this paper: a derivation of
the previously unknown chain rules for smooth min- and
max-entropies. To simplify presentation hereafter, we introduce the
function 
\begin{align*} 
  f: \, \eps \mapsto \log \frac{1}{1-\sqrt{1-\eps^{2}}}
\end{align*}
 that appears as an error term in the chain rules. It vanishes as $\eps \to 1$ and grows logarithmically in $\frac{1}{\eps}$ when $\eps \to 0$.

As remarked in the introduction, the explicit form of one of the chain rules has already been derived in Lemma A.6 in \cite{dupuis10}. Following the steps of the original proof and using the improved bound from Lemma \ref{sava} we can tighten the chain rule inequality presented in Lemma A.6 of \cite{dupuis10} as follows: 
\begin{thm}
Let $\eps>0$, $\eps^{\prime}$, $\eps^{\prime\prime}\geq0$ and $\rho_{ABC}\in\Su_{\leq}(\Hil_{ABC})$. Then,
\begin{equation*}
H_{\min}^{\eps+\eps^{\prime}+2\eps^{\prime\prime}}(AB|C)_{\rho}\geq H_{\min}^{\eps^{\prime\prime}}(A|BC)_{\rho}+H_{\min}^{\eps^{\prime}}(B|C)_{\rho} -f(\eps).
\end{equation*}
\end{thm}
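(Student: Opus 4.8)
The plan is to prove the inequality directly from the variational definition of the smooth min-entropy (Definition \ref{smoothmin}): I will exhibit a single subnormalized state $\rho^{\star}_{ABC}$ with $P(\rho^{\star},\rho)\le\eps+\eps^{\prime}+2\eps^{\prime\prime}$ together with an operator $\tau_C$ witnessing $H_{\min}(AB|C)_{\rho^{\star}}\ge\mu+\nu-f(\eps)$, where $\mu:=H_{\min}^{\eps^{\prime\prime}}(A|BC)_{\rho}$ and $\nu:=H_{\min}^{\eps^{\prime}}(B|C)_{\rho}$. At heart the argument is the elementary chaining of the two operator inequalities coming from the definitions of $\mu$ and $\nu$: if $\hat\rho_{ABC}\le 2^{-\mu}\,\id_A\otimes\sigma_{BC}$ and $\sigma_{BC}\le 2^{-\nu}\,\id_B\otimes\tau_C$, then $\hat\rho_{ABC}\le 2^{-(\mu+\nu)}\,\id_{AB}\otimes\tau_C$. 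The obstruction is that the optimal reference $\sigma_{BC}$ for the $A|BC$ entropy is a priori unrelated to $\rho_{BC}$, whereas the second inequality only controls states near $\rho_{BC}$; Lemma \ref{sava} is exactly the device that removes this mismatch.

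Concretely, I would first choose a smoothing $\hat\rho_{ABC}\approx_{\eps^{\prime\prime}}\rho_{ABC}$ attaining $\mu=H_{\min}(A|BC)_{\hat\rho}$, purify it to $\hat\rho_{ABCE}$, and apply Lemma \ref{sava} with the identification (numerator $A$, conditioning register $BC$, purifying system $E$) and smoothing parameter $\eps$. This produces a projector $\Pi_{AE}$ and the state $\tilde\rho_{ABCE}=\Pi_{AE}\hat\rho_{ABCE}\Pi_{AE}\approx_{\eps}\hat\rho_{ABCE}$ with $H_{\min}(A|BC)_{\hat\rho}\le H_{\min}(A|BC)_{\hat\rho|\tilde\rho}+f(\eps)$, that is,
\[
\hat\rho_{ABC}\le 2^{-(\mu-f(\eps))}\,\id_A\otimes\tilde\rho_{BC}.
\]
The value of Lemma \ref{sava} is that the new reference $\tilde\rho_{BC}$ is the marginal of a projected copy of $\hat\rho$, so by monotonicity of the purified distance under partial trace $\tilde\rho_{BC}\approx_{\eps}\hat\rho_{BC}\approx_{\eps^{\prime\prime}}\rho_{BC}$, whence $\tilde\rho_{BC}\approx_{\eps+\eps^{\prime\prime}}\rho_{BC}$. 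Separately I would take a smoothing $\rho^{\prime}_{BC}\approx_{\eps^{\prime}}\rho_{BC}$ attaining $\nu$, with reference $\tau_C$, so that $\rho^{\prime}_{BC}\le 2^{-\nu}\,\id_B\otimes\tau_C$; the triangle inequality then gives $\rho^{\prime}_{BC}\approx_{\eps+\eps^{\prime}+\eps^{\prime\prime}}\tilde\rho_{BC}$.

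The crux is to combine these, replacing the reference $\tilde\rho_{BC}$ in the displayed inequality by the $B|C$-good state $\rho^{\prime}_{BC}$, following the reference-replacement step of \cite{dupuis10}. Since $\tilde\rho_{BC}$ and $\rho^{\prime}_{BC}$ are close but not ordered, this cannot be done by monotonicity; instead one transports the inequality to a state $\rho^{\star}_{ABC}$ obtained from $\hat\rho_{ABC}$ by a change of reference on the $BC$ register (a conjugation built from $\sqrt{\rho^{\prime}_{BC}}$ and the pseudo-inverse of $\sqrt{\tilde\rho_{BC}}$), which preserves the operator inequality and yields $\rho^{\star}_{ABC}\le 2^{-(\mu-f(\eps))}\,\id_A\otimes\rho^{\prime}_{BC}\le 2^{-(\mu+\nu-f(\eps))}\,\id_{AB}\otimes\tau_C$, hence $H_{\min}(AB|C)_{\rho^{\star}}\ge\mu+\nu-f(\eps)$. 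The accompanying distance estimate $P(\rho^{\star},\hat\rho)\le P(\rho^{\prime}_{BC},\tilde\rho_{BC})\le\eps+\eps^{\prime}+\eps^{\prime\prime}$, combined with $\hat\rho\approx_{\eps^{\prime\prime}}\rho$, gives the required $\rho^{\star}\approx_{\eps+\eps^{\prime}+2\eps^{\prime\prime}}\rho$, and feeding $\rho^{\star}$ and $\tau_C$ into the definition of the smooth min-entropy closes the argument.

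I expect the main obstacle to be precisely this reference-replacement and its purified-distance accounting: one must check that transporting the chained inequality from the reference $\tilde\rho_{BC}$ to the nearby $\rho^{\prime}_{BC}$ costs no more than $P(\tilde\rho_{BC},\rho^{\prime}_{BC})$ in the final state while leaving the operator inequality intact, and that the three contributions $\eps^{\prime\prime}$ (the $A|BC$ smoothing), $\eps$ (the projection of Lemma \ref{sava}), and $\eps^{\prime}$ (the $B|C$ smoothing) assemble into exactly $\eps+\eps^{\prime}+2\eps^{\prime\prime}$. By contrast, the remaining ingredients—the chaining of the two operator inequalities and the appeal to Lemma \ref{sava} to tame the $A|BC$ reference—are routine, and it is only the use of Lemma \ref{sava} in place of the cruder Lemma~21 of \cite{TSSR10} that is responsible for the sharpened error term $f(\eps)$ replacing $\log(2/\eps^{2})$.
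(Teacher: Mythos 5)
Your high-level architecture is the right one, and it matches the proof the paper actually invokes (the paper does not spell this theorem out; it points to Lemma A.6 of \cite{dupuis10} and notes that Lemma \ref{sava} tightens the error term to $f(\eps)$): smooth the $A|BC$ term, use Lemma \ref{sava} so that the reference becomes a marginal $\tilde\rho_{BC}$ close to $\rho_{BC}$, chain the two operator inequalities, and account the distances as $(\eps+\eps'+\eps'')+\eps''$. The gap is at exactly the step you yourself call the crux, and it is a genuine one: the claim that $\rho^{\star}_{ABC}=\bigl(\id_A\otimes\sqrt{\rho'_{BC}}\,\tilde\rho_{BC}^{-1/2}\bigr)\hat\rho_{ABC}\bigl(\id_A\otimes\tilde\rho_{BC}^{-1/2}\sqrt{\rho'_{BC}}\bigr)$ satisfies $P(\rho^{\star},\hat\rho)\leq P(\rho'_{BC},\tilde\rho_{BC})$. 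This conjugation is not a trace non-increasing CPM, so Lemma \ref{cptpmdist} gives nothing, and the bound is false in general. Note first that by monotonicity of $P$ under partial trace, $P(\rho^{\star},\hat\rho)\geq P(\rho^{\star}_{BC},\hat\rho_{BC})$, which (when $\hat\rho_{BC}=\tilde\rho_{BC}$) already equals $P(\rho'_{BC},\tilde\rho_{BC})$; so your inequality could only ever hold with equality, i.e.\ if the conjugation were an optimal Uhlmann coupling, which it is not. Concretely, take $C$ trivial, $\hat\rho_{AB}=\ketbra{a}\otimes\ketbra{0}$ (so $\tilde\rho_B=\hat\rho_B=\ketbra{0}$, no projection needed) and $\rho'_B=\ketbra{\theta}$ with $\ket{\theta}=\cos\theta\ket{0}+\sin\theta\ket{1}$. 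Then $\rho^{\star}_{AB}=\cos^{2}\theta\,\ketbra{a}\otimes\ketbra{\theta}$, whose generalized fidelity with $\hat\rho_{AB}$ is $\cos^{2}\theta$, hence $P(\rho^{\star},\hat\rho)=\sqrt{1-\cos^{4}\theta}\approx\sqrt{2}\,\sin\theta$, while $P(\rho'_B,\tilde\rho_B)=\sin\theta$. The structural reason is that the overlap achievable by a bare conjugation is of the form $\tr[\sqrt{\rho'}\sqrt{\tilde\rho}\,]$, which is strictly smaller than $F(\rho',\tilde\rho)=\|\sqrt{\rho'}\sqrt{\tilde\rho}\|_{1}$ unless $\sqrt{\rho'}\sqrt{\tilde\rho}\geq0$; recovering the full fidelity requires inserting the unitary from the polar decomposition of $\sqrt{\tilde\rho_{BC}}\sqrt{\rho'_{BC}}$ (Uhlmann's theorem), and even then the computation needs $\tilde\rho_{BC}$ to be the exact marginal of the state being conjugated, which it is not here: $\hat\rho_{BC}\neq\tilde\rho_{BC}$ (they differ by up to $\eps$). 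A related symptom of the same problem is that $\tr\rho^{\star}$ can exceed $1$, so $\rho^{\star}$ need not even be a legitimate candidate in Definition \ref{smoothmin}.

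Because of this, your argument only yields a smoothing radius strictly larger than $\eps+\eps'+2\eps''$ (inflated by a $\sqrt{2}$-type factor on the $(\eps+\eps'+\eps'')$ contribution, plus uncontrolled support/normalization effects), so the theorem as stated does not follow. What is missing is precisely the Uhlmann-type reference-replacement machinery that \cite{dupuis10} uses at this point — in the present paper's toolbox, Corollary \ref{cor} (purified-distance-preserving extensions) together with the special form of the state produced by Lemma \ref{sava} (a projected pure state, whose dual-projector structure on $BC$ is what lets the operator inequality and the distance estimate coexist). The remaining ingredients of your proposal — the choice of optimizers, the application of Lemma \ref{sava} with the correct identification of systems, the chaining $\hat\rho_{ABC}\leq 2^{-(\mu-f(\eps))}\id_A\otimes\tilde\rho_{BC}$ and $\rho'_{BC}\leq 2^{-\nu}\id_B\otimes\tau_C$, and the final triangle-inequality bookkeeping — are correct and coincide with the intended proof; but the crux you flagged is asserted rather than proved, and as formulated it fails.
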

In the remainder of that section we provide proofs for the remaining three pairs of chain rules. Due to the smooth duality relation \eqref{smoothdual} it is enough to prove only one of each pair.  
\begin{thm}
Let $\eps>0$, $\eps^{\prime}$, $\eps^{\prime\prime}\geq0$ and $\rho_{ABC}\in\Su_{\leq}(\Hil_{ABC})$. Then,
\begin{equation}
\label{cr2}
H_{\min}^{\eps^{\prime}}(AB|C)_{\rho}\leq H_{\min}^{\eps+\eps^{\prime}+2\eps^{\prime\prime}}(A|BC)_{\rho}+H_{\max}^{\eps^{\prime\prime}}(B|C)_{\rho}
+2f(\eps) \,.
\end{equation}
\end{thm}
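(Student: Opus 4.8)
The plan is to prove a non-smooth \emph{splitting} inequality for a single fixed state and then convert its three relative-entropy terms into the smooth entropies of \eqref{cr2}, spending one factor of $f(\eps)$ on the min-entropy term and one on the max-entropy term. The bookkeeping of the smoothing radii should produce exactly $\bar\eps := \eps+\eps'+2\eps''$ on the $A|BC$ term: $\eps'$ from the optimizer of the left-hand side, $\eps$ from the min-entropy smoothing lemma, and $2\eps''$ from a projection used on the $B|C$ side (the factor $2$ being the standard cost of a projection in the purified distance, cf.\ the bound \eqref{schlumf2} exploited in Lemmas \ref{supportinglemma} and \ref{sava}).

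First I would reduce the left-hand side to a relative min-entropy of one state. Since the smooth min-entropy is a maximum over the $\eps'$-ball, there is a $\tilde\rho_{ABC}\approx_{\eps'}\rho_{ABC}$ and an optimal $\sigma_C\in\Su_=(\Hil_C)$ with
\[
H_{\min}^{\eps'}(AB|C)_\rho = H_{\min}(AB|C)_{\tilde\rho|\sigma},\qquad \tilde\rho_{ABC}\le 2^{-H_{\min}(AB|C)_{\tilde\rho|\sigma}}\,\id_{AB}\otimes\sigma_C .
\]
Everything is then phrased in terms of $\tilde\rho$, and I reconnect to $\rho$ only through $\tilde\rho\approx_{\eps'}\rho$ at the end.

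The core step is to \emph{peel off} the $B$-system at a cost governed by $H_{\max}(B|C)$. Using the SDP expression \eqref{presley} for $H_{\max}(B|C)_{\tilde\rho|\sigma}$, let $Z_{BC}$ be the optimal primal operator, so $\tr[(\id_B\otimes\sigma_C)Z_{BC}] = 2^{H_{\max}(B|C)_{\tilde\rho|\sigma}}$; the candidate reference for the $A|BC$ min-entropy is the normalized operator $\omega_{BC}\propto (\id_B\otimes\sqrt{\sigma_C})\,Z_{BC}\,(\id_B\otimes\sqrt{\sigma_C})$, and the goal is the operator inequality
\[
\tilde\rho_{ABC}\ \le\ 2^{-H_{\min}(AB|C)_{\tilde\rho|\sigma}+H_{\max}(B|C)_{\tilde\rho|\sigma}}\ \id_A\otimes\omega_{BC},
\]
which, read through Definition \ref{minentropy}, gives the splitting $H_{\min}(AB|C)_{\tilde\rho|\sigma}\le H_{\min}(A|BC)_{\tilde\rho|\omega}+H_{\max}(B|C)_{\tilde\rho|\sigma}$. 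I expect this to be the main obstacle. The naive substitution $\id_{AB}\otimes\sigma_C=\id_A\otimes(\id_B\otimes\sigma_C)$ is hopelessly lossy, since $\id_B\otimes\sigma_C$ has trace $\dim\Hil_B$ while a normalized $\omega_{BC}$ has unit trace, so the required domination cannot hold as a global operator inequality; it can be enforced only after cutting $\tilde\rho$ down to the part of $\Hil_{BC}$ carrying its weight, i.e.\ after a projection. Controlling that projection --- showing it moves $\tilde\rho$ by at most $2\eps''$ in purified distance while changing the max-entropy by at most $f(\eps)$ --- and marrying its projector to the optimal $Z_{BC}$ is the delicate part.

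Finally I would smooth the two entropic terms into those of \eqref{cr2}. The $A|BC$ relative min-entropy is converted into $H_{\min}^{\bar\eps}(A|BC)_\rho$ by Lemma \ref{sava}, applied to a purification of $\tilde\rho$ with conditioning system $BC$: it replaces the specific reference $\omega_{BC}$ by a smoothed marginal, at the cost of one $f(\eps)$ and an extra $\eps$ of smoothing, the radius accumulating $\eps'$, $\eps$, and $2\eps''$ as above. The $B|C$ max-entropy term is matched to $H_{\max}^{\eps''}(B|C)_\rho$ using the optimal max-entropy state $\hat\rho_{BC}\approx_{\eps''}\rho_{BC}$, whose associated projector simultaneously realizes the support restriction required in the previous step and moves $\tilde\rho$ by at most $2\eps''$; Lemma \ref{supportinglemma} (or a projection argument of the same type) supplies the second $f(\eps)$. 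Summing the two corrections yields $+2f(\eps)$ and establishes \eqref{cr2}. As noted in the text, the dual inequality of the pair then follows from \eqref{smoothdual} with no separate argument.
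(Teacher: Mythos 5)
Your proposal correctly sets up the frame of the paper's argument (fix the min-entropy optimizer $\tilde\rho_{ABC}\approx_{\eps'}\rho_{ABC}$ and its $\sigma_C$, split off the $B$ system via a max-entropy-like quantity after a projection, then do purified-distance bookkeeping), but the step you yourself flag as ``the main obstacle'' is the entire content of the theorem, and the construction you propose for it does not work. Your candidate reference $\omega_{BC}\propto(\id_B\otimes\sqrt{\sigma_C})Z_{BC}(\id_B\otimes\sqrt{\sigma_C})$, with $Z_{BC}$ the primal optimizer of \eqref{presley}, is tied to $\sigma_C$ only through the trace identity $\tr[(\id_B\otimes\sigma_C)Z_{BC}]=2^{H_{\max}(B|C)_{\tilde\rho|\sigma}}$; the SDP constraint ($\tilde\rho_{BCD}\leq Z_{BC}\otimes\id_D$ for a purification) gives no operator domination of $\id_B\otimes\sigma_C$ by $\sqrt{\sigma_C}Z_{BC}\sqrt{\sigma_C}$, not even after restricting to a high-weight subspace of $\tilde\rho_{BC}$. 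The paper's missing ingredient is precisely the $S$-entropy of Definition \ref{def1} together with Lemma \ref{lemma2}: one takes the projector $P^{\lambda}_{BC}$ onto the strictly \emph{negative} eigenvalues of $2^{\lambda}\rho''_{BC}-\sigma_C$ with $\lambda\approx S^{\tilde\eps}(B|C)_{\rho''|\sigma}$, so that on the orthogonal complement $P^{\lambda\bot}\sigma_C P^{\lambda\bot}\leq 2^{\lambda}P^{\lambda\bot}\rho''_{BC}P^{\lambda\bot}$ holds \emph{by construction}; conjugating $\tilde\rho_{ABC}\leq 2^{-H_{\min}^{\eps'}(AB|C)_{\rho}}\sigma_C$ by $P^{\lambda\bot}$ then yields the splitting, and Lemma \ref{lemma2} converts $\lambda$ into $H_{\max}(B|C)_{\rho''|\sigma}+\log(1/\tilde\eps^{2})$, which with $\tilde\eps=1-\sqrt{1-\eps^{2}}$ is the sole source of the $2f(\eps)$.

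There are two further structural problems. First, your splitting produces $H_{\max}(B|C)_{\tilde\rho|\sigma}$, the max-entropy of the \emph{min-entropy optimizer's} marginal, whereas \eqref{cr2} requires $H_{\max}^{\eps''}(B|C)_{\rho}$, a minimum over the $\eps''$-ball around $\rho_{BC}$ that can be far smaller; no inequality of the form $H_{\max}(B|C)_{\tilde\rho|\sigma}\leq H_{\max}^{\eps''}(B|C)_{\rho}+\mathrm{const}$ holds in general. The paper avoids this by introducing the smooth max-entropy optimizer $\rho''_{BC}\approx_{\eps''}\rho_{BC}$ \emph{at the outset} and building the projector from $2^{\lambda}\rho''_{BC}-\sigma_C$; the $2\eps''$ then arises from two triangle-inequality hops through the extension of $\rho''_{BC}$ furnished by Corollary \ref{cor}, while the projection itself costs $\sqrt{2\tilde\eps-\tilde\eps^{2}}=\eps$. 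Your plan to retrofit the projector of the max-entropy optimizer $\hat\rho_{BC}$ onto a $Z_{BC}$ that was already chosen for $\tilde\rho$ is circular: the core inequality would have to be re-derived with $\hat\rho$ from the start. Second, your allocation of the error terms is off: no $f(\eps)$ is needed on the min-entropy side at all, since passing from the relative quantity $H_{\min}(A|BC)_{\cdot|\omega}$ to $H_{\min}(A|BC)$ (a maximum over references) and then to $H_{\min}^{\eps+\eps'+2\eps''}(A|BC)_{\rho}$ (given the distance bound) is free; Lemma \ref{sava} bounds the min-entropy in the \emph{opposite} direction and is used for the first chain rule, not this one. Both factors of $f(\eps)$ in \eqref{cr2} come from the single $\log(1/\tilde\eps^{2})$ term of Lemma \ref{lemma2}.
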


\begin{proof}
Let $\rho_{ABC}^{\prime}\approx_{\eps^{\prime}}\rho_{ABC}$, $\rho_{BC}^{\prime\prime}\approx_{\eps^{\prime\prime}}\rho_{BC}$ be such that 
\begin{align*}
H_{\min}(AB|C)_{\rho^{\prime}} &= H_{\min}^{\eps^{\prime}}(AB|C)_{\rho}, \quad \textrm{and}\\
H_{\max}(B|C)_{\rho^{\prime\prime}} &= H_{\max}^{\eps^{\prime\prime}}(B|C)_{\rho},
\end{align*}
and let $\sigma_C\in\Su_{\leq}(\Hil_C)$ be such that
\begin{align}
\label{ineq}
\rho_{ABC}^{\prime}&\leq 2^{-H_{\min}(AB|C)_{\rho^{\prime}}}\sigma_{C}=2^{-H_{\min}^{\eps'}(AB|C)_{\rho}}\sigma_C.
\end{align}
For every $\delta>0$ and $\tilde{\eps}>0$ there is a $\delta^{\prime}\in(0, \delta]$ such that the projector $P_{BC}^{\lambda}$ onto the strictly negative eigenvalues of the operator $2^{\lambda}\rho_{BC}^{\prime\prime}-\sigma_C$ with  $\lambda:=S^{\tilde{\eps}}(B|C)_{\rho^{\prime\prime}|\sigma}+\delta^{\prime}$, satisfies the constraint $\tr[P_{BC}^{\lambda}\rho_{BC}^{\prime\prime}]\leq\tilde{\eps}$ in Definition \ref{def1}. If $P_{BC}^{\lambda\bot}$ is the orthogonal complement of $P_{BC}^{\lambda}$, we have
\begin{equation}
\label{operator2}
P_{BC}^{\lambda\bot}\sigma_{C}P_{BC}^{\lambda\bot}\leq 2^{\lambda}P_{BC}^{\lambda\bot}\rho_{BC}^{\prime\prime}P_{BC}^{\lambda\bot}.
\end{equation} 
A conjugation of \eqref{ineq} with $P_{BC}^{\lambda\bot}$ together with \eqref{operator2} yields 
\begin{equation*}
P_{BC}^{\lambda\bot}\rho_{ABC}^{\prime}P_{BC}^{\lambda\bot}\leq2^{-H_{\min}^{\eps^{\prime}}(AB|C)_{\rho}+\lambda }P_{BC}^{\lambda\bot}\rho_{BC}^{\prime\prime}P_{BC}^{\lambda\bot} \,,
\end{equation*}
which is equivalent to 
\begin{equation*}
H_{\min}(A|BC)_{P^{\lambda\bot}\rho^{\prime}P^{\lambda\bot}|P^{\lambda\bot}\rho^{\prime\prime}P^{\lambda\bot}}\geq H_{\min}^{\eps^{\prime}}(AB|C)_{\rho}-\lambda.
\end{equation*}
A subsequent optimization of the left-hand side over all $\Su_{\leq}(\Hil_{BC})$ yields
\begin{equation}
\label{serena}
 H_{\min}(A|BC)_{P^{\lambda\bot}\rho^{\prime}P^{\lambda\bot}}\geq H_{\min}^{\eps^{\prime}}(AB|C)_{\rho}-\lambda
\end{equation}

Since $\rho_{ABC}$ is an extension of $\rho_{BC}$, by Corollary~\ref{cor} there exists an extension $\rho_{ABC}^{\prime\prime}$ of $\rho_{BC}^{\prime\prime}$ such that $P(\rho_{ABC}^{\prime\prime},\rho_{ABC})=P(\rho_{BC}^{\prime\prime},\rho_{BC})$. Then the triangle inequality as well as~\eqref{schlumf1} and~\eqref{schlumf2} give us the following upper bound for the purified distance between $P_{BC}^{\lambda\bot}\rho_{ABC}^{\prime}P_{BC}^{\lambda\bot}$ and $\rho_{ABC}$:    
\begin{multline*}
P(P_{BC}^{\lambda\bot}\rho_{ABC}'P_{BC}^{\lambda\bot},\rho_{ABC})\\
\begin{split}
&\leq P(P^{\lambda\bot}_{BC} \rho_{ABC}' P^{\lambda\bot}_{BC}, P^{\lambda\bot}_{BC}\rho_{ABC}P^{\lambda\bot}_{BC})\\
&+ P(P^{\lambda\bot}_{BC} \rho_{ABC} P^{\lambda\bot}_{BC}, P^{\lambda\bot}_{BC}\rho_{ABC}''P^{\lambda\bot}_{BC})\\
&+ P(P_{BC}^{\lambda\bot}\rho_{ABC}'' P_{BC}^{\lambda\bot}, \rho_{ABC}^{\prime\prime})\\
&+ P(\rho_{ABC}'', \rho_{ABC})\\
&\leq\sqrt{2\tilde{\eps}-\tilde{\eps}^{2}}+\eps^{\prime}+2\eps^{\prime\prime}.
\end{split}
\end{multline*}
After smoothing the left-hand side of \eqref{serena} and upper-bounding the term $S^{\tilde{\eps}}(B|C)_{\rho^{\prime\prime}|\sigma}$ on the right-hand side of \eqref{serena} by $H_{\max}(B|C)_{\rho^{\prime\prime}|\sigma}$ in accordance with Lemma \ref{lemma2} and subsequently optimizing it over $\Su_{\leq}(\Hil_C)$, we obtain  
\begin{equation*}
\begin{split}
H_{\min}^{\eps^{\prime}}(AB|C)_{\rho}&\leq H_{\min}^{\sqrt{2\tilde{\eps}-\tilde{\eps}^{2}}+\eps^{\prime}+2\eps^{\prime\prime}}(A|BC)_{\rho}+H_{\max}^{\eps^{\prime\prime}}(B|C)_{\rho}\\
&+\log\frac{1}{\tilde{\eps}^{2}}+\delta^{\prime}.
\end{split}
\end{equation*}
Finally, the substitution $\tilde{\eps}:=1-\sqrt{1-\eps^{2}}$ leads to the chain rule~\eqref{cr2} in the limit $\delta\longrightarrow0$.
\end{proof}

\begin{thm}
Let $\eps>0$, $\eps^{\prime}$, $\eps^{\prime\prime}\geq0$ and $\rho_{ABC}\in\Su_{\leq}(\Hil_{ABC})$. Then,
\begin{equation}
\label{cr3}
H_{\min}^{\eps^{\prime}}(AB|C)_{\rho}\leq H_{\max}^{\eps^{\prime\prime}}(A|BC)_{\rho}+ H_{\min}^{2\eps+\eps^{\prime}+2\eps^{\prime\prime}}(B|C)_{\rho}
+3f(\eps) \,.
\end{equation}
\end{thm}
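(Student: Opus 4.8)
The plan is to mirror the proof of \eqref{cr2}, except that now it is the $A$-system that is peeled off, paying its conditional max-entropy $H_{\max}(A|BC)$, while $B$ is retained as a conditional \emph{min}-entropy. Note first that, although the smooth duality relation lets each inequality in a pair imply its partner, \eqref{cr3} is genuinely independent of \eqref{cr2} and of the first chain rule (dualising \eqref{cr3} on a purification only returns its own paired inequality), so a direct argument is needed. Concretely, I would choose a min-optimal $\rho'_{ABC}\approx_{\eps'}\rho_{ABC}$ with $H_{\min}(AB|C)_{\rho'}=H_{\min}^{\eps'}(AB|C)_{\rho}=:h$ and a witness $\sigma_C\in\Su_{\leq}(\Hil_C)$ so that $\rho'_{ABC}\leq 2^{-h}\sigma_C$, and in parallel a max-optimal $\rho''_{ABC}\approx_{\eps''}\rho_{ABC}$ with $H_{\max}(A|BC)_{\rho''}=H_{\max}^{\eps''}(A|BC)_{\rho}=:m$, together with an optimal $\tau_{BC}$ so that $H_{\max}(A|BC)_{\rho''|\tau}=m$.

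The peeling would be carried out with the $S$-entropy of $A$ conditioned on $BC$, in the way Lemma~\ref{lemma2} is used in the proof of \eqref{cr2}. For small $\tilde\eps,\delta>0$ I would form the projector $P^{\mu}$ onto the strictly negative eigenvalues of $2^{\mu}\rho''_{ABC}-\id_A\otimes\tau_{BC}$ with $\mu:=S^{\tilde\eps}(A|BC)_{\rho''|\tau}+\delta$, so that $\tr[P^{\mu}\rho''_{ABC}]\leq\tilde\eps$ and, on the complement, $P^{\mu\bot}(\id_A\otimes\tau_{BC})P^{\mu\bot}\leq 2^{\mu}P^{\mu\bot}\rho''_{ABC}P^{\mu\bot}$. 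The aim is then to combine this with the conjugated min-entropy inequality $P^{\mu\bot}\rho'_{ABC}P^{\mu\bot}\leq 2^{-h}P^{\mu\bot}\sigma_C P^{\mu\bot}$ and to reduce on $\Hil_{BC}$, so as to reach a statement of the form $H_{\min}(B|C)\geq h-\mu$ for a state close to $\rho_{BC}$.

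The hard part, and the essential difference from \eqref{cr2}, is that here the $S$-entropy projector lives on the \emph{full} space $\Hil_{ABC}$, whereas in \eqref{cr2} the analogous projector lived on the conditioning system $\Hil_{BC}$ and was simply absorbed into the conditioning operator of the resulting min-entropy. To obtain a conditional min-entropy $H_{\min}(B|C)$, whose conditioning is on $C$ alone, one must trace out $A$; but a naive partial trace over $A$ replaces the max-entropy by $\log\Dim\Hil_A$, since $\tr_A[\id_A\otimes X]=\Dim\Hil_A\cdot X$. The role of $H_{\max}(A|BC)$ is precisely to control this reduction: on the complement one has $\tr[P^{\mu\bot}(\id_A\otimes\tau_{BC})]\leq 2^{\mu}\tr[\rho''_{ABC}]\leq 2^{\mu}$, so the effective $A$-weight of $P^{\mu\bot}$ is bounded by $2^{\mu}$, which is at most $2^{m}$ up to the correction from Lemma~\ref{lemma2}, and this is what must be exploited in place of $\log\Dim\Hil_A$. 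Making this reduction rigorous requires a second support-projection step of the type of Lemma~\ref{sava} (or Lemma~\ref{supportinglemma}) to pass from the relative quantity back to a smooth entropy; I expect this to be the source both of the extra $\eps$ in the smoothing parameter and of the third copy of $f(\eps)$ relative to the two appearing in \eqref{cr2}.

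Finally I would close the estimate as in \eqref{cr2}: bound $S^{\tilde\eps}(A|BC)_{\rho''|\tau}\leq H_{\max}(A|BC)_{\rho''|\tau}+\log(1/\tilde\eps^{2})$ by Lemma~\ref{lemma2}, optimize the conditioning operator to replace the relative max-entropy by $H_{\max}^{\eps''}(A|BC)_{\rho}$, and control the purified distance of the projected, $A$-reduced state to $\rho_{BC}$ by the triangle inequality together with \eqref{schlumf2} and Corollary~\ref{cor} (to align the extensions of $\rho'$ and $\rho''$). Substituting $\tilde\eps:=1-\sqrt{1-\eps^{2}}$ turns $\log(1/\tilde\eps^{2})$ into $2f(\eps)$ and each projection penalty $\sqrt{2\tilde\eps-\tilde\eps^{2}}$ into $\eps$; collecting the contributions $\eps'$ from $\rho'$, $2\eps''$ from $\rho''$, and $2\eps$ from the two projection steps yields the smoothing parameter $2\eps+\eps'+2\eps''$ on $H_{\min}(B|C)$ and the correction $3f(\eps)$, and letting $\delta\to 0$ gives \eqref{cr3}.
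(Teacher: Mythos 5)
Your template (peel with the $S$-entropy, close with Lemma~\ref{lemma2}, smooth, substitute $\tilde\eps=1-\sqrt{1-\eps^{2}}$) and your error budget ($2f(\eps)$ from $\log(1/\tilde\eps^{2})$ plus one $f(\eps)$ from a Lemma~\ref{supportinglemma}-type step, smoothing $2\eps+\eps'+2\eps''$) match the paper's. But the step you yourself flag as ``the hard part'' is a genuine gap, and neither of the lemmas you invoke can fill it. Two concrete problems. First, your two operator inequalities never chain: the projector $P^{\mu}$ is built from $2^{\mu}\rho''_{ABC}-\id_A\otimes\tau_{BC}$, i.e.\ from the \emph{max}-witness $\tau_{BC}$, while the conjugated min-entropy inequality involves $\sigma_C$; no operator is common to both, in contrast to the proof of \eqref{cr2}, where the $S$-entropy is taken relative to the very same $\sigma_C$ that witnesses the min-entropy, so that the two inequalities concatenate. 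Second, and more fundamentally, a lower bound on $H_{\min}(B|C)$ demands an \emph{operator} inequality $\omega_{BC}\leq 2^{-(h-\mu)}\sigma'_C$, whereas your projector only yields the \emph{scalar} bound $\tr[P^{\mu\bot}(\id_A\otimes\tau_{BC})]\leq 2^{\mu}$. Tracing out $A$ in the conjugated min-inequality leaves $\tr_A[P^{\mu\bot}(\id_{AB}\otimes\sigma_C)P^{\mu\bot}]$, and the scalar bound gives no way to dominate this by $2^{\mu}(\id_B\otimes\sigma_C)$ or anything of that form: $P^{\mu\bot}$ need not commute with $\id_{AB}\otimes\sigma_C$, and its $A$-weight can concentrate on particular $BC$ directions. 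Lemmas~\ref{sava} and~\ref{supportinglemma} convert relative entropies into smooth entropies; neither performs this operator-level reduction over $A$, so the ``second support-projection step'' you appeal to does not exist in the paper's toolbox.

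The paper avoids this wall by never proving \eqref{cr3} in primal form. (You are right that \eqref{cr3} cannot be obtained by dualizing \eqref{cr2}; nevertheless the paper's proof does go through duality, by proving the \emph{partner} inequality from scratch.) It takes a purification $\rho_{ABCD}$ and shows $H_{\max}^{\eps'}(AB|D)_{\rho}\geq H_{\max}^{2\eps+\eps'+2\eps''}(B|AD)_{\rho}+H_{\min}^{\eps''}(A|D)_{\rho}-3f(\eps)$, then applies \eqref{smoothdual}. The point of dualizing is precisely to turn the quantity to be bounded from a min-entropy (an operator statement) into a max-entropy, which by \eqref{presley} is an SDP minimum with a \emph{trace} objective: there it suffices to exhibit one feasible plan, and the projector $P^{\lambda\bot}$ itself is feasible, since $P^{\lambda\bot}\rho'_{ABCD}P^{\lambda\bot}\leq P^{\lambda\bot}\otimes\id_C$. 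Its objective value $\tr[\rho''_{AD}P^{\lambda\bot}]\leq 2^{\lambda-H_{\min}^{\eps''}(A|D)_{\rho}}$ is controlled by exactly the scalar chain you wrote down --- and in the dual picture the $S$-entropy is taken of $\rho'$ (max-optimal for $AB|D$) relative to the min-witness $\sigma_D$ of $\rho''_{AD}$, so the inequalities do chain. Lemma~\ref{supportinglemma} together with Corollary~\ref{cor} then converts the relative max-entropy into a smooth one (the third $f(\eps)$), and Lemma~\ref{lemma2} supplies the other $2f(\eps)$, after which the bookkeeping is as you describe. In short: your budget is right, but the plan as stated cannot be completed in the primal picture; it needs the dualization (or some genuinely new substitute for the SDP feasible-plan trick).
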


\begin{proof}
Let $\rho_{ABCD}$ be a purification of $\rho_{ABC}$. If
\begin{equation*}
H_{\max}^{\eps^{\prime}}(AB|D)_{\rho} \geq H_{\max}^{2\eps+\eps^{\prime}+2\eps^{\prime\prime}}(B|AD)_{\rho}+H_{\min}^{\eps^{\prime\prime}}(A|D)_{\rho} -3f(\eps)
\end{equation*}
holds, then the chain rule follows by the duality relation \eqref{smoothdual}. 
Let $\rho_{ABD}^{\prime}\approx_{\eps^{\prime}}\rho_{ABD}$, $\rho_{AD}^{\prime\prime}\approx_{\eps^{\prime\prime}}\rho_{AD}$ be such that   
\begin{align*}
H_{\max}(AB|D)_{\rho^{\prime}} &= H_{\max}^{\eps^{\prime}}(AB|D)_{\rho} \,, \quad \textrm{and}\\
H_{\min}(A|D)_{\rho^{\prime\prime}}&=H_{\min}^{\eps^{\prime\prime}}(A|D)_{\rho} \,,
\end{align*}
and let $\sigma_D\in\Su_{\leq}(\Hil_D)$ be such that
\begin{equation}
\label{AD}
\rho_{AD}^{\prime\prime} \leq2^{-H_{\min}(A|D)_{\rho^{\prime\prime}}}\sigma_D
 =2^{-H_{\min}^{\eps^{\prime\prime}}(A|D)_{\rho}}\sigma_D \,.
\end{equation}

Again we use the fact that for every $\delta>0$ there exists a $\delta^{\prime}\in(0, \delta]$ such that for $\lambda:=S^{\tilde{\eps}}(AB|D)_{\rho^{\prime}|\sigma}+\delta^{\prime}$, $\tilde{\eps}>0$ , the projector $P_{ABD}^{\lambda}$ onto the strictly negative eigenvalues of the operator $2^{\lambda}\rho_{ABD}^{\prime}-\sigma_D$ satisfies the constraint $\tr[P_{ABD}^{\lambda}\rho_{ABD}^{\prime}]\leq\tilde{\eps}$ in Definition \ref{def1}. If $P_{ABD}^{\lambda\bot}$ denotes the orthogonal complement of $P_{ABD}^{\lambda}$, then
\begin{equation}
\label{ABD}
2^{\lambda}P_{ABD}^{\lambda\bot}\rho_{ABD}^{\prime}P_{ABD}^{\lambda\bot}\geq P_{ABD}^{\lambda\bot}\sigma_DP_{ABD}^{\lambda\bot}.
\end{equation}
A conjugation of \eqref{AD} with $P_{ABD}^{\lambda\bot}$ and a subsequent combination with \eqref{ABD} yields
\begin{equation}
\label{ABD1}
2^{\lambda-H_{\min}^{\eps^{\prime\prime}}(A|D)_{\rho}}P_{ABD}^{\lambda\bot}\rho_{ABD}^{\prime}P_{ABD}^{\lambda\bot}\geq P_{ABD}^{\lambda\bot}\rho_{AD}^{\prime\prime}P_{ABD}^{\lambda\bot}.
\end{equation}

Consider now the max-entropy  
\begin{equation}
\label{star}
2^{H_{\max}(B|AD)_{P^{\lambda\bot}\rho^{\prime}P^{\lambda\bot}|\rho^{\prime\prime}}}=\hspace{-1.2cm}
\mathop{\min_{Z_{ABD} \geq 0}}_{P_{ABD}^{\lambda\bot}\rho_{ABCD}^{\prime}P_{ABD}^{\lambda\bot}\leq Z_{ABD}\otimes\id_C}\hspace{-1.2cm}\tr[(\id_B\otimes\rho_{AD}^{\prime\prime})Z_{ABD}]
\end{equation}
where $\rho_{ABCD}^{\prime}$ is a purification of $\rho_{ABD}^{\prime}$. Making use of \eqref{ABD1} and the inequality
\begin{equation*}
P_{ABD}^{\lambda\bot}\rho_{ABCD}^{\prime}P_{ABD}^{\lambda\bot}\leq P_{ABD}^{\lambda\bot}\otimes\id_C
\end{equation*}
and omitting the identity operator, we can upper-bound the right-hand side of \eqref{star} in the following way: 
\begin{equation*}
\begin{split}
&\leq\tr[\rho_{AD}^{\prime\prime}P_{ABD}^{\lambda\bot}]\\
&\leq2^{\lambda-H_{\min}^{\eps^{\prime\prime}}(A|D)_{\rho}}\tr[P_{ABD}^{\lambda\bot}\rho_{ABD}^{\prime}P_{ABD}^{\lambda\bot}]\\
&\leq2^{\lambda-H_{\min}^{\eps^{\prime\prime}}(A|D)_{\rho}},
\end{split}
\end{equation*} 
where we use that the term $\tr[P_{ABD}^{\lambda\bot}\rho_{ABD}^{\prime}P_{ABD}^{\lambda\bot}]$ is upper bounded by one. Taking the logarithm and substituting $\lambda$ yields
\small
\begin{equation*}
H_{\max}(B|AD)_{P^{\lambda\bot}\rho^{\prime}P^{\lambda\bot}|\rho^{\prime\prime}}\leq S^{\tilde{\eps}}(AB|D)_{\rho^{\prime}|\sigma}+\delta^{\prime}-H_{\min}^{\eps^{\prime\prime}}(A|D)_{\rho}.
\end{equation*}
\normalsize

A subsequent application of Lemma \ref{lemma2} implies 
\begin{align}
H_{\max}(B|AD)_{P^{\lambda\bot}\rho^{\prime}P^{\lambda\bot}|\rho^{\prime\prime}}&\leq H_{\max}(AB|D)_{\rho^{\prime}}-H_{\min}^{\eps^{\prime\prime}}(A|D)_{\rho} \nonumber\\
&+\delta^{\prime}+\log\frac{1}{\tilde{\eps}^{2}} \label{balakov}
\,,
\end{align}
where the max-entropy term on the right-hand side has been optimized on $\Su_{\leq}(\Hil_D)$.
Consider now the left-hand side of \eqref{balakov}. Corollary \ref{cor} guarantees the existence of an extension $\rho_{ABD}^{\prime\prime}$ such that $P(\rho_{AD}^{\prime\prime}, \rho_{AD})=P(\rho_{ABD}^{\prime\prime}, \rho_{ABD})$. Then, it follows that   
\begin{equation*}
\begin{split}
P(P_{ABD}^{\lambda\bot}\rho_{ABD}^{\prime}P_{ABD}^{\lambda\bot}, \rho_{ABD}^{\prime\prime})&\leq P(P_{ABD}^{\lambda\bot}\rho_{ABD}^{\prime}P_{ABD}^{\lambda\bot}, \rho_{ABD}^{\prime})\\
&+P(\rho_{ABD}^{\prime}, \rho_{ABD}^{\prime\prime})\\
&\leq\sqrt{2\tilde{\eps}-\tilde{\eps}^{2}}+\eps^{\prime}+\eps^{\prime\prime}.
\end{split}
\end{equation*}
Thus, according to Lemma \ref{supportinglemma}, there exists a state $\tilde{\rho}_{ABD}\approx_{\eps+\sqrt{2\tilde{\eps}-\tilde{\eps}^{2}}+\eps^{\prime}+2\eps^{\prime\prime}}\rho_{ABD}$ such that   
\begin{equation*}
\begin{split}
H_{\max}(B|AD)_{\tilde{\rho}}&\leq H_{\max}^{\eps^{\prime}}(AB|D)_{\rho}-H_{\min}^{\eps^{\prime\prime}}(A|D)_{\rho}\\
&+\delta^{\prime}+\log \frac{1}{\tilde{\eps}^{2}}+ f(\eps).
\end{split}
\end{equation*}
Smoothing of the left-hand side and regrouping the terms in the last inequality yields   
\begin{equation*}
\label{cr5final}
\begin{split}
H_{\max}^{\eps^{\prime}}(AB|D)_{\rho}&\geq H_{\max}^{\eps+\sqrt{2\tilde{\eps}-\tilde{\eps}^{2}}+\eps^{\prime}+2\eps^{\prime\prime}}(B|AD)_{\rho}+H_{\min}^{\eps^{\prime\prime}}(A|D)_{\rho}\\
&-\delta^{\prime}-\log \frac{1}{\tilde{\eps}^{2}}-f(\eps).
\end{split}
\end{equation*}
Finally, setting $\tilde{\eps}:=1-\sqrt{1-\eps^{2}}$, taking the limit $\delta\rightarrow0$, and applying the duality relation for smooth entropies \eqref{smoothdual}, we obtain chain rule \eqref{cr3}.\\
\end{proof}

The last chain rule follows from chain rule \eqref{cr2} together with Lemma \ref{updown}.
\begin{cor}
Let $\eps^{\prime}$, $\eps^{\prime\prime}$ $\eps^{\prime\prime\prime}\geq0$ and $\rho_{ABC}\in\Su_{\leq}(\Hil_{ABC})$ such that $\eps'+2\eps''+\eps''' < 1 - 2\sqrt{1-\tr\rho}$. Then,
\begin{align}
H_{\min}^{\eps'}(AB|C)_{\rho} &\leq H_{\max}^{\eps'''}(A|BC)_{\rho}+H_{\max}^{\eps''}(B|C)_{\rho} \nonumber\\
 \label{cr4}
 &\quad +g(\eps'\!, \eps''\!\!, \eps'''\!, \tr \rho) \,,
\end{align}
\begin{align*}
&\textrm{where}\ g(\eps'\!, \eps''\!\!, \eps'''\!, \tr \rho) := \\
&\quad \inf_{\eps} \Big\{ 2 f(\eps) + \log \Big( \frac{1}{1-(\eps \!+\! \eps'\! +\! 2\eps'' \!+\! \eps''' \!+\! 2\sqrt{1\!-\!\tr\rho})^{2}} \Big) \Big\},
\end{align*}
and the infimum is taken in the range
$0 < \eps < 1 - \eps' - 2\eps'' - \eps''' - 2\sqrt{1-\tr\rho}$.
\end{cor}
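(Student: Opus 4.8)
The plan is to obtain \eqref{cr4} directly from chain rule \eqref{cr2}, trading the conditional min-entropy $H_{\min}(A|BC)$ that appears on its right-hand side for a conditional max-entropy $H_{\max}(A|BC)$ by means of Lemma \ref{updown}, and then optimizing over the auxiliary smoothing parameter $\eps$ that \eqref{cr2} leaves free.

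First I would apply chain rule \eqref{cr2} with an arbitrary $\eps>0$ together with the given $\eps'$ and $\eps''$, obtaining
\begin{equation*}
H_{\min}^{\eps'}(AB|C)_{\rho} \leq H_{\min}^{\eps+\eps'+2\eps''}(A|BC)_{\rho} + H_{\max}^{\eps''}(B|C)_{\rho} + 2f(\eps) \,.
\end{equation*}
Next I would bound the first term on the right using Lemma \ref{updown} for the bipartition $A$ versus $BC$, identifying its min-entropy smoothing parameter with $\eps+\eps'+2\eps''$ and its max-entropy smoothing parameter with $\eps'''$. Since every marginal of $\rho_{ABC}$ shares the same trace, the quantity $\tr\rho$ in Lemma \ref{updown} is $\tr\rho_{ABC}=\tr\rho$, so the lemma yields
\begin{equation*}
H_{\min}^{\eps+\eps'+2\eps''}(A|BC)_{\rho} \leq H_{\max}^{\eps'''}(A|BC)_{\rho} + \log\!\Big(\frac{1}{1-(\eps+\eps'+2\eps''+\eps'''+2\sqrt{1-\tr\rho})^{2}}\Big) \,.
\end{equation*}
Chaining the two inequalities produces precisely the summand appearing inside the infimum defining $g$, namely $2f(\eps)+\log(1/(1-(\eps+\eps'+2\eps''+\eps'''+2\sqrt{1-\tr\rho})^{2}))$.

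The only thing left to check is admissibility of the parameters. Lemma \ref{updown} requires its two smoothing parameters plus $2\sqrt{1-\tr\rho}$ to be strictly below $1$, i.e. $\eps'''+(\eps+\eps'+2\eps'')+2\sqrt{1-\tr\rho}<1$, which rearranges to $0<\eps<1-\eps'-2\eps''-\eps'''-2\sqrt{1-\tr\rho}$. The corollary's hypothesis $\eps'+2\eps''+\eps'''<1-2\sqrt{1-\tr\rho}$ guarantees this interval is nonempty, so a valid $\eps$ exists. Since $\eps$ was arbitrary within this range, taking the infimum over it gives exactly $g(\eps',\eps'',\eps''',\tr\rho)$ and hence \eqref{cr4}. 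There is essentially no hard step here; the only mild subtlety is matching the roles of the two smoothing parameters in Lemma \ref{updown} correctly and verifying that the feasibility region of that lemma coincides with the domain of the infimum defining $g$.
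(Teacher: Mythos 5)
Your proposal is correct and follows exactly the paper's own argument: apply chain rule \eqref{cr2} with a free smoothing parameter $\eps$, upper-bound the resulting term $H_{\min}^{\eps+\eps'+2\eps''}(A|BC)_{\rho}$ via Lemma \ref{updown} applied to the bipartition $A$ versus $BC$, and take the infimum over the admissible range of $\eps$ to obtain $g$. Your parameter bookkeeping (matching the lemma's two smoothing parameters and checking that the hypothesis $\eps'+2\eps''+\eps''' < 1-2\sqrt{1-\tr\rho}$ makes the feasible interval nonempty) is precisely what the paper's brief proof implicitly relies on.
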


\begin{proof}
Let $\eps > 0$ be any smoothing parameter such that $\eps < 1 - \eps' - 2\eps'' - \eps''' - 2\sqrt{1-\tr\rho}$. Then, by Lemma \ref{updown}, the smooth min-entropy term on the right-hand side of \eqref{cr2} is upper bounded by
\begin{equation*}
H_{\max}^{\eps'''}(A|BC)_{\rho}
+\log \Big( \frac{1}{1-(\eps \!+\! \eps'\! +\! 2\eps'' \!+\! \eps''' \!+\! 2\sqrt{1\!-\!\tr\rho})^{2}} \Big)
\end{equation*}
which immediately gives \eqref{cr4}.
\end{proof}

In contrast to the previous chain rules, the last one leads to non-trivial results even if we apply it to non-smooth entropies. For example, for a normalized state $\rho_{ABC}$, we find
\begin{align*}
  H_{\min}(AB|C)_{\rho} &\leq H_{\max}(A|BC)_{\rho}+H_{\max}(B|C)_{\rho} + 4\,.
\end{align*}

\section{Conclusion}

We derived four pairs of chain rules for the smooth entropy, and every combination of min- and max-entropies is 
considered. Counter-examples suggest that the inequalities cannot be reversed, and thus that this list is complete. In particular, we do not expect a chain rule of the form
\begin{align}
H_{\min}^{\eps}(AB|C) \leq H_{\min}^{\eps'}(A|BC) + H_{\min}^{\eps''}(B|C) + h,
\label{eq:wrong}
\end{align}
for small smoothing parameters $\eps, \eps'$ and $\eps''$ and error term $h(\eps,\eps',\eps'')$ due to the following counter-example.
Let us consider the state $\rho_{ABCC'} = \frac{1}{2} \sum_{i \in \{0,1\}} \rho_{ABC}^i \otimes |i \rangle\!\langle i|_{C'}$ with
\begin{align*}
  \rho_{ABC}^0 = |\phi\rangle\!\langle\phi|_{AB} \otimes \pi_C 
  \quad \textrm{and} \quad \rho_{ABC}^1 = \pi_A \otimes |\phi\rangle\!\langle\phi|_{BC},
\end{align*}
where $|\phi\rangle$ is a maximally entangled state, $\pi$ is a fully mixed state, we take $A$, $B$ and $C$ to be $d$-dimensional quantum systems and $C'$ is an auxiliary register with basis $\{ |0\rangle, |1\rangle \}$. Any min-entropy conditioned on the classical register $C'$ can be expressed as~\cite{M12}
$$H_{\min}(\cdot|\cdot C')_{\rho} = - \log \frac{\sum_{i = 0}^1 2^{-H_{\min}(\cdot|\cdot)_{\rho^i}}}{2}
\approx \min_{i} H_{\min}(\cdot|\cdot)_{\rho^i} ,$$
where we approximate up to $\pm 1$. Thus,
$H_{\min}(AB|CC') = 0$ and $H_{\min}(A|BCC') = H_{\min}(B|CC') \approx -\log d$ and it is easy to verify that~\eqref{eq:wrong} is violated for moderate smoothing $\eps', \eps'' < \frac{1}{2}$ and $d$ such that $\log d \gg h$.

\subsection*{Acknowledgments}

This work was supported by the Swiss National Science Foundation (SNF)
through the National Centre of Competence in Research “Quantum Science
and Technology” and project No. 200020-135048, and by the European
Research Council (ERC) via grant No. 258932. MT acknowledges support
from the National Research Foundation (Singapore), and the Ministry of
Education (Singapore).


%

\appendices

\section{Proof of Lemma~\ref{updown}}
\label{sec:various-proofs}
In the following we restate Lemma \ref{updown} and prove it using the
derived SDPs for the non-smooth max-entropy.
\begin{namedlemma}[Restatement of Lemma \ref{updown}]
Let $\eps$, $\eps^{\prime}\geq0$ such that $\eps+\eps^{\prime}+2\sqrt{1-\tr\rho_{AB}}<1$ and let $\rho_{AB}\in\Su_{\leq}(\Hil_{AB})$. Then,
\begin{equation*}
\begin{split}
H_{\min}^{\eps^{\prime}}(A|B)_{\rho}&\leq H_{\max}^{\eps}(A|B)_{\rho}\\
&+\log\left(\frac{1}{1-(\eps+\eps^{\prime}+2\sqrt{1-\tr\rho})^{2}}\right).
\end{split}
\end{equation*}
\end{namedlemma}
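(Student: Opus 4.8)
The plan is to reduce the statement to a single fidelity estimate and then transport that estimate across the two smoothing balls. The engine is the fidelity expression for the non-smooth relative max-entropy, $H_{\max}(A|B)_{\tau|\sigma}=\log F(\tau_{AB},\id_A\otimes\sigma_B)^2$ (equivalently its SDP form \eqref{presley}), together with the operator characterisation of the min-entropy. The key pointwise observation is the following: if $\sigma_B\in\Su_{=}(\Hil_B)$ witnesses $2^{\mu}\tau_{AB}\leq\id_A\otimes\sigma_B$ with $\mu=H_{\min}(A|B)_{\tau|\sigma}$, then conjugating this operator inequality by $\sqrt{\tau_{AB}}$ and invoking operator monotonicity of the square root gives $\sqrt{\sqrt{\tau_{AB}}(\id_A\otimes\sigma_B)\sqrt{\tau_{AB}}}\geq 2^{\mu/2}\tau_{AB}$, hence $F(\tau_{AB},\id_A\otimes\sigma_B)\geq 2^{\mu/2}\tr\tau_{AB}$. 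In words, a min-entropy witness automatically produces a lower bound on the fidelity that defines the relative max-entropy; for normalised states this already yields $H_{\min}\leq H_{\max}$ with no correction, matching the vanishing of the error term at $\eps=\eps'=0$, $\tr\rho=1$.

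First I would fix the optimisers. Let $\bar\rho_{AB}\approx_{\eps'}\rho_{AB}$ attain the smooth min-entropy, $\mu:=H_{\min}^{\eps'}(A|B)_\rho=H_{\min}(A|B)_{\bar\rho}$, with associated normalised witness $\sigma_B$ satisfying $2^{\mu}\bar\rho_{AB}\leq\id_A\otimes\sigma_B$, and let $\hat\rho_{AB}\approx_{\eps}\rho_{AB}$ attain the smooth max-entropy, $H_{\max}^{\eps}(A|B)_\rho=H_{\max}(A|B)_{\hat\rho}$. Since the unconditional max-entropy dominates the relative one for every fixed $\sigma_B$ (Definition \ref{maxentropy}), we have $H_{\max}^{\eps}(A|B)_\rho\geq\log F(\hat\rho_{AB},\id_A\otimes\sigma_B)^2$. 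Comparing with the target inequality, it therefore suffices to establish the fidelity lower bound $F(\hat\rho_{AB},\id_A\otimes\sigma_B)\geq 2^{\mu/2}\sqrt{1-(\eps+\eps'+2\sqrt{1-\tr\rho})^2}$, since taking logarithms and rearranging reproduces exactly the claimed correction $\log\bigl(1/(1-(\eps+\eps'+2\sqrt{1-\tr\rho})^2)\bigr)$.

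The remaining and genuinely delicate step is to transport the fidelity bound from $\bar\rho$, where the pointwise observation gives $F(\bar\rho_{AB},\id_A\otimes\sigma_B)\geq 2^{\mu/2}\tr\bar\rho$, to $\hat\rho$. The two states are close, $P(\hat\rho_{AB},\bar\rho_{AB})\leq P(\hat\rho_{AB},\rho_{AB})+P(\rho_{AB},\bar\rho_{AB})\leq\eps+\eps'$, so the idea is to exploit the metric structure of the purified distance: the angle $\arccos\bar F(\cdot,\cdot)$ obeys the triangle inequality, which for the relevant triple turns a lower bound on $F(\bar\rho,\cdot)$ into one on $F(\hat\rho,\cdot)$ at the cost of adding the angle $\arcsin(\eps+\eps')$. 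The subtlety is that $\id_A\otimes\sigma_B$ is not sub-normalised, so the transport cannot be applied to it directly; I would instead pass to purifications and use Uhlmann's theorem, exactly as in Lemma \ref{sdpstatement}, so that all fidelities become overlaps of pure states and the angle inequality applies cleanly. \textbf{This transport, together with the sub-normalisation bookkeeping, is where I expect the main difficulty to lie:} the term $2\sqrt{1-\tr\rho}$ arises precisely from accounting for the non-unit trace of $\rho$ (and hence of $\bar\rho$ and $\hat\rho$) when converting between the ordinary fidelity $F$ and the generalised fidelity $\bar F$, and combining the two smoothing radii with this trace defect into the single quantity $\eps+\eps'+2\sqrt{1-\tr\rho}$ requires careful control. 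Once the fidelity bound of the previous paragraph is secured, the proof closes immediately.
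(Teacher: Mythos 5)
Your skeleton is fine up to the point you yourself flag as the hard part: the pointwise observation $F(\tau_{AB},\id_A\otimes\sigma_B)\geq 2^{\mu/2}\tr\tau_{AB}$ is correct, and so is the reduction $H_{\max}^{\eps}(A|B)_{\rho}\geq \log F(\hat\rho_{AB},\id_A\otimes\sigma_B)^{2}$. But the transport step is a genuine gap, and the mechanism you propose for it does not work. If you fix a purification $\ket{\Phi}$ of $\id_A\otimes\sigma_B$ (as in Lemma~\ref{sdpstatement}) and purifications $\ket{\bar\phi},\ket{\hat\phi}$ of $\bar\rho_{AB},\hat\rho_{AB}$, the angle triangle inequality transports a lower bound on $|\braket{\bar\phi}{\Phi}|$ to one on $|\braket{\hat\phi}{\Phi}|$ only up to an error proportional to $\|\Phi\|=\sqrt{d_A\tr\sigma_B}$: the component of $\ket{\hat\phi}$ orthogonal to $\ket{\bar\phi}$, of norm up to $\eps+\eps'$, can have overlap with $\ket{\Phi}$ as large as $(\eps+\eps')\sqrt{d_A}$, which swamps the target value $2^{\mu/2}$ whenever $2^{\mu}\ll d_A(\eps+\eps')^{2}$. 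Moreover, Uhlmann lets you choose $\ket{\hat\phi}$ optimally against either $\ket{\Phi}$ or $\ket{\bar\phi}$, but not both, so the argument cannot be closed this way. A transport that does work inside your framework avoids purifying $\id_A\otimes\sigma_B$ altogether: from the witness inequality $\id_A\otimes\sigma_B\geq 2^{\mu}\bar\rho_{AB}$ and operator monotonicity of the square root one gets $F(\hat\rho_{AB},\id_A\otimes\sigma_B)\geq F(\hat\rho_{AB},2^{\mu}\bar\rho_{AB})=2^{\mu/2}F(\hat\rho_{AB},\bar\rho_{AB})$.

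Even with that repair, however, your bookkeeping cannot reach the stated constant. What you then need is a lower bound on the \emph{ordinary} fidelity $F(\hat\rho_{AB},\bar\rho_{AB})$, while the triangle inequality for the purified distance only controls the \emph{generalized} fidelity, $\bar F(\hat\rho_{AB},\bar\rho_{AB})\geq\sqrt{1-(\eps+\eps')^{2}}$, and $F=\bar F-\sqrt{(1-\tr\hat\rho)(1-\tr\bar\rho)}$. This deficit does not vanish when $\tr\rho=1$: the two optimizers are themselves generically sub-normalized (their traces are only bounded below by $1-\eps^{2}$ and $1-\eps'^{2}$), so you lose an extra additive term of order $\eps\eps'$ precisely in the regime where the claimed error $\log\bigl(1/(1-(\eps+\eps')^{2})\bigr)$ leaves no slack. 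Your claim that the $2\sqrt{1-\tr\rho}$ term accounts for the non-unit trace ``of $\rho$ (and hence of $\bar\rho$ and $\hat\rho$)'' is therefore incorrect; it accounts only for the normalization of $\rho$ itself. This is exactly why the paper's proof takes a different route: it normalizes $\rho$, invokes Lemma 5.2 of \cite{M12} to embed $\Hil_{AB}$ into a larger space on which the smooth max-entropy is attained by a \emph{normalized} state $\bar\rho_{A'B'}$ (for a normalized state $\bar F=F$, so Uhlmann applies cleanly), encodes the smoothing ball around that state as the SDP constraint $\tr[\tilde\rho_{A'B'C}\bar\rho_{A'B'C}]\geq 1-(\tilde\eps+\tilde\eps')^{2}$, and extracts the factor $1-(\tilde\eps+\tilde\eps')^{2}$ from weak duality by turning the max-entropy optimal plan into a dual feasible plan for the min-entropy SDP. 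Some device of this kind, forcing one optimizer to be normalized, is missing from your proposal and cannot be replaced by trace bookkeeping alone.
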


\begin{proof}
Define $\hat{\rho}_{AB}=\rho_{AB}/\tr(\rho_{AB})$. According to Lemma 5.2 in \cite{M12} there are embeddings $U:\Hil_A\longrightarrow\Hil_{A^{\prime}}$ and $V:\Hil_B\longrightarrow\Hil_{B^{\prime}}$  such that there exists a normalized state $\bar{\rho}_{A^{\prime}B^{\prime}}\approx_{\eps}\hat{\rho}_{A^{\prime}B^{\prime}}$, where $\hat{\rho}_{A^{\prime}B^{\prime}}=\left(U\otimes V\right)\hat{\rho}_{AB}\left(U^{\dagger}\otimes V^{\dagger}\right)$, which minimizes the smooth max-entropy $H_{\max}^{\tilde{\eps}}(A^{\prime}|B^{\prime})_{\hat{\rho}}=H_{\max}^{\tilde{\eps}}(A|B)_{\hat{\rho}}$.\\
Consider now the quantity $2^{-H_{\min}^{\tilde{\eps}+\tilde{\eps}^{\prime}}(A^{\prime}|B^{\prime})_{\bar{\rho}}}$. We are simultaneously minimizing over all $\sigma_{B^{\prime}}\in\Su_{\leq}(\Hil_{B^{\prime}})$ and  all states $\tilde{\rho}_{A^{\prime}B^{\prime}}$, that are $\tilde{\eps}+\tilde{\eps}^{\prime}$-close to the normalized state $\bar{\rho}_{A^{\prime}B^{\prime}}$. By Uhlmann's theorem the latter constraint translates into $\tr[\tilde{\rho}_{A^{\prime}B^{\prime}C}\bar{\rho}_{A^{\prime}B^{\prime}C}]\geq1-(\tilde{\eps}+\tilde{\eps}^{\prime})^{2}$ where $\Hil_C$ is a purifying system. We can formulate $2^{-H_{\min}^{\tilde{\eps}+\tilde{\eps}^{\prime}}}(A^{\prime}|B^{\prime})_{\bar{\rho}}$ as the following semidefinite program:
\vspace{5mm}

{\small
\begin{center}
\begin{tabular}{cc}
&\textsc{Primal Problem:}\\
minimum:&\hspace{-0.5cm}$\tr[\id_{B^{\prime}}\sigma_{B^{\prime}}]$\\
subject to:&\hspace{-0.5cm}$\id_{A^{\prime}}\otimes\sigma_{B^{\prime}}\geq\tr_{C}[\tilde{\rho}_{A^{\prime}B^{\prime}C}]$\\ 
&$\tr[\tilde{\rho}_{A^{\prime}B^{\prime}C}\bar{\rho}_{A^{\prime}B^{\prime}C}]\geq 1-(\tilde{\eps}+\tilde{\eps}^{\prime})^{2}$\\
&$\tr[\tilde{\rho}_{A^{\prime}B^{\prime}C}]\leq 1$\\
&$\sigma_{B^{\prime}}\geq0,~\tilde{\rho}_{A^{\prime}B^{\prime}C}\geq0$\\
&
\end{tabular}
\begin{tabular}{cc}
&\textsc{Dual problem}:\\
maximum:&$(1-(\tilde{\eps}+\tilde{\eps}^{\prime})^{2})\lambda-\mu$\\
subject to:&\hspace{-0.3cm}$\tr_A[E_{A^{\prime}B^{\prime}}]\leq\id_B^{\prime}$\\ 
&$\lambda\bar{\rho}_{A^{\prime}B^{\prime}C}\leq E_{A^{\prime}B^{\prime}}\otimes\id_{C}+\mu\id_{A^{\prime}B^{\prime}C}$\\
&$E_{A^{\prime}B^{\prime}}\geq0$,~$\lambda,~\mu\geq0$,\\
\end{tabular}
\end{center}
\normalsize}

\vspace{5mm}
where $\sigma_{B^{\prime}}$ and $\tilde{\rho}_{A^{\prime}B^{\prime}C}$ are the primal variables and $E_{A^{\prime}B^{\prime}}$, $\lambda$ and $\mu$ are the dual variables, respectively. Let $Z_{A^{\prime}B^{\prime}}$ be a primal optimal plan for the semidefinite program of $H_{\max}(A^{\prime}|B^{\prime})_{\bar{\rho}}$, that is $Z_{A^{\prime}B^{\prime}}\otimes\id_C\geq\bar{\rho}_{A^{\prime}B^{\prime}C}$ and $\tr_{A^{\prime}}[Z_{A^{\prime}B^{\prime}}]\leq2^{H_{\max}(A^{\prime}|B^{\prime})_{\bar{\rho}}}\id_{B^{\prime}}$.
Then the variables $E_{A^{\prime}B^{\prime}}=2^{-H_{\max}(A^{\prime}|B^{\prime})_{\bar{\rho}}}Z_{A^{\prime}B^{\prime}}$, $\lambda=2^{-H_{\max}(A^{\prime}|B^{\prime})_{\bar{\rho}}}$ and $\mu=0$ are a dual feasible plan for the above semidefinite program. 
By the weak duality theorem we have then
\begin{equation*}
(1-(\tilde{\eps}+\tilde{\eps}^{\prime})^{2})2^{-H_{\max}(A^{\prime}|B^{\prime})_{\bar{\rho}}}\leq2^{-H_{\min}^{\tilde{\eps}+\tilde{\eps}^{\prime}}(A^{\prime}|B^{\prime})_{\bar{\rho}}}.
\end{equation*}
Taking the logarithm and considering the fact that all states which are $\tilde{\eps}^{\prime}$-close to $\hat{\rho}_{A^{\prime}B^{\prime}}$ are contained in the $(\tilde{\eps}+\tilde{\eps}^{\prime})$-neighborhood of $\bar{\rho}_{A^{\prime}B^{\prime}}$, we get
\begin{equation}
\label{gunter}
\begin{split}
H_{\min}^{\tilde{\eps}^{\prime}}(A^{\prime}|B^{\prime})_{\hat{\rho}}&\leq H_{\min}^{\tilde{\eps}+\tilde{\eps}^{\prime}}(A^{\prime}|B^{\prime})_{\bar{\rho}}\leq H_{\max}^{\tilde{\eps}}(A^{\prime}|B^{\prime})_{\hat{\rho}}\\
&+\log\left(\frac{1}{1-(\tilde{\eps}+\tilde{\eps}^{\prime})^{2}}\right).
\end{split}
\end{equation}
By Proposition 5.3 in \cite{M12} we have $H_{\min}^{\tilde{\eps}^{\prime}}(A^{\prime}|B^{\prime})_{\hat{\rho}}=H_{\min}^{\tilde{\eps}^{\prime}}(A|B)_{\hat{\rho}}$ and $H_{\max}^{\tilde{\eps}}(A^{\prime}|B^{\prime})_{\hat{\rho}}=H_{\max}^{\tilde{\eps}}(A|B)_{\hat{\rho}}$. Finally, substituting in \eqref{gunter} $\tilde{\eps}=\eps+\sqrt{1-\tr(\rho_{AB})}$ and $\tilde{\eps}^{\prime}=\eps^{\prime}+\sqrt{1-\tr(\rho_{AB})}$ and considering that $H_{\min}^{\eps^{\prime}}(A|B)_{\rho}\leq H_{\min}^{\eps^{\prime}+\sqrt{1-\tr\rho}}(A|B)_{\hat{\rho}}$ as well as $H_{\max}^{\eps+\sqrt{1-\tr\rho}}(A|B)_{\hat{\rho}}\leq H_{\max}^{\eps}(A|B)_{\rho}$ we conclude the proof.
\end{proof}

\section{Technical Lemmas}\label{sec:appendix-tech-lemmas}

\subsection{Operator inequalities}
\begin{thm}[\hspace{-0.002cm}\cite{Aud}, Theorem 1]
Let $Q$ and $R$ be positive semidefinite operators on a Hilbert space $\Hil$ and let $0\leq s\leq1$. Then,
\begin{equation}\label{konrad}\tr\left[Q^{s}R^{1-s}\right]\geq\frac{1}{2}\tr\left[Q+R-\left|Q-R\right|\right]\end{equation} 
\end{thm}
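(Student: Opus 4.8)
The plan is to reduce the operator statement to an elementary scalar inequality plus a single genuinely non-commutative trace estimate, using spectral decompositions. Write $Q=\sum_i q_i P_i$ and $R=\sum_j r_j \Pi_j$ with orthogonal spectral projectors $\{P_i\}$, $\{\Pi_j\}$, so that $\tr[Q^sR^{1-s}]=\sum_{i,j}q_i^s r_j^{1-s}\,\tr[P_i\Pi_j]$, where each coefficient $\tr[P_i\Pi_j]\ge 0$. The right-hand side can be rewritten as $\tfrac12\tr[Q+R]-\tfrac12\tr|Q-R|$, and using $\sum_j\Pi_j=\sum_iP_i=\id$ we have $\tfrac12\tr[Q+R]=\sum_{i,j}\tfrac{q_i+r_j}{2}\tr[P_i\Pi_j]$. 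The scalar bound $q_i^s r_j^{1-s}\ge\min(q_i,r_j)$ (the weighted geometric mean dominates the minimum for $0\le s\le 1$) then shows that it suffices to prove the \emph{classical majorant}
\[
\tr|Q-R|\ \ge\ \sum_{i,j}|q_i-r_j|\,\tr[P_i\Pi_j],
\]
since combining the two gives $\tr[Q^sR^{1-s}]\ge\sum_{i,j}\min(q_i,r_j)\tr[P_i\Pi_j]\ge\tfrac12\tr[Q+R]-\tfrac12\tr|Q-R|$.

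To attack this key inequality I would use a layer-cake representation. For $q_i,r_j\ge0$ one has $|q_i-r_j|=\int_0^\infty\big|\mathbf{1}[q_i>t]-\mathbf{1}[r_j>t]\big|\,dt$. Introducing the spectral projectors $E_t$ of $Q$ and $F_t$ of $R$ onto $(t,\infty)$, a direct computation using $\tr[(E_t-F_t)^2]=\tr[E_t]+\tr[F_t]-2\tr[E_tF_t]$ gives $\sum_{i,j}|q_i-r_j|\tr[P_i\Pi_j]=\int_0^\infty\tr[(E_t-F_t)^2]\,dt$, while simultaneously $Q-R=\int_0^\infty(E_t-F_t)\,dt$. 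Writing $D_t:=E_t-F_t$, the problem thus collapses to the projector estimate
\[
\Big\|\int_0^\infty D_t\,dt\Big\|_1\ \ge\ \int_0^\infty\tr[D_t^2]\,dt .
\]

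This last step is the main obstacle, and it is where non-commutativity genuinely bites. The naive route via the triangle inequality is useless: it yields $\big\|\int_0^\infty D_t\,dt\big\|_1\le\int_0^\infty\|D_t\|_1\,dt$, which is both the wrong direction and far too lossy, since each $D_t$ is a difference of projectors with spectrum in $[-1,1]$ and hence $\tr[D_t^2]\le\|D_t\|_1$. The correct device is to test the trace norm against the single fixed operator $S=\operatorname{sgn}(Q-R)$, via $\|Q-R\|_1=\tr[S(Q-R)]=\int_0^\infty\tr[S D_t]\,dt$, reducing the claim to $\int_0^\infty\tr[S D_t]\,dt\ge\int_0^\infty\tr[D_t^2]\,dt$; the essential content is then a comparison of $\tr[S D_t]$ with $\tr[D_t^2]$ that exploits the monotone nesting of the spectral families $\{E_t\}$ and $\{F_t\}$. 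I expect this projector estimate to be the crux, and it is exactly the place where the positivity of $Q,R$ and the monotonicity of their spectral resolutions must be used. As a consistency check, the case $s=\tfrac12$ is precisely the Powers--Størmer inequality $\tr|Q-R|\ge\tr[(Q^{1/2}-R^{1/2})^2]$, so any successful argument must reproduce it; the general $s$ is the content of \cite{Aud}.
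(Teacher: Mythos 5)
Your reduction chain is logically valid up to the step you yourself flag as unproven, but the problem is worse than a missing proof: the ``classical majorant'' inequality you reduce to, $\tr|Q-R| \geq \sum_{i,j}|q_i-r_j|\tr[P_i\Pi_j]$, is \emph{false}, so the approach cannot be completed. Concretely, take $Q=\diag(1,\tfrac12)$ and $R=\tfrac35\ketbra{+}$ with $\ket{+}=(\ket{0}+\ket{1})/\sqrt{2}$. Both spectra are non-degenerate and all four overlaps equal $\tfrac12$, so your right-hand side is $\tfrac12\bigl(\tfrac25+1+\tfrac1{10}+\tfrac12\bigr)=1$; but $Q-R=\bigl(\begin{smallmatrix}7/10 & -3/10\\ -3/10 & 1/5\end{smallmatrix}\bigr)$ has trace $\tfrac9{10}$ and determinant $\tfrac1{20}>0$, hence is positive semidefinite and $\tr|Q-R|=\tfrac9{10}<1$. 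The same example kills the entire strategy, not merely this lemma: since $Q\geq R$ here, the theorem's right-hand side equals $\tfrac12\tr[Q+R-|Q-R|]=\tr R=\tfrac35$, whereas your intermediate quantity is $\sum_{i,j}\min(q_i,r_j)\tr[P_i\Pi_j]=\tfrac12(\tfrac35+0+\tfrac12+0)=\tfrac{11}{20}<\tfrac35$. So already the first relaxation $q_i^s r_j^{1-s}\geq\min(q_i,r_j)$ discards too much, and nothing downstream can recover the theorem. (Your last ``reduction'' is also circular rather than a simplification: $\int_0^\infty\tr[SD_t]\,dt=\tr[S(Q-R)]=\tr|Q-R|$ identically, so the claim $\int\tr[SD_t]\,dt\geq\int\tr[D_t^2]\,dt$ is verbatim the majorant again.)

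The structural reason for the failure is that operator order is invisible to the pairwise eigenvalue/overlap data $(q_i,r_j,\tr[P_i\Pi_j])$: in the example $Q\geq R$ as operators even though the $Q$-eigenvalue $\tfrac12$ is smaller than the $R$-eigenvalue $\tfrac35$ and the corresponding eigenvectors overlap, which is exactly what makes the min-sum undershoot $\tr R$. For comparison: the paper itself does not prove this theorem at all --- it imports it from \cite{Aud} and only derives Corollary~\ref{konradcor} from it --- and the known proof is genuinely operator-theoretic. One introduces $M:=R+(Q-R)_+$, which dominates both $Q$ and $R$, and applies operator monotonicity of $x\mapsto x^s$ twice: $\tr[Q^sM^{1-s}]\geq\tr[Q^sQ^{1-s}]=\tr Q$ and $\tr[Q^s(M^{1-s}-R^{1-s})]\leq\tr[M^s(M^{1-s}-R^{1-s})]\leq\tr M-\tr R=\tr[(Q-R)_+]$, whence $\tr[Q^sR^{1-s}]\geq\tr Q-\tr[(Q-R)_+]=\tfrac12\tr[Q+R-|Q-R|]$. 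Your consistency check correctly identifies $s=\tfrac12$ with Powers--St{\o}rmer, but Powers--St{\o}rmer is itself irreducibly non-commutative and cannot be factored through the overlap distribution --- that is precisely the lesson of the counterexample above.
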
    
From this theorem we can draw the following useful corollary. 
\begin{cor}
\label{konradcor}
Let $R$ and $Q$ be positive semidefinite operators on a Hilbert space $\Hil$, let $0\leq s\leq1$ and let $P_{+}$ and $P_{-}$ denote the orthogonal projectors onto the eigenspaces corresponding to nonnegative and strictly negative eigenvalues of the operator $Q-R$, respectively. Then,
\[\tr\left[Q^{s}R^{1-s}\right]\geq\tr\left[P_{+}R+P_{-}Q\right]\]
\end{cor}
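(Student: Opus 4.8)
The plan is to deduce the corollary directly from the preceding theorem, Audenaert's inequality \eqref{konrad}, by rewriting its right-hand side $\frac{1}{2}\tr[Q+R-|Q-R|]$ in terms of the spectral projectors of the Hermitian operator $M := Q-R$. Since the theorem already supplies the lower bound $\tr[Q^s R^{1-s}]\geq\frac{1}{2}\tr[Q+R-|Q-R|]$, it suffices to show that the latter expression equals $\tr[P_+R+P_-Q]$.

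First I would record the basic spectral facts about $M$. Because $P_+$ and $P_-$ project onto the nonnegative and strictly negative eigenspaces of $M$, they are orthogonal and resolve the identity, i.e.\ $P_+P_-=0$ and $P_++P_-=\id$. Moreover $P_+M=MP_+\geq0$ is the positive part of $M$ and $-P_-M=-MP_-\geq0$ is its negative part, so that the absolute value factors as $|M|=(P_+-P_-)M=(P_+-P_-)(Q-R)$.

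The key step is then an elementary trace computation. Using this factorization of $|M|$ together with $\tr[Q]=\tr[P_+Q]+\tr[P_-Q]$ and $\tr[R]=\tr[P_+R]+\tr[P_-R]$, I would expand $\tr[Q+R-|Q-R|]$ and collect terms: the $\tr[P_+Q]$ and $\tr[P_-R]$ contributions cancel, leaving $\tr[Q+R-|Q-R|]=2\tr[P_+R+P_-Q]$. Dividing by two and combining with \eqref{konrad} yields the claimed bound.

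The calculation is routine, so there is no genuine obstacle; the only point requiring care is the sign bookkeeping in the identity $|M|=(P_+-P_-)M$, in particular checking that the strictly-negative convention for $P_-$ together with the nonnegative convention for $P_+$ still gives a complete orthogonal resolution of the identity (the kernel of $M$ being absorbed into $P_+$). Once this is in place, the cancellation in the trace is immediate and the corollary follows.
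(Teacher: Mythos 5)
Your proposal is correct and follows essentially the same route as the paper: both start from Audenaert's inequality \eqref{konrad} and substitute the spectral factorization of $|Q-R|$ with respect to $P_{\pm}$ (your identity $|Q-R|=(P_{+}-P_{-})(Q-R)$ is the same as the paper's decomposition $P_{+}(Q-R)P_{+}-P_{-}(Q-R)P_{-}$, since $P_{\pm}$ commute with $Q-R$), after which the trace algebra with $P_{+}+P_{-}=\id$ gives the claim. Your sign bookkeeping, including absorbing $\Ker(Q-R)$ into $P_{+}$, is handled correctly.
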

\begin{proof}
We make the following decomposition of $\left|Q-R\right|$
\begin{equation}
\label{q-r}
\left|Q-R\right|=P_{+}\left(Q-R\right)P_{+}-P_{-}\left(Q-R\right)P_{-},
\end{equation}
where  $P_{\pm}$ are the projectors onto the nonnegative and strictly negative eigenvalues of $Q-R$, respectively. Substituting \eqref{q-r} in \eqref{konrad} and using the fact that $P_{+}+P_{-}=\id$, we obtain   
\begin{flalign*}
\tr\left[Q^{s}R^{1-s}\right]&\geq\frac{1}{2}\tr\left[Q+R-\left|Q-R\right|\right]\\
&=\tr\left[P_{-}Q+\left(\id-P_{-}\right)R\right]\\
&=\tr\left[P_{-}Q+P_{+}R\right].
\end{flalign*} 
\end{proof}
\subsection{Purified Distance: Properties}   
\begin{lemma}[\hspace{-0.002cm}\cite{TCR10}, Lemma 7]\label{cptpmdist}
If $\rho, \sigma\in\Su_{\leq}(\Hil)$ and $\mathcal{E}$ is a trace non-increasing CPM on $\mathcal{L}(\Hil)$, then \begin{equation*}P(\mathcal{E}(\rho), \mathcal{E}(\sigma))\leq P(\rho, \sigma).\end{equation*}
\end{lemma}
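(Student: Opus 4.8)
The plan is to reduce the claim to the well-known monotonicity of the ordinary fidelity under trace-preserving completely positive maps, which is a consequence of Uhlmann's theorem \cite{Uhl76}. Since $P(\rho,\sigma)=\sqrt{1-\bar{F}(\rho,\sigma)^{2}}$ is a strictly decreasing function of the generalized fidelity $\bar{F}$ on $[0,1]$, the asserted inequality $P(\mathcal{E}(\rho),\mathcal{E}(\sigma))\leq P(\rho,\sigma)$ is equivalent to the reverse inequality $\bar{F}(\mathcal{E}(\rho),\mathcal{E}(\sigma))\geq\bar{F}(\rho,\sigma)$. Hence it suffices to show that the generalized fidelity does not decrease under the trace non-increasing map $\mathcal{E}$.

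First I would lift the sub-normalized states to genuine (normalized) density operators on the Hilbert space $\Hil\oplus\mathbb{C}$ enlarged by a one-dimensional ``flag''. For $\rho\in\Su_{\leq}(\Hil)$ put $\hat{\rho}:=\rho\oplus(1-\tr\rho)$, which is normalized. Since the trace norm of a block-diagonal operator is the sum of the trace norms of its blocks, a direct computation gives $\sqrt{\hat{\rho}}\sqrt{\hat{\sigma}}=\sqrt{\rho}\sqrt{\sigma}\oplus\sqrt{(1-\tr\rho)(1-\tr\sigma)}$, whence
\[
\bar{F}(\rho,\sigma)=F(\hat{\rho},\hat{\sigma}).
\]
This identifies the generalized fidelity of sub-normalized states with the ordinary fidelity of their normalized embeddings.

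Next I would dilate $\mathcal{E}$ to a trace-preserving CPM $\tilde{\mathcal{E}}$ on $\mathcal{L}(\Hil\oplus\mathbb{C})$ that routes the lost weight into the flag. Concretely, if $\{E_{k}\}$ are Kraus operators for $\mathcal{E}$, then the deficiency operator $N:=\sqrt{\id-\mathcal{E}^{\dagger}(\id)}=\sqrt{\id-\sum_{k}E_{k}^{\dagger}E_{k}}$ is positive semidefinite precisely because $\mathcal{E}$ is trace non-increasing; I would define $\tilde{\mathcal{E}}$ by the Kraus operators $E_{k}\oplus 0$ together with $\{\,\ket{f}\bra{j}N\,\}_{j}$ (routing the $\Hil$-deficiency into the flag $\ket{f}$) and $\ket{f}\bra{f}$. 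Because $\tilde{\mathcal{E}}$ is specified by Kraus operators it is manifestly completely positive, and summing $\tilde K^{\dagger}\tilde K$ over this set collapses to $\id_{\Hil\oplus\mathbb{C}}$, so $\tilde{\mathcal{E}}$ is trace-preserving. The crucial bookkeeping step is that $\tilde{\mathcal{E}}$ intertwines the embedding with $\mathcal{E}$: one checks that $\tilde{\mathcal{E}}(\hat{\rho})=\mathcal{E}(\rho)\oplus(1-\tr[\mathcal{E}(\rho)])=\widehat{\mathcal{E}(\rho)}$, and likewise for $\sigma$.

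Assembling the pieces, I would invoke the monotonicity of the fidelity under the trace-preserving map $\tilde{\mathcal{E}}$ to conclude
\[
\bar{F}(\mathcal{E}(\rho),\mathcal{E}(\sigma))=F\bigl(\widehat{\mathcal{E}(\rho)},\widehat{\mathcal{E}(\sigma)}\bigr)=F\bigl(\tilde{\mathcal{E}}(\hat{\rho}),\tilde{\mathcal{E}}(\hat{\sigma})\bigr)\geq F(\hat{\rho},\hat{\sigma})=\bar{F}(\rho,\sigma),
\]
and the stated inequality for $P$ follows by monotonicity of $\sqrt{1-\bar{F}^{2}}$. I expect the main obstacle to be the correct construction of the dilation $\tilde{\mathcal{E}}$ and the verification of its two defining properties, namely trace preservation (where the deficiency $N$ exactly compensates the missing weight) and the intertwining identity $\tilde{\mathcal{E}}(\hat{\rho})=\widehat{\mathcal{E}(\rho)}$; the block-diagonal fidelity computation and the final reduction to the trace-preserving case are routine by comparison.
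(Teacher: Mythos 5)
Your proof is correct: the flag embedding $\hat{\rho}=\rho\oplus(1-\tr\rho)$ indeed satisfies $\bar{F}(\rho,\sigma)=F(\hat{\rho},\hat{\sigma})$, your Kraus family $\{E_{k}\oplus 0\}_k\cup\{\ket{f}\bra{j}N\}_{j}\cup\{\ket{f}\bra{f}\}$ sums to $\id_{\Hil\oplus\mathbb{C}}$ and satisfies the intertwining identity $\tilde{\mathcal{E}}(\hat{\rho})=\widehat{\mathcal{E}(\rho)}$ (the deficiency term contributes $\tr\rho-\tr[\mathcal{E}(\rho)]$ to the flag, which combines with $1-\tr\rho$ to give the correct flag weight), so monotonicity of the ordinary fidelity under trace-preserving maps yields the claim. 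Note that the paper under review does not prove this lemma at all—it imports it from \cite{TCR10}—and your argument is essentially the standard one from that reference (generalized fidelity realized as ordinary fidelity on the flag-enlarged space, plus a trace-preserving dilation of the trace non-increasing map), so your proposal agrees in approach with the cited source.
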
 
Evidently, for any $0\leq\Pi\leq1$ the map defined by $\rho\longmapsto\Pi\rho\Pi$, $\rho\in\Su_{\leq}(\Hil)$ is a trace non-increasing CPM. Thus, in particular, by the above lemma we have
\begin{equation}
\label{schlumf1}
P(\Pi\rho\Pi, \Pi\sigma\Pi)\leq P(\rho, \sigma)
\end{equation}
for $\rho$, $\sigma\in\Su_{\leq}(\Hil)$. 
\begin{lemma}[\hspace{-0.002cm}\cite{BCCRR11}, Lemma 7] \label{subpur}Let $\rho\in\Su_{\leq}(\Hil)$ and $0\leq\Pi\leq\id$. Then,
\begin{equation*}P(\Pi\rho\Pi, \rho)\leq \frac{1}{\sqrt{\tr\rho}}\sqrt{(\tr\rho)^{2}-(\tr[\Pi^{2}\rho])^{2}}.\end{equation*}
\end{lemma}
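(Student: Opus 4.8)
The plan is to lower-bound the generalized fidelity $\bar{F}(\Pi\rho\Pi,\rho)$ by exhibiting an explicit pair of purifications, and then to reduce the statement to an elementary scalar inequality. First I would fix notation: write $t:=\tr\rho$, $p:=\tr[\Pi\rho\Pi]=\tr[\Pi^{2}\rho]$, and $q:=\tr[\Pi\rho]$, and set $\tau:=\Pi\rho\Pi\in\Su_{\leq}(\Hil)$, so that $\tr\tau=p$. Two order relations will be used: $0\le p\le t$, since $\Pi^{2}\le\id$; and $q\ge p$, since $0\le\Pi\le\id$ forces $\Pi^{2}\le\Pi$, whence $\tr[(\Pi-\Pi^{2})\rho]\ge0$.

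The first and conceptually central step is the fidelity estimate. Because $\bar{F}(\tau,\rho)$ equals the ordinary fidelity of the normalized \emph{completed} states $\tau\oplus(1-p)$ and $\rho\oplus(1-t)$ on $\Hil\oplus\mathbb{C}$, standard Uhlmann's theorem lets me bound it below by the overlap of any pair of purifications of these completed states. Fixing a purification $\ket{\psi}$ of $\rho$ on $\Hil\otimes\Hil'$, I would choose
\[
\ket{\Psi_{\rho}}:=\ket{\psi}+\sqrt{1-t}\,\ket{f}\ket{f'},\qquad \ket{\Psi_{\tau}}:=(\Pi\otimes\id)\ket{\psi}+\sqrt{1-p}\,\ket{f}\ket{f'},
\]
where $\ket{f},\ket{f'}$ span the appended one-dimensional flags. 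A direct check shows both are normalized and that they purify $\rho\oplus(1-t)$ and $\tau\oplus(1-p)$ respectively, while their overlap is $\braket{\Psi_{\tau}}{\Psi_{\rho}}=\tr[\Pi\rho]+\sqrt{(1-p)(1-t)}$. Hence $\bar{F}(\tau,\rho)\ge q+\sqrt{(1-p)(1-t)}$, and invoking $q\ge p$ yields the clean bound $\bar{F}(\Pi\rho\Pi,\rho)\ge p+\sqrt{(1-p)(1-t)}$.

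It then remains to convert this into the stated purified-distance bound. Since $P(\tau,\rho)^{2}=1-\bar{F}(\tau,\rho)^{2}$, the claim $P\le\frac{1}{\sqrt{t}}\sqrt{t^{2}-p^{2}}$ is equivalent to $\bar{F}(\tau,\rho)^{2}\ge 1-t+\frac{p^{2}}{t}$, so by the previous step it suffices to prove the scalar inequality $\big(p+\sqrt{(1-p)(1-t)}\big)^{2}\ge 1-t+\frac{p^{2}}{t}$ for all $0\le p\le t\le1$. I would expand, isolate the square-root term, and square the resulting (manifestly nonnegative) inequality $2\sqrt{(1-p)(1-t)}\ge (1-t)\frac{p+t}{t}$, reducing everything to $4(1-p)t^{2}\ge(1-t)(p+t)^{2}$. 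This in turn factors as $(t-p)\big[(1-t)p+t(t+3)\big]\ge0$, both factors being nonnegative exactly because $p\le t\le1$. The degenerate cases $p=0$ and $t\in\{0,1\}$ I would check directly.

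The main obstacle is the fidelity estimate: one must pick the two purifications so that their overlap produces \emph{precisely} $\tr[\Pi\rho]$ together with the flag contribution $\sqrt{(1-p)(1-t)}$, and then notice that the target is phrased in terms of $\tr[\Pi^{2}\rho]$ rather than $\tr[\Pi\rho]$, a gap that is closed by the operator inequality $\Pi^{2}\le\Pi$. The subsequent reduction is routine, but the squaring step needs the nonnegativity of both sides, and the resulting cubic in $p$ must be factored correctly; the root $p=t$ is what reveals that the bound is tight only at the boundary.
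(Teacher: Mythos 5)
The paper itself gives no proof of this lemma---it is imported verbatim from \cite{BCCRR11}---and your argument is correct and follows essentially the same route as that reference: an Uhlmann lower bound $\bar{F}(\Pi\rho\Pi,\rho)\geq\tr[\Pi\rho]+\sqrt{(1-\tr[\Pi^{2}\rho])(1-\tr\rho)}$ (your completion-plus-flag purifications are simply an explicit implementation of the generalized fidelity as the fidelity of the completed states), followed by $\tr[\Pi\rho]\geq\tr[\Pi^{2}\rho]$ from $\Pi^{2}\leq\Pi$ and the scalar estimate. I verified the algebra: with $t=\tr\rho$ and $p=\tr[\Pi^{2}\rho]$, your reduction to $4(1-p)t^{2}\geq(1-t)(p+t)^{2}$ and the factorization $(t-p)\bigl[(1-t)p+t(t+3)\bigr]\geq0$ are both correct, the squaring step is legitimate since both sides are nonnegative, and the degenerate cases ($p=0$, $t=1$, and $t=0$, where the lemma implicitly assumes $\tr\rho>0$) are handled as you indicate.
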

When $\Pi$ is a projector, that is $\Pi^{2}=\Pi$, then a straightforward computation yields
\begin{equation}
\label{schlumf2}
P(\Pi\rho\Pi, \rho)\leq\sqrt{2\tr[\Pi^{\bot}\rho]-\left(\tr[\Pi^{\bot}\rho]\right)^{2}}
\end{equation}
where $\Pi^{\bot}=\id-\Pi$ is the orthogonal complement of $\Pi$.
\begin{lemma}[\hspace{-0.002cm}\cite{TCR10}, Lemma 8]
Let $\rho, \sigma\in\Su_{\leq}(\Hil)$, $\Hil^{\prime}\cong\Hil$ and $\bar{\rho}\in\Su_{\leq}(\Hil\otimes\Hil^{\prime})$ be a purification of $\rho$. Then, there exists a purification $\bar{\sigma}\in\Su_{\leq}(\Hil\otimes\Hil^{\prime})$ of $\sigma$ such that $P(\bar{\rho}, \bar{\sigma})=P(\rho, \sigma)$. 
\end{lemma}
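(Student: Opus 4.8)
The plan is to reduce the claim to Uhlmann's theorem, exploiting the fact that the only place where the generalized fidelity $\bar{F}$ differs from the ordinary fidelity $F$ is a term fixed by the traces, which purification leaves invariant.

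First I would write the two purifications as (sub-normalized) pure states $\bar{\rho}=\ketbra{\bar\rho}$ and $\bar{\sigma}=\ketbra{\bar\sigma}$, with $\ket{\bar\rho},\ket{\bar\sigma}\in\Hil\otimes\Hil^{\prime}$. Since the partial trace preserves the total trace, any purification obeys $\tr\bar{\rho}=\tr\rho$ and $\tr\bar{\sigma}=\tr\sigma$. For rank-one operators the fidelity is simply the overlap, $F(\bar\rho,\bar\sigma)=|\braket{\bar\rho}{\bar\sigma}|$, so the generalized fidelity becomes
\[
\bar{F}(\bar\rho,\bar\sigma)=|\braket{\bar\rho}{\bar\sigma}|+\sqrt{(1-\tr\rho)(1-\tr\sigma)},
\]
whereas by definition $\bar{F}(\rho,\sigma)=F(\rho,\sigma)+\sqrt{(1-\tr\rho)(1-\tr\sigma)}$. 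The square-root (``failure'') terms are identical.

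Consequently $P(\bar\rho,\bar\sigma)=\sqrt{1-\bar{F}(\bar\rho,\bar\sigma)^{2}}$ equals $P(\rho,\sigma)$ precisely when $|\braket{\bar\rho}{\bar\sigma}|=F(\rho,\sigma)$. At this point I would invoke Uhlmann's theorem~\cite{Uhl76}: with the purification $\ket{\bar\rho}$ of $\rho$ held fixed, the fidelity is recovered as the maximal overlap $F(\rho,\sigma)=\max|\braket{\bar\rho}{\bar\sigma}|$ over all purifications $\ket{\bar\sigma}$ of $\sigma$ living in $\Hil\otimes\Hil^{\prime}$. This is legitimate because $\dim\Hil^{\prime}=\dim\Hil\geq\rank\sigma$, so the purifying system is large enough to carry every purification of $\sigma$; the maximizer exists by compactness of the unitary group on $\Hil^{\prime}$ that parametrizes these purifications. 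Taking $\bar{\sigma}=\ketbra{\bar\sigma}$ for such a maximizing vector then yields a purification of $\sigma$ in $\Su_{\leq}(\Hil\otimes\Hil^{\prime})$ with $|\braket{\bar\rho}{\bar\sigma}|=F(\rho,\sigma)$, i.e.\ with $P(\bar\rho,\bar\sigma)=P(\rho,\sigma)$, as required.

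The only substantive step is the observation in the second paragraph that the failure contribution $\sqrt{(1-\tr\rho)(1-\tr\sigma)}$ to $\bar{F}$ is determined solely by the traces and hence survives unchanged under purification; everything else is a direct application of Uhlmann's theorem. An alternative, slightly heavier route would first embed $\rho$ and $\sigma$ into the normalized states $\rho\oplus(1-\tr\rho)$ and $\sigma\oplus(1-\tr\sigma)$ on $\Hil\oplus\mathbb{C}$, for which $\bar{F}$ coincides with the ordinary fidelity, and then apply the standard (normalized) Uhlmann theorem; the direct argument above avoids having to transport purifications between the enlarged and the original spaces, so I expect no genuine obstacle beyond this bookkeeping.
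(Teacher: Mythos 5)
The paper offers no proof of this lemma at all---it is imported verbatim from \cite{TCR10} (their Lemma~8)---so there is no in-paper argument to compare against; judged on its own, your proof is correct and is essentially the standard Uhlmann-based proof of that result. The two load-bearing observations are both sound: purifications preserve the trace, so the correction term $\sqrt{(1-\tr\rho)(1-\tr\sigma)}$ in $\bar{F}$ carries over unchanged, and Uhlmann's theorem---which holds for sub-normalized states since $F(\rho,\sigma)=\|\sqrt{\rho}\sqrt{\sigma}\|_{1}$ is defined for arbitrary positive semi-definite operators, and is applicable here because $\dim\Hil^{\prime}=\dim\Hil\geq\rank\sigma$ so every purification of $\sigma$ fits in $\Hil\otimes\Hil^{\prime}$ and is parametrized by a compact unitary family---supplies a purification $\bar{\sigma}$ with $|\braket{\bar\rho}{\bar\sigma}|=F(\rho,\sigma)$, whence $\bar{F}(\bar\rho,\bar\sigma)=\bar{F}(\rho,\sigma)$ and $P(\bar\rho,\bar\sigma)=P(\rho,\sigma)$.
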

From that lemma one infers the following corollary:
\begin{cor}
\label{cor}
Let $\rho, \sigma\in\Su_{\leq}(\Hil)$, $\Hil^{\prime}\cong\Hil$ and $\bar{\rho}\in\Su_{\leq}(\Hil\otimes\Hil^{\prime})$ be an extension of $\rho$. Then, there exists an extension $\bar{\sigma}\in\Su_{\leq}(\Hil\otimes\Hil^{\prime})$ of $\sigma$ such that $P(\bar{\rho}, \bar{\sigma})=P(\rho, \sigma)$. 
\end{cor}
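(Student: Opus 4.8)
The plan is to deduce this corollary from the preceding lemma, which already settles the claim for \emph{purifications}, by purifying the given extension, transporting the purification, and then discarding the auxiliary register at the very end. Since $\bar{\rho}\in\Su_{\leq}(\Hil\otimes\Hil')$ is an extension of $\rho$ (so $\tr_{\Hil'}\bar{\rho}=\rho$) but is in general mixed, I would first enlarge the space: introduce an auxiliary Hilbert space $\Hil''$ and pick a purification $\phi\in\Su_{\leq}(\Hil\otimes\Hil'\otimes\Hil'')$ of $\bar{\rho}$. Viewing $\Hil'\otimes\Hil''$ as a single purifying register, $\phi$ is then also a purification of $\rho$ itself.

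Next I would apply the preceding lemma (\cite{TCR10}, Lemma~8) to the pair $\rho,\sigma$ together with the purification $\phi$ of $\rho$. This produces a purification $\psi\in\Su_{\leq}(\Hil\otimes\Hil'\otimes\Hil'')$ of $\sigma$ satisfying $P(\phi,\psi)=P(\rho,\sigma)$. I then define the candidate $\bar{\sigma}:=\tr_{\Hil''}\psi\in\Su_{\leq}(\Hil\otimes\Hil')$. Because $\psi$ purifies $\sigma$ we have $\tr_{\Hil'\otimes\Hil''}\psi=\sigma$, hence $\tr_{\Hil'}\bar{\sigma}=\sigma$, so $\bar{\sigma}$ is indeed an extension of $\sigma$ living on the correct space.

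Finally, I would pin down the equality of purified distances by applying monotonicity under the partial trace (Lemma~\ref{cptpmdist}) in both directions. Tracing out $\Hil''$ gives $P(\bar{\rho},\bar{\sigma})=P(\tr_{\Hil''}\phi,\tr_{\Hil''}\psi)\leq P(\phi,\psi)=P(\rho,\sigma)$, while tracing out $\Hil'$ gives $P(\rho,\sigma)=P(\tr_{\Hil'}\bar{\rho},\tr_{\Hil'}\bar{\sigma})\leq P(\bar{\rho},\bar{\sigma})$. The two inequalities together force $P(\bar{\rho},\bar{\sigma})=P(\rho,\sigma)$, which is exactly the assertion.

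The only point requiring genuine care — and hence the main obstacle — is the dimension of the purifying register when invoking the preceding lemma: that lemma is phrased for a purifying system isomorphic to $\Hil$, whereas here the register $\Hil'\otimes\Hil''$ is strictly larger. I would resolve this by observing that the underlying result (Uhlmann's theorem in its purified-distance form) holds for \emph{any} purifying register whose dimension is at least $\max\{\rank\rho,\rank\sigma\}$, a bound that $\Hil'\otimes\Hil''$ trivially meets since $\dim\Hil'=\dim\Hil$; the isomorphism hypothesis in the lemma is merely the canonical minimal choice. With that remark in place, the remaining steps are routine bookkeeping.
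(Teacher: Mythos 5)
Your proof is correct and is essentially the derivation the paper intends (the paper gives no explicit proof, saying only that the corollary is ``inferred'' from the preceding lemma): purify the extension $\bar{\rho}$ to $\phi$, apply the purification lemma to get $\psi$, set $\bar{\sigma}=\tr_{\Hil''}\psi$, and pin down $P(\bar{\rho},\bar{\sigma})=P(\rho,\sigma)$ by monotonicity of the purified distance under partial trace in both directions. Your closing remark about the enlarged purifying register $\Hil'\otimes\Hil''$ is indeed the one point requiring care, and your resolution is valid (equivalently, one can map a minimal purification into $\Hil'\otimes\Hil''$ by an isometry, under which $P$ is invariant).
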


\ifCLASSOPTIONcaptionsoff
  \newpage
\fi

\newpage






\bibliographystyle{arxiv}
\bibliography{bib,shannon}

\begin{thebibliography}{10}

\bibitem{Aud}
K.~Audenaert, J.~Calsamiglia, R.~{ Mu\~{n}oz-Tapia}, E.~Bagen, L.~Masanes,
  A.~Acin, and F.~Verstraete.
\newblock Discriminating states: The quantum {C}hernoff bound.
\newblock {\em Physical Letters Review}, 98:160501--4, 2007.

\bibitem{Bar}
A.~Barvinok.
\newblock {\em A Course in Convexity}, volume~54 of {\em Graduate Studies in
  Mathematics}.
\newblock American Mathematical Sociaty, 2002.

\bibitem{berta08}
M.~Berta.
\newblock Single-shot quantum state merging.
\newblock Master's thesis, ETH Zurich, 2008.
\newblock \texttt{\href{http://arxiv.org/abs/0912.4495}{arXiv:\,0912.4495}}.

\bibitem{BCCRR11}
M.~Berta, M.~Christandl, R.~Colbeck, J.~Rennes, and R.~Renner.
\newblock The uncertainty principle in the presence of quantum memory.
\newblock {\em Nature Physics}, 1734, 2010.

\bibitem{Dat09}
N.~Datta.
\newblock Min- and max-relative entropies and a new entanglement monotone.
\newblock {\em IEEE Transactions on Information Theory}, 55(6):2816, 2009.

\bibitem{RARDV11}
L.~del Rio, J.~Aberg, R.~Renner, O.~C.~O. Dahlsten, and V.~Vedral.
\newblock The thermodynamic meaning of negative entropy.
\newblock {\em Nature}, 474(7349):61--63, 2011.

\bibitem{dupuis09}
F.~Dupuis.
\newblock {\em The Decoupling Approach to Quantum Information Theory}.
\newblock PhD thesis, Universit\'{e} de Montr\'{e}al, Apr. 2009.
\newblock \texttt{\href{http://arxiv.org/abs/1004.1641}{arXiv:\,1004.1641}}.

\bibitem{dupuis10}
F.~Dupuis, M.~Berta, J.~Wullschleger, and R.~Renner.
\newblock The decoupling theorem.
\newblock Dec. 2010.
\newblock \texttt{\href{http://arxiv.org/abs/1012.6044}{arXiv:\,1012.6044}}.

\bibitem{KRS09}
R.~K\"{o}nig, R.~Renner, and C.~Schaffner.
\newblock The operational meaning of min- and max-entropy.
\newblock {\em IEEE Transactions on Information Theory}, 55(9):4674--4681,
  2009.

\bibitem{nielsen00}
M.~A. Nielsen and I.~Chuang.
\newblock {\em {Quantum Computation and Quantum Information}}.
\newblock Cambridge University Press, 2000.

\bibitem{Ren05}
R.~Renner.
\newblock {\em Security of Quantum Key Distribution}.
\newblock PhD thesis, ETH Z\"{u}rich, 2005.
\newblock Available online: \url{http://arxiv.org/abs/quant-ph/0512258v2}.

\bibitem{RK05}
R.~Renner and R.~K\"{o}nig.
\newblock Universally composable privacy amplification against quantum
  adversaries.
\newblock {\em Springer Lecture Notes in Computer Science}, 3378(9):407--425,
  2005.

\bibitem{RW04}
R.~Renner and S.~Wolf.
\newblock Smooth {R}\'enyi entropy and applications.
\newblock {\em Proc. IEEE Int. Symp. Info. Theory}, page 233, 2004.

\bibitem{shannon48}
C.~Shannon.
\newblock A mathematical theory of communication.
\newblock {\em Bell Syst. Tech. J.}, 27:379--423, 1948.

\bibitem{M12}
M.~Tomamichel.
\newblock {\em A Framework for Non-Asymptotic Quantum Information Theory}.
\newblock PhD thesis, ETH Z\"{u}rich, 2012.
\newblock Available online: \url{http://arxiv.org/abs/arXiv:1203.2142}.

\bibitem{TCR09}
M.~Tomamichel, R.~Colbeck, and R.~Renner.
\newblock A fully quantum asymptotic equipartition property.
\newblock {\em IEEE Transactions on Information Theory}, 55:5840--5847, 2009.

\bibitem{TCR10}
M.~Tomamichel, R.~Colbeck, and R.~Renner.
\newblock Duality between smooth min- and max-entropies.
\newblock {\em IEEE Transactions on Information Theory}, 56:4674--4681, 2010.

\bibitem{TSSR10}
M.~Tomamichel, R.~Renner, C.~Schaffner, and A.~Smith.
\newblock Leftover hashing against quantum side information.
\newblock {\em Proc. IEEE Int. Symp. Info. Theory}, pages 2703--2707, 2010.

\bibitem{Uhl76}
A.~Uhlmann.
\newblock The transition probability in the state space of a \mbox{*-a}lgebra.
\newblock {\em Rep. Math. Phys.}, 9(273), 1976.

\bibitem{JW}
J.~Watrous.
\newblock Theory of quantum information, Fall 2011.
\newblock Available online: \url{http://www.cs.uwaterloo.ca/~watrous/CS766/}.
\newblock Lecture notes.

\bibitem{WTHR11}
S.~Winkler, M.~Tomamichel, S.~Hengl, and R.~Renner.
\newblock Impossibility of growing quantum bit commitments.
\newblock {\em Phys. Rev. Lett.}, 107:090502, 2011.

\end{thebibliography}

\end{document}